\newcommand{\beq}{\vspace{0mm}\begin{equation}}
\newcommand{\eeq}{\vspace{0mm}\end{equation}}
\newcommand{\beqs}{\vspace{0mm}\begin{eqnarray}}
\newcommand{\eeqs}{\vspace{0mm}\end{eqnarray}}
\newcommand{\barr}{\begin{array}}
\newcommand{\earr}{\end{array}}
\newcommand{\Nmat}[0]{{{\bf N}}}
\newcommand{\bv}[0]{{\boldsymbol{b}}}
\newcommand{\lv}[0]{{\boldsymbol{\ell}}}
\newcommand{\mv}[0]{{\boldsymbol{m}}}
\newcommand{\nv}[0]{{\boldsymbol{n}}}
\newcommand{\pv}[0]{{\boldsymbol{p}}}
\newcommand{\rv}{\boldsymbol{r}}
\newcommand{\tv}[0]{{\boldsymbol{t}}}
\newcommand{\xv}{\boldsymbol{x}}
\newcommand{\yv}{\boldsymbol{y}}
\newcommand{\zv}{\boldsymbol{z}}
\newcommand{\cdotv}{\boldsymbol{\cdot}}
\newcommand{\Thetamat}{\boldsymbol{\Theta}}
\newcommand{\Phimat}{\boldsymbol{\Phi}}
\newcommand{\ellv}[0]{{\boldsymbol{\ell}}}
\newcommand{\thetav}{\boldsymbol{\theta}}
\newcommand{\phiv}{\boldsymbol{\phi}}
\newcommand{\E}{\mathbb{E}}
\newtheorem{thm}{Theorem} %[section]
\newtheorem{prop}[thm]{Proposition}
\newcolumntype{L}[1]{>{\raggedright\let\newline\\\arraybackslash\hspace{0pt}}m{#1}}
\newcolumntype{C}[1]{>{\centering\let\newline\\\arraybackslash\hspace{0pt}}m{#1}}
\newcolumntype{R}[1]{>{\raggedleft\let\newline\\\arraybackslash\hspace{0pt}}m{#1}}
\begin{document}

\inserttype[ba0001]{article}
\renewcommand{\thefootnote}{\fnsymbol{footnote}}
\author{M. Zhou }{
 \fnms{Mingyuan}
 \snm{Zhou}\footnotemark[1]\ead{mingyuan.zhou@mccombs.utexas.edu}

}

\title[Negative Binomial Factor Analysis ]{Nonparametric Bayesian 
Negative Binomial Factor Analysis }

\maketitle

\footnotetext[1]{
McCombs School of Business, 
The University of Texas at Austin, Austin, TX 78712, USA,
 \href{mailto:mingyuan.zhou@mccombs.utexas.edu}{mingyuan.zhou@mccombs.utexas.edu}
}
\renewcommand{\thefootnote}{\arabic{footnote}}

\begin{abstract}
A common approach to analyze a covariate-sample count matrix, an element of which represents how many times a covariate appears in a sample, is to factorize it under the Poisson likelihood. We show its limitation in capturing the tendency for a covariate present in a sample to both repeat itself and excite related ones. To address this limitation, we construct negative binomial factor analysis (NBFA) to factorize the matrix under the negative binomial likelihood, and relate it to a Dirichlet-multinomial distribution based mixed-membership model. To support countably infinite factors, we propose the hierarchical gamma-negative binomial process. By exploiting newly proved connections between discrete distributions, we construct two blocked and a collapsed Gibbs sampler that all adaptively truncate their number of factors, and demonstrate that the blocked Gibbs sampler developed under a compound Poisson representation converges fast and has low computational complexity. Example results show that NBFA has a distinct mechanism in adjusting its number of inferred factors according to the sample lengths, and provides clear advantages in parsimonious representation, predictive power, and computational complexity over previously proposed discrete latent variable models, which either completely ignore burstiness, or model only the burstiness of the covariates but not that of the factors. 

\keywords{\kwd{Burstiness}, \kwd{count matrix factorization}, \kwd{hierarchical gamma-negative binomial process}, \kwd{parsimonious representation}, \kwd{self- and cross-excitation }}
\end{abstract}

\section{Introduction}\label{sec:intro}

The need to analyze a covariate-sample count matrix, 
each of whose elements counts the number of time that a covariate appears in a sample, 
% $\Nmat\in\mathbb{Z}^{V\times J}$, whose $(v,j)$th element $n_{vj}$ counts the number of times that the $v$th covariate appears in the $j$th sample, where $\mathbb{Z}=\{0,1,2,\ldots\}$, 
 arises in many different settings, such as text analysis, next-generation sequencing, medical records mining, and consumer behavior studies. The mixed-membership model,
 %A popular approach, % to analyze such an count matrix,
% independently developed
 widely used for text analysis \citep{%Hofmann99probabilisticlatent,
 LDA} and population genetics \citep{pritchard2000inference}, treats each sample as a bag of indices (words), and associates each index with both a covariate  that is observed %\\
and a subpopulation that is latent. % that are drawn from a mixed-membership model, % that arise from a mixture of $K$ subpopulations. 
It makes the assumption that there are $K$ latent subpopulations, each of which is characterized by %a distribution of 
how frequent the %$V$ 
covariates are relative to each other within it. Given the total number of indices for a sample, %given the relative frequencies of the subpopulations in that sample, 
%the mixed-membership model 
it assigns each index %of that sample
 independently to one of the $K$ subpopulations, with a probability proportional to the product of the corresponding covariate's relative frequency in that subpopulation and that subpopulation's relative frequency in the sample. % where the token belongs to. 
A mixed-membership model constructed in this manner, as shown in \citet{BNBP_PFA_AISTATS2012} and \citet{NBP2012}, can also be connected to Poisson factor analysis (PFA) that factorizes the covariate-sample count matrix, under the Poisson likelihood, into the product 
of a nonnegative covariate-subpopulation factor loading matrix and a nonnegative subpopulation-sample factor score matrix. Each column of the factor loading matrix encodes the relative frequencies of the covariates in a subpopulation, while that of the factor score matrix encodes the weights of the subpopulations in a sample.

% A distinction between PFA and a mixed-membership model is that PFA treats the total number of covariates in a sample as a Poisson distributed random count rather than a fixed interger. % an observed variable. 

Despite the popularity of both approaches in analyzing the covariate-sample count matrix, %it appears that
 they both make restrictive assumptions. %appear to be restrictive due to how an individual token or a covariate-sample count is modeled. 
Given the relative frequencies of the covariates in subpopulations and the  relative frequencies of the subpopulations in samples, the mixed-membership model independently assigns each index to both a covariate and a subpopulation, %generates both the covariate and subpopulation indices of an index independently from these of the others, 
and hence %does not model the tendency for a covariate/subpopulation present in  a sample to appear more than once, 
%is likely to in
may not sufficiently capture the tendency for an index to excite the other ones in the same sample to take the same or related covariates. %both itself and related ones. %self- and cross-excitation of covariate frequencies. 
Whereas for PFA, given the factor loading and score matrices, it %the Poisson distribution has 
%makes a restrictive equal-dispersion assumption that 
assumes that
the variance and mean are the same for each covariate-sample count, and hence is likely to underestimate the variability of overdispersed counts. 

 In practice, however, highly overdispersed covariate-sample counts are frequently observed due to self- and cross-excitation of covariate frequencies, that is to say, some covariates are particularly intense and also make other related covariates intense. For example,
the tendency for a word present in a document to appear repeatedly is a fundamental phenomenon in natural language that is commonly referred to as word burstiness \citep{church1995poisson,madsen2005modeling,doyle2009accounting}. If a word is bursty in a document, it is also common for it to excite (stimulate) related words to exhibit burstiness. %, making PFA's equal-dispersion assumption on the covariate-sample counts appear restrictive. 
Without capturing the self- and cross-excitation (stimulation) of covariate frequencies or better modeling the overdispersed covariate-sample counts,
the ultimate potential of the mixed-membership model and PFA will be limited no matter how the priors on latent parameters are adjusted. In addition, it could be a waste of computation if the model tries to increase the model capacity to better capture overdispersions that could be simply explained with self- and cross-excitations. 

%$\phiv_k$ and $\thetav_j$ are adjusted. In addition, the sampling of the factor indices for all the $n_{\cdotv\cdotv}=\sum_{v}\sum_j n_{vj}$ covariate {indices} in each iteration, often the most time consuming part, takes $O( n_{\cdotv\cdotv} K)$ computation. 

%
%The benefits of modeling word burstiness for topic modeling have been shown in \cite{doyle2009accounting}. In this paper, we demonstrate that it is important to model not only the burstiness of the words (and hence the self-excitation of term frequencies), but also that of the factors (and hence the cross-excitation of term frequencies). %, especially if one wants to make connection to 

% , which is of the inference. %, where $K$ is the number of factors and is the total number of covariate tokens. 

%To model both covariate and factor burstiness %both the self- and cross-exitation of the tendency for a term/topic present in a document to repeat itself 
%and reduce computational complexity for the covariates associated with large counts, %relax the fundamental limitations of the PFA framework and reduce its computation complexity, 
 To remove these restrictions, we introduce negative binomial factor analysis (NBFA) to factorize the covariate-sample count matrix, in which we replace the Poisson distributions on which PFA is built, with the negative binomial (NB) distributions. As PFA is closely related to the canonical mixed-membership model built on the multinomial distribution, we show that 
NBFA is closely related to a Dirichlet-multinomial 
 mixed-membership (DMMM) model that uses the Dirichlet-categorical (Dirichlet-multinomial) rather than categorical (multinomial) distributions to assign an index to both a covariate and  a factor (subpopulation). % indices of the words. 
%We will first present the NBFA hierarchical model to factorize the term-document count matrix, and then show the model could be reexpressed as a Dirichlet-multinomial distribution based topic model. % that models each document as a bag of exchangeable words. 
From the viewpoint of count modeling, NBFA improves PFA by better modeling overdispersed counts, while from that of mixed-membership modeling, %the DMMM model 
it improves the %conventional multinomial distribution based 
canonical mixed-membership model by capturing the burstiness at both the covariate and factor levels ($i.e.$, for topic modeling, it exhibits a rich-get-richer phenomenon at both the word and topic levels). In addition, we will show NBFA 
could significantly reduce the computation spent on large covariate-sample counts. 
%reduces the $O(n_{\cdotv\cdotv} K)$ computation of PFA in processing all the $n_{\cdotv\cdotv}$ covariate {indices} to 
%$O\big[\sum_{v}\sum_j\ln(n_{vj}+1)K\big]$.
%$O(\sum_{v}\sum_j[1+\ln(n_{vj})]\delta(n_{vj}\ge 1)K)$. %, where $\sum_{n_{vj}\ge 1}$ represents summing over all $(v,j)$ pairs with $n_{vj}\ge1$.

%\subsection{Related algorithms}

Note that with a different likelihood for counts and a different mechanism to generate both the covariate and factor indices, NBFA and the related DMMM model proposed in the paper complement, rather than compete with, PFA \citep{BNBP_PFA_AISTATS2012,NBP2012}. % and the multinomial topic model. %It is of interest to investigate the extension of the NBFA and Dirichlet multinomial topic model in the way 
First, NBFA provides more significant advantages in modeling longer samples, % with more covariate tokens, 
where there is more need to capture both self- and cross-excitation of covariate frequencies. %, and more significant reduction in computation. % from $O(n_{\cdotv\cdotv}K)$ to $O\big[\sum_{v}\sum_j\ln(n_{vj}+1)K\big]$. 
% include larger word counts for which the logarithmic reductions of computation are more significant. 
%there are more potentials to reduce computation %the reduction of computation becomes more significant 
%with larger term-frequency word counts. 
Second, various extensions built on PFA or the multinomial mixed-membership model, such as %imposing different priors on the subpopulation proportions \citep{wallach2009rethinking} or unnormalized factor scores \citep{NBP2012}, 
capturing the correlations between factors \citep{CTM,DILN_BA} %, %,CorrRandomMeasures}, 
%modeling the temporal evolutions of subpopulation proportions \citep{dtm} or unnormalized factor scores \citep{GP_DFA_AISTATS2015}, %arranging factors under a tree structure \citep{nCRP,treeSBP,nHDP}, 
and learning multilayer deep representations \citep{ranganath2014deep,Gan2015DeepPFA,GBN}, could also be applied to %combined with 
extend NBFA. % or the Dirichlet-multinomial mixed-membership model. 
In this paper, we will focus on constructing a nonparametric Bayesian NBFA %(Dirichlet multinomial topic model) 
with a potentially infinite number of factors, and leave a wide variety of potential extensions under this new %count matrix factorization (topic modeling) 
framework to future research.

%\subsection{Nonparametric Bayesian modeling}

To avoid the need of selecting the number of factors % (subpopulations) 
$K$, 
for PFA and the closely related multinomial mixed-membership model, a number of nonparametric Bayesian priors can be employed to support a potentially infinite number of latent factors, %including %the gamma-Poisson process \citep{lo1982bayesian,InfGaP,NBP2012},
such as
 the hierarchical Dirichlet process \citep{HDP} and %Indian buffet process \citep{IBP,FocusTopic}, and 
 beta-negative binomial process \citep{BNBP_PFA_AISTATS2012,NBPJordan,NBP2012}. % %,%NBIBP,NBP_CountMatrix}, 
% and gamma-negative binomial process \citep{NBP2012,NBP_CountMatrix}. 
To support countably infinite factors for NBFA, generalizing the gamma-negative binomial process (GNBP) \citep{NBP2012,NBP_CountMatrix}, % and the closely related Dirichlet-multinomial mixed-membership model, 
we introduce a new nonparametric Bayesian prior: the hierarchical gamma-negative binomial process (hGNBP), where each of the $J$ samples is assigned with a sample-specific GNBP and a globally shared gamma process is mixed with all the $J$ GNBPs. We %perform posterior analysis and 
derive both blocked and collapsed Gibbs sampling for the hGNBP-NBFA, % for the hGNBP Dirichlet-multinomial mixed-membership model and blocked Gibbs sampling for the hGNBP NBFA, 
with the number of factors automatically inferred. 
%\subsection{%Gamma-negative binomial process and 
%Hierarchical gamma-negative binomial process}
%\subsection{Nonparametric Bayesian priors for negative binomial factor analysis and Dirichlet-multinomial mixed-membership modeling}

% 
% introduce NB factor analysis (NBFA) as well as its alternative representation as a Dirichlet-multinomial mixed-membership model. From the viewpoint of count modeling, NBFA improves over PFA by better modeling overdispersed counts, while from the viewpoint of topic modeling, the Dirichlet-multinomial mixed-membership model improves over LDA by capturing the burstiness of both words and factors. 
 
 The remainder of the paper is organized as follows. % we reveal the relationships between the NB distribution, the Dirichlet-multinomial distribution, and the exchangeable partition probability function (EPPF) of the Chinese restaurant process. 
 In Section \ref{sec:review}, we review %some useful discrete distributions and 
 PFA and the multinomial mixed-membership model. %establish notation. 
In Section \ref{sec:NBFA}, %we first review some useful discrete distributions and their relationships. 
we introduce NBFA and its representation as a DMMM model, %mixed-membership model based on the Dirichlet-multinomial distribution, 
and compare them with related models. In Section \ref{sec:hGNBP}, we propose nonparametric-Bayesian NBFA. In Section \ref{sec:inference}, we %using the hGNBP, %hierarchical gamma-NB process, 
derive both
blocked and collapsed Gibbs sampling algorithms. %inference under both the Dirichlet-multinomial mixed-membership model representation and the compound Poisson representation, and present 
%collapsed Gibbs sampling; % for the Dirichlet-multinomial mixed-membership model; 
%We also present nonparametric Bayesian extensions of both PFA and DCMLDA using the GNBP. %, % gamma-NB process, 
%and their inference. % via both blocked and collapsed Gibbs sampling. %In Section 4, we introduce an alternative augmented representation whose computational complexity could be much lower. 
In Section \ref{sec:results}, we first make comparisons between
 different sampling strategies and then compare the performance of various algorithms on real data. % and demonstrate that the compound Poisson representation based blocked Gibbs sampler is preferred for both the hGNBP-NBFA and GNBP-NBFA. %, and the collapsed Gibbs ampler is preferred for the GNBP-PFA. 
%We then compare the hGNBP-NBFA, GNBP-DCMLDA, and GNBP-PFA in terms of predicting heldout covariates 
%under various lengths of the training samples, and compare the hGNBP-NBFA and GNBP-PFA in terms of extracting low-dimensional feature vectors in an unsupervised manner for classifications. 
%prediction and %both heldout perplexity and 
%unsupervised covariate learning for classification. 
We %present example results in Section 4 and
 conclude the paper in Section 7. The proofs and Gibbs sampling update equations are provided in the Supplementary Material.

\section{Poisson factor analysis and %multinomial 
mixed-membership model}\label{sec:review}

\subsection{Poisson factor analysis} \label{sec:2.2}

%It is common to %model the counts using the Poisson distribution and 
%Poisson factor analysis (PFA) that 
Let $\Nmat$ be a $V\times J$ covariate-sample count matrix for $V$ covariates among $J$ samples, where $n_{vj}$ is the $(v,j)$ element of $\Nmat$ and gives the number of times that sample $j$ has covariate $v$.
PFA factorizes $\Nmat$ under the Poisson likelihood as
\beq
\Nmat\sim\mbox{Pois}(\Phimat\Thetamat),\label{eq:PFA0}
\eeq 
where ``Pois'' is the abbreviation for ``Poisson,'' $\Phimat=(\phiv_1,\ldots,\phiv_K)\in\mathbb{R}_+^{V\times K}$ represents the factor loading matrix, $\Thetamat=(\thetav_1,\ldots,\thetav_J)\in\mathbb{R}_+^{K\times J}$ represents the factor score matrix, and $\mathbb{R}_+=\{x:x\ge 0\}$, with $\phiv_k=(\phi_{1k},\ldots,\phi_{Vk})^T$ encoding the
 weights of the $V$ covariates in factor~$k$ and $\thetav_j=(\theta_{1j},\ldots,\theta_{Kj})^T$ encoding the popularity of the $K$ factors in sample $j$. PFA %The model
 in \eqref{eq:PFA0} %, referred to as Poisson factor analysis (PFA), %in \cite{BNBP_PFA},
has an augmented representation as
\beq
n_{vj}=\sum_{k=1}^K n_{vjk},~n_{vjk}\sim\mbox{Pois}(\phi_{vk}\theta_{kj}).\label{eq:PFA1}
\eeq
%For both (\ref{eq:PFA1}) and (\ref{eq:PFA2}), the joint likelihood can be expressed as
%\beq
%P(\{n_{vj},n_{vj1},\ldots,n_{vjK}\}_{v}\,|\,\Phimat,\thetav_j) =\frac{n_j!} {\prod_{v=1}^V\prod_{k=1}^Kn_{vjk}!} P(\xv_{j},\zv_{j}, n_j\,|\,\Phimat,\thetav_j).
%\eeq
%which could be considered as an augmented representation of 
%\beq
%\nv_j\sim\mbox{Pois}(\Phimat\thetav_j),
%\eeq 
%where $\nv_j=(n_{1j},\ldots,n_{Vj})^T$.
As in \cite{BNBP_PFA_AISTATS2012}, it can also be equivalently constructed by % also has an alternative representation equivalent to (\ref{eq:PFA1}), %as discussed in \citet{BNBP_PFA_AISTATS2012},
%which 
first generating $n_{vj}$ and then assigning them to the latent factors using the multinomial distributions~as 
\beq\small
 (n_{vj1},\ldots,n_{vjK})\,|\,n_{vj}\! \sim\!\mbox{Mult}\!\left(\!n_{vj},\!\frac{\phi_{v1}\theta_{1j}}{\sum_{k=1}^K\!\phi_{vk}\theta_{kj}},\ldots,\frac{\phi_{vK}\theta_{Kj}}{\sum_{k=1}^K\!\phi_{vk}\theta_{kj}}\!\right),~n_{vj}\!\sim\!\mbox{Pois}\!\left( \sum_{k=1}^K\!\phi_{vk}\theta_{kj}\!\right).\label{eq:PFA2}
 \eeq 
% \beq
% (n_{vj1},\ldots,n_{vjK})\,|\,n_{vj} \sim\mbox{Mult}\left(n_{vj},\left(\frac{\phi_{vk}\theta_{kj}}{\sum_{k'=1}^K\phi_{vk'}\theta_{k'j}}\right)_{1:K}\right),~~n_{vj}\sim\mbox{Pois}\left( \sum_{k=1}^K\phi_{vk}\theta_{kj}\right).\label{eq:PFA2}
% \eeq 
%where $\phiv_{v:}=(\phi_{v1},\ldots,\phi_{vK})$ and $\phiv_{v:}\thetav_j=\sum_{k=1}^K\phi_{vk}\theta_{kj}$.
%This restriction of PFA is clear in the second representation, as the multinomial distribution is known to be not good to handle word burstiness. 

\subsection{Multinomial mixed-membership model}

This alternative representation suggests a potential link of PFA to a standard %multinomial distribution based
mixed-membership model for text analysis such as probabilistic latent semantic analysis (PLSA) \citep{Hofmann99probabilisticlatent} and latent Dirichlet allocation (LDA) \citep{LDA}. % in which 
%In topic models such as PLSA and LDA, 
Given the factors $\phiv_k$ and factor proportions $\thetav_j/\theta_{\cdotv j}$, where $\sum_{v=1}^V \phi_{vk}=1$ and $\theta_{\cdotv j} :=\sum_{k=1}^K \theta_{kj}$, 
a standard %topic modeling
 procedure is to associate $x_{ji}\in\{1,\ldots,V\}$, the $i$th %covariate (word) token
index (word)  of sample $j$, with factor (topic) $z_{ji}\in\{1,\ldots,K\}$, 
and generate 
a bag of indices $\{x_{j1},\ldots,x_{jn_j}\}$ as
 %from the categorical distributions as
\beq
x_{ji}\sim\mbox{Cat}(\phiv_{z_{ji}}),~z_{ji}\sim\mbox{Cat}({\thetav_j}/{\theta_{\cdotv j}}), %~n_{j}\sim\mbox{Pois}(\theta_j),
\label{eq:LDA-Po0}
\eeq
 where $n_j=\sum_{v=1}^V n_{vj}$ and $n_{vj}=\sum_{i=1}^{n_j}\delta(x_{ji}=v)$.
We refer to %a mixed-membership model based on
 (\ref{eq:LDA-Po0}) as the multinomial mixed-membership model. LDA completes the multinomial mixed-membership model by imposing the Dirichlet priors on both $\{\phiv_k\}_k$ and $\{\thetav_j/
\theta_{\cdotv j}\}_j$ \citep{LDA}.

If in additional to the multinomial mixed-membership model described in (\ref{eq:LDA-Po0}), 
%one further specifies in 
one further generates
 the sample lengths %be generated
 as
\beq
%x_{ji}\sim\mbox{Cat}(\phiv_{z_{ji}}),~z_{ji}\sim\mbox{Cat}({\thetav_j}/{\theta_j}), %~
n_{j}\sim\mbox{Pois}(\theta_{\cdotv j}),\label{eq:LDA-Po}
\eeq
then the joint likelihood of $\xv_j: = (x_{j1},\ldots,x_{jn_j})$, $\zv_j: =(z_{j1},\ldots,z_{jn_j})$, and $n_j$ given $\Phimat$ and $\thetav_j$ can be expressed as
$$%\begin{align}
P(\xv_{j},\zv_{j}, n_j\,|\,\Phimat,\thetav_j)
%&= \frac{1}{n_j!} \prod_{v=1}^V\prod_{k=1}^K {(\phi_{vk}\theta_{kj})^{n_{vjk}}e^{-\phi_{vk}\theta_{kj}}} \notag\\
= \frac{\prod_{v=1}^V\prod_{k=1}^Kn_{vjk}!}{n_j!} \prod_{v=1}^V\prod_{k=1}^K\mbox{Pois}(n_{vjk};\phi_{vk}\theta_{kj}), %\label{eq:likelihood}
%\end{align} 
$$
whose product with the combinatorial coefficient ${n_j!}/( {\prod_{v=1}^V\prod_{k=1}^Kn_{vjk}!})$ becomes the same as the likelihood $P(\{n_{vj},n_{vj1},\ldots,n_{vjK}\}_{v}\,|\,\Phimat,\thetav_j)$ of (\ref{eq:PFA1}).
% is the product of 
%(\ref{eq:likelihood}) 
%and the combinatorial coefficient ${n_j!}/( {\prod_{v=1}^V\prod_{k=1}^Kn_{vjk}!})$.
%
%fully factorized and hence amenable to posterior simulation. 

%The generative model constituted of both (\ref{eq:LDA-Po0}) and (\ref{eq:LDA-Po}) is called as a ``count-mixture'' model in \cite{NBP2012} and a group-size dependent mixed-membership model in \cite{BNBP_EPPF}, which explicitly specifies in the prior how the group sizes (document lengths) are generated. % in the generative model. 
%
%From the product form of (\ref{eq:likelihood}), it is evident that the ``count-mixture'' model constituted of (\ref{eq:LDA-Po0}) and (\ref{eq:LDA-Po}) is closely related to PFA. Rather than constraining ourselves under the PFA framework, in this paper we propose negative binomial factor analysis (NBFA) to factorize the term-document word frequency count matrix under the negative binomial likelihood, %which %. Under NBFA, 
% %we show that one may 
% improving the multinomial topic model by replacing the multinomial (categorical) distributions with the Dirichlet-multinomial (Dirichlet-categorical) distributions to capture the burstiness of both the words and factors. %with the Dirichlet-multinomial mixed-membership model. %with more sophisticated priors on $\phiv_k$ and $\thetav_j$, generalizing the multinomial topic model using the Dirichlet-multinomial distribution.
% 

From the viewpoint of PFA, shown in (\ref{eq:PFA1}), and its %``count-mixture''
%multinomial 
%topic modeling 
alternative representation constituted by (\ref{eq:LDA-Po0}) and (\ref{eq:LDA-Po}), % provide a way to unite
 a wide variety of discrete latent variable models, 
 such as nonnegative matrix factorization (NMF) \citep{NMF}, PLSA,
 LDA, %probabilistic latent semantic analysis (PLSA) of \citet{Hofmann99probabilisticlatent}, latent Dirichlet allocation (LDA) of \citet{LDA}, LDA, and 
 the gamma-Poisson model of \citet{CannyGaP}, and the discrete component analysis of \citet{DCA}, %and the focused topic model of \citet{FocusTopic}, %and
%and the correlated topic model \citep{CTM}, %,DILN_BA}, %,CorrRandomMeasures}, 
all have the same mechanism to 
model the covariate counts that they
generate both the covariate and factor indices using the categorical distributions shown in (\ref{eq:LDA-Po0}); 
they mainly differ from each other on how the priors on $\phiv_k$ and $\thetav_j$ (or $\thetav_j/\theta_{\cdotv j}$) are constructed \citep{BNBP_PFA_AISTATS2012, NBP2012}. %Discussions on how the priors influence the performance of topic models under this framework can also be found in \citet{wallach2009rethinking}.

\section{Negative binomial factor analysis and Dirichlet-multinomial mixed-membership model}\label{sec:NBFA}

\subsection{Negative binomial factor analysis}

To better model overdispersed counts, instead of following PFA to factorize the covariate-sample count matrix under the Poisson likelihood, we propose negative binomial (NB) factor analysis (NBFA) to factorize it under the NB likelihood as
% factor analysis (NBFA) that factorizes the negative binomial shape parameters as
 \beq
\nv_{j}\sim\mbox{NB}(\Phimat\thetav_{j}, \,p_j), \notag %\label{eq:NBFA}
\eeq
where $n\sim\mbox{NB}(r,p)$ denote the NB distribution % \citep{Yule} 
with probability mass function (PMF)
$f_N(n\,|\,r,p)=\frac{\Gamma(n+r)}{n!\Gamma(r)} p^n(1-p)^r,$ where $n\in\{0,1,\ldots\}$.
NBFA has an augmented representation as
\beq
n_{vj}=\sum_{k=1}^K n_{vjk}, ~~n_{vjk}\sim\mbox{NB}(\phi_{vk}\theta_{kj},\, p_j).\label{eq:NBFA1}
\eeq
Similar to how the Poisson distribution is related to the multinomial distribution (\emph{e.g.}, \citet{Dunson05bayesianlatent} and Lemma 4.1 of \citet{BNBP_PFA_AISTATS2012}), we reveal how the NB distribution is related to the Dirichlet-multinomial distribution using Theorem \ref{lem:NB_DirMult} shown below, whose proof is provided in Appendix \ref{app:proof}. 
As in \citet{mosimann1962compound}, % and \cite{johnson1997discrete}, %,madsen2005modeling}, M
 marginalizing out $\thetav$ from 
$\zv\sim\prod_{i=1}^n\mbox{Cat}(z_i;\thetav),~\thetav\sim\mbox{Dir}(r_1,\ldots,r_K)$
leads to a Dirichlet-categorical (DirCat) distribution with PMF
\beq
P(\zv\,|\,r_1,\ldots,r_K) = \frac{\Gamma(r_{\cdotv})}{\Gamma(n+r_{\cdotv})}\prod_{k=1}^K \frac{\Gamma(n_k+r_k)}{\Gamma(r_k)},\label{eq:like_DirCat}
\eeq
where $n_k=\sum_{i=1}^n \delta(z_i=k)$,
and a Dirichlet-multinomial (DirMult) distribution with PMF $P[(n_1,\ldots,n_K)\,|\,r_1,\ldots,r_K] = \frac{n!}{\prod_{k=1}^K n_k!}P(\zv\,|\,r_1,\ldots,r_K).$

% in Lemma \ref{lem:NB_DirMult} (shown below). The proof is provided in the Appendix. 

\begin{thm}[The negative binomial and Dirichlet-multinomial distributions]\label{lem:NB_DirMult}
Let $\xv=(x,x_1,$ $\ldots,x_K)$ be random variables generated as
$$x=\sum_{k=1}^K x_k,~x_k\sim\emph{\mbox{NB}}(r_k,p).$$ Set $r_{\cdotv}=\sum_{k=1}^K r_k$ and let $\yv=(y,y_1,\ldots,y_K)$ be random variables generated as
\beq
(y_1,\ldots,y_K)\sim\emph{\mbox{DirMult}}(y,r_1,\ldots,r_K),~y\sim\emph{\mbox{NB}}(r_{\cdotv},p).\notag
\eeq
Then the distribution of $\xv$ is the same as that of $\yv$.
\end{thm}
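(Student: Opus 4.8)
The plan is to show directly that the joint probability mass functions (PMFs) of $\xv$ and $\yv$ coincide. In both constructions the first coordinate is the deterministic sum of the remaining ones, $x=\sum_k x_k$ for $\xv$ and $y=\sum_k y_k$ for $\yv$, so each joint law is concentrated on tuples of the form $(\sum_k m_k, m_1,\ldots,m_K)$. It therefore suffices to fix an arbitrary nonnegative integer vector $(m_1,\ldots,m_K)$, set $n:=\sum_k m_k$, and check that the joint PMFs of the component vectors $(x_1,\ldots,x_K)$ and $(y_1,\ldots,y_K)$ agree at that point.

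First I would write down the law of $\xv$. Since the $x_k\sim\mbox{NB}(r_k,p)$ are independent, the joint PMF is the product of the NB PMFs, and collecting the $p$-powers while using $r_{\cdotv}=\sum_k r_k$ gives
\[
\prod_{k=1}^K f_N(m_k\,|\,r_k,p)=\left(\prod_{k=1}^K \frac{\Gamma(m_k+r_k)}{m_k!\,\Gamma(r_k)}\right) p^{n}(1-p)^{r_{\cdotv}}.
\]
Next I would write the law of $\yv$ as the NB marginal times the DirMult conditional. Using $y\sim\mbox{NB}(r_{\cdotv},p)$ together with the stated DirMult PMF (the multinomial coefficient times \eqref{eq:like_DirCat}), the joint PMF of $\yv$ at the same point is
\[
f_N(n\,|\,r_{\cdotv},p)\cdot\frac{n!}{\prod_k m_k!}\,\frac{\Gamma(r_{\cdotv})}{\Gamma(n+r_{\cdotv})}\prod_{k=1}^K\frac{\Gamma(m_k+r_k)}{\Gamma(r_k)}.
\]

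The remaining step is pure bookkeeping. Substituting $f_N(n\,|\,r_{\cdotv},p)=\frac{\Gamma(n+r_{\cdotv})}{n!\,\Gamma(r_{\cdotv})}p^n(1-p)^{r_{\cdotv}}$, the factors $\Gamma(n+r_{\cdotv})$, $\Gamma(r_{\cdotv})$, and $n!$ cancel against the DirMult normalizer, leaving precisely the expression obtained for $\xv$; hence the two joint PMFs agree and $\xv$ and $\yv$ have the same distribution. I expect no genuine obstacle here: the only care required is the Gamma-function cancellation and the observation that $p^n(1-p)^{r_{\cdotv}}$ splits as $\prod_k p^{m_k}(1-p)^{r_k}$ exactly because $n=\sum_k m_k$ and $r_{\cdotv}=\sum_k r_k$. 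A point worth emphasizing is that this direct comparison avoids having to prove the negative binomial superposition property (that $\sum_k x_k\sim\mbox{NB}(r_{\cdotv},p)$) as a separate ingredient; that identity instead falls out as a by-product, since summing the matched joint PMF over all $(m_1,\ldots,m_K)$ with fixed $n$ returns the NB marginal $f_N(n\,|\,r_{\cdotv},p)$.
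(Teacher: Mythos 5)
Your proof is correct, but it takes a genuinely different route from the paper's. The paper proves the theorem with characteristic functions: it computes $\E[e^{i\tv^T\xv}]$ as a product of NB characteristic functions, then augments $\yv$ hierarchically as $(y_1,\ldots,y_K)\sim\mbox{Mult}(y,\thetav)$, $\thetav\sim\mbox{Dir}(r_1,\ldots,r_K)$, $y\sim\mbox{Pois}(\lambda)$, $\lambda\sim\mbox{Gamma}(r_{\cdotv},p/(1-p))$, and uses the fact that the independent product of a gamma variable and a Dirichlet vector with matching concentration parameter yields independent gamma variables $\lambda_k=\lambda\theta_k$, so that marginalizing the $\lambda_k$ reproduces the same product of NB characteristic functions. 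Your argument instead compares the joint PMFs directly: you correctly observe that both laws are concentrated on tuples whose first coordinate is the sum of the rest, write the law of $\yv$ as the NB marginal times the DirMult conditional, and the Gamma-function and factorial cancellations do the rest, using $p^n(1-p)^{r_{\cdotv}}=\prod_k p^{m_k}(1-p)^{r_k}$. What your approach buys is brevity and elementarity — no transform methods, no augmentation, and NB superposition comes out as a corollary rather than being needed as an ingredient. What the paper's approach buys is structural insight: the multinomial--Poisson--gamma--Dirichlet augmentation it runs through is exactly the latent representation exploited elsewhere in the paper (the compound Poisson representation and the gamma-process constructions behind the Gibbs samplers), so the proof doubles as a preview of the machinery used for inference, and the characteristic-function technique extends to settings where direct PMF manipulation is less tractable. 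Both proofs are complete and rigorous; yours is arguably the cleaner one for this specific statement.
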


Using Theorem \ref{lem:NB_DirMult}, % \ref{lem:NB_DirMult}, %\ref{lem:NB_DirMult}, 
$(n_{vj}, n_{vj1},\ldots,n_{vjK})$ in (\ref{eq:NBFA1}) can be equivalently generated as
%This, according to Lemma 1, is equivalent to generate a total count from a negative binomial distribution and assigns it to the $K$ latent factors using a Dirichlet-multinomial distribution as 
\beq\small
 (n_{vj1},\ldots,n_{vjK})\,|\,n_{vj} \sim\mbox{DirMult}\left(n_{vj},\phi_{v1}\theta_{1j},\ldots,\phi_{vK}\theta_{Kj}\right),~~n_{vj}\sim\mbox{NB}\left( \sum_{k=1}^K \phi_{vk}\theta_{kj}, ~p_j\right).\label{eq:NBFA2}
\eeq 
%where $\phiv_{v:}$ is the $v$th row of $\Phimat$ and $\phiv_{v:}\thetav_j = \sum_{k=1}^K \phi_{vk}\theta_{kj}$.
Clearly, how the factorization under the NB likelihood is related to the Dirichlet-multinomial distribution, as in (\ref{eq:NBFA1}) and (\ref{eq:NBFA2}), mimics how the factorization under the Poisson likelihood is related to the multinomial distribution, as in (\ref{eq:PFA1}) and (\ref{eq:PFA2}).

\subsection{The Dirichlet-multinomial mixed-membership model}

\begin{figure}[!tb]
\begin{center}
\includegraphics[width=0.55\columnwidth]{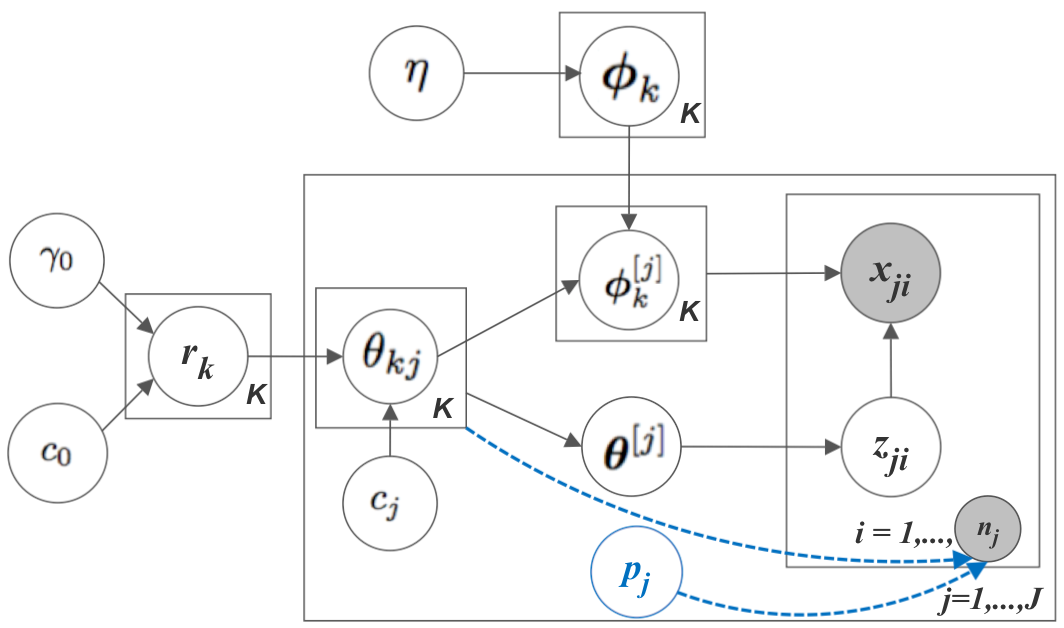}
\end{center}
\caption{\small \label{fig:graph}
Graphical representations of the Dirichlet-multinomial mixed-membership model (without $p_j$ and the dashed lines) and negative binomial factor analysis (with $p_j$ and the dashed lines).
}
\vspace{-3mm}
\end{figure}

%As shown in Section \ref{sec:dist}, 
%similar to how the Poisson distribution is related to the multinomial and categorial distributions, the negative binomial distribution can be related to the Dirichlet-multinomial and Dirichlet-categorical distributions. 
Similar to how we relate PFA in (\ref{eq:PFA1}) to the multinomial topic model in (\ref{eq:LDA-Po0}), as in Section \ref{sec:2.2}, %\citet{BNBP_PFA} and \citet{NBP2012},
 we may relate NBFA in (\ref{eq:NBFA1}) to a mixed-membership model constructed by replacing the categorical distributions in \eqref{eq:LDA-Po0} with the Dirichlet-categorical distributions~as
\begin{align}
%&\{b_{ji}\}_{i:x_{ji}=v,z_{ji}=k}\sim\mbox{CRP}(n_{vjk},\phi_{vk}\theta_{kj})\notag\\
&x_{ji}\sim\mbox{Cat}(\phiv^{[j]}_{z_{ji}}),~z_{ji}\sim\mbox{Cat}(\thetav^{[j]}),\notag\\
&\phiv^{[j]}_k\sim\mbox{Dir}(\phiv_k \theta_{kj}),~\thetav^{[j]}\sim\mbox{Dir}(\thetav_j),%\notag\\
%&n_j\sim\mbox{NB}(\theta_{j\cdotv},p_j), 
\label{eq:NBPF3}
\end{align}
where $\{\phiv^{[j]}_k\}_k$ and $\thetav^{[j]}$ represent the factors and factor scores specific for sample~$j$, respectively. A graphical representation of the model, including the settings of the hyper-priors to be discussed later,  is shown in Figure \ref{fig:graph}. Introducing $\phiv^{[j]}_k$ into the hierarchical model allows the same set of factors $\{\phiv_k\}_k$ to be manifested differently in different samples, whereas introducing $\thetav^{[j]}$ allows each sample to have two different representations: the factor scores $\thetav^{[j]}$ under the sample-specific factors $\{\phiv^{[j]}_k\}_k$, and the factor scores $\thetav_j$ under the shared factors $\{\phiv_k\}_k$.
%Using $\phiv^{[j]}_k$ allows the factors $\phiv_k$ to be represented differently under different documents, but still share 
In addition, under this construction, the variance-to-mean ratio of $\phi^{[j]}_{vk}$ given $\phiv_k$ and $\theta_{kj}$ becomes 
%$$\mbox{VMR}\left[\phi^{[j]}_{vk}\,\big |\,\phiv_k,\theta_{kj}\right] = \frac{1-\phi_{vk}}{\theta_{kj}+1},$$
%$\mbox{VMR}\big[\phi^{[j]}_{vk}\,\big |\,\phiv_k,\theta_{kj}\big] = 
$({1-\phi_{vk}})/({\theta_{kj}+1}),$
which monotonically decreases as the corresponding %its 
%document-specific 
factor score $\theta_{kj}$ increases, allowing the variability of $\phiv^{[j]}_{k}$ in the prior to be controlled by the popularity of $\phiv_k$ in the corresponding sample. Moreover, this construction helps simplify %leads to helpful simplification of 
the model likelihood and allows the model to be closely related to NBFA, as discussed below.

Explicitly drawing $\{\phiv^{[j]}_k\}_k$ for all samples would be computationally prohibitive, especially if the number of samples is large. Fortunately, this operation is totally unnecessary. 
By marginalizing out $\phiv^{[j]}_k$ and $\thetav^{[j]}$ in (\ref{eq:NBPF3}), we have 
\beq
\{x_{ji}\}_{i:z_{ji}=k}\sim\mbox{DirCat}(n_{\cdotv jk},\phi_{1k} \theta_{kj},\ldots,\phi_{Vk} \theta_{kj}),~~\zv_j\sim\mbox{DirCat}(n_j,\thetav_j), \label{eq:NBPF3_0}
\eeq
%and by marginalizing out $\thetav^{[j]}$, we have 
%\beq
%\zv_j\sim\mbox{DirCat}(n_j,\thetav_j)
%,\eeq
where $\zv_j:=(z_1,\ldots,z_{jn_j})$, $n_{vjk}:=\sum_{i=1}^{n_j}\delta(x_{ji}=v,z_{ji}=k)$, and $n_{\cdotv jk}:= \sum_{v=1}^V n_{vjk}$. %=\sum_{v=1}^V\sum_{i=1}^{n_j}\delta(x_{ji}=v,z_{ji}=k)= \sum_{i=1}^{n_j}\delta(z_{ji}=k)$.
 Thus the joint likelihood of $\xv_j:=(x_{j1},\ldots,x_{jn_j})$ and $\zv_j$ given $\Phimat$, $\thetav_j$, and $n_j$ can be expressed as
\beq
P(\xv_j,\zv_j\,|\,\Phimat,\thetav_j,n_j)=\frac{\Gamma(\theta_{\cdotv j})}{\Gamma(n_j+\theta_{\cdotv j})}\prod_{v=1}^V\prod_{k=1}^K \frac{\Gamma(n_{vjk}+\phi_{vk}\theta_{kj})}{\Gamma(\phi_{vk}\theta_{kj})}.\label{eq:DirMult_like}
\eeq
%where $\theta_{\cdotv j}:=\sum_{k=1}^K \theta_{kj}$.
We call the model shown in (\ref{eq:NBPF3}) or (\ref{eq:NBPF3_0}) as the Dirichlet-multinomial mixed-membership (DMMM) model, whose likelihood given the factors and factor scores is shown in (\ref{eq:DirMult_like}). 

We introduce the following proposition, with proof provided in Appendix \ref{app:proof}, to show that one can recover NBFA from the DMMM model by randomizing its sample lengths with the NB distributions, and can reduce NBFA to the DMMM model by conditioning on these lengths. Thus how NBFA and the DMMM are related to each other mimics how PFA and the multinomial mixed-membership model are related to each other.

%We relate the Dirichlet-multinomial mixed-membership model 
%We relate the model shown in (\ref{eq:NBPF3}) %as a Dirichlet-multinomial mixed-membership model and relates it 
%to NBFA using the following Lemma, whose proof is provided in the Appendix. 
%
% Related discussions on the advantages of the Dirichlet-multinomial over the multinomial distributions on modeling word burstiness can be found in \citet{madsen2005modeling} and \citet{doyle2009accounting}. To make connection to Dirichlet compound multinomial LDA (DCMLDA) proposed in \citet{doyle2009accounting}, with $\{\phiv^{[j]}_k\}_k$ and $\thetav^{[j]}$ representing the factors and factor scores for document $j$, respectively, we augment (\ref{eq:NBPF3_1}) and (\ref{eq:NBPF3_2}) as
%%What's making NBFA more interesting is that it is related to a topic model as
%\begin{align}
%%&\{b_{ji}\}_{i:x_{ji}=v,z_{ji}=k}\sim\mbox{CRP}(n_{vjk},\phi_{vk}\theta_{kj})\notag\\
%&x_{ji}\sim\mbox{Cat}(\phiv^{[j]}_{z_{ji}}),~\phiv^{[j]}_k\sim\mbox{Dir}(\phiv_k \theta_{kj}),\notag\\
%&z_{ji}\sim\mbox{Cat}(\thetav^{[j]}),~\thetav^{[j]}\sim\mbox{Dir}(\thetav_j),%\notag\\
%%&n_j\sim\mbox{NB}(\theta_{j\cdotv},p_j), 
%\label{eq:NBPF3}
%\end{align}
%
%
%a Dirichlet-multinomial mixed-membership model, as shown below in Lemma \ref{lem:DirMult}.

\begin{prop}[Dirichlet-multinomial mixed-membership (DMMM) modeling and negative binomial factor analysis] \label{lem:DirMult}

For the DMMM model that generates the covariate and factor indices using~(\ref{eq:NBPF3}) or (\ref{eq:NBPF3_0}), 
%as
%\begin{align}
%\{x_{ji}\}_{i:z_{ji}=k}\sim\emph{\mbox{DirCat}}(n_{\cdotv jk},\phi_{1k} \theta_{kj},\ldots,\phi_{Vk} \theta_{kj}), ~~\zv_j\sim\emph{\mbox{DirCat}}(n_j,\thetav_j), \label{eq:NBPF3_1} %\notag
%\end{align}
%
%
%Given the factors $\phiv_k$ and factor scores $\thetav_j$, let the words with the same topic index be generated as %Dirichlet-multinomial 
%%topic model as
%\begin{align}
%&\{x_{ji}\}_{i:z_{ji}=k}\sim\emph{\mbox{DirCat}}(n_{\cdotv jk},\phi_{1k} \theta_{kj},\ldots,\phi_{Vk} \theta_{kj}), %~~\zv_j\sim\emph{\mbox{DirCat}}(n_j,\thetav_j),
%\label{eq:NBPF3_1}
%\end{align}
%where $n_{\cdotv jk}= \sum_{v=1}^V n_{vjk}=\sum_{v=1}^V\sum_{i=1}^{n_j}\delta(x_{ji}=v,z_{ji}=k)= \sum_{i=1}^{n_j}\delta(z_{ji}=k)$, and let the topic indices for all the $n_j$ words be generated as
%\begin{align}
%%&\{x_{ji}\}_{i:z_{ji}=k}\sim\emph{\mbox{DirCat}}(n_{ jk},\phi_{1k} \theta_{kj},\ldots,\phi_{Vk} \theta_{kj}),~~
%\zv_j\sim\emph{\mbox{DirCat}}(n_j,\thetav_j), \label{eq:NBPF3_2}
%\end{align}
%then the joint likelihood of $\xv_j$ and $\zv_j$ given $\Phimat$, $\thetav_j$, and $n_j$ can be expressed as
%\beq
%P(\xv_j,\zv_j\,|\,\Phimat,\thetav_j,n_j)=\frac{\Gamma(\theta_{\cdotv j})}{\Gamma(n_j+\theta_{\cdotv j})}\prod_{v=1}^V\prod_{k=1}^K \frac{\Gamma(n_{vjk}+\phi_{vk}\theta_{kj})}{\Gamma(\phi_{vk}\theta_{kj})}. \label{eq:NBFA_like}
%\eeq
if the sample lengths are randomized as
\begin{align}
%&\{b_{ji}\}_{i:x_{ji}=v,z_{ji}=k}\sim\mbox{CRP}(n_{vjk},\phi_{vk}\theta_{kj})\notag\\
%&x_{ji}\sim\mbox{Cat}(\phiv^{[j]}_{z_{ji}}),~\phiv^{[j]}_k\sim\mbox{Dir}(\phiv_k \theta_{kj})\notag\\
%&z_{ji}\sim\mbox{Cat}(\thetav^{[j]}),~\thetav^{[j]}\sim\mbox{Dir}(\thetav_j),\notag\\
&n_j\sim\emph{\mbox{NB}}(\theta_{\cdotv j},p_j), \label{eq:NBPF4}
\end{align}
 then the joint likelihood of $\xv_j$, $\zv_j$, and $n_j$ given $\Phimat$, $\thetav_j$, and $p_j$, %
%
%Furthermore, the joint likelihood of $(\xv_j,\zv_j,n_j)$ given $\Phimat$, $\thetav_j$ and $p_j$ for the following model
%\begin{align}
%%&\{b_{ji}\}_{i:x_{ji}=v,z_{ji}=k}\sim\mbox{CRP}(n_{vjk},\phi_{vk}\theta_{kj})\notag\\
%&x_{ji}\sim\mbox{Cat}(\phiv^{[j]}_{z_{ji}}),~\phiv^{[j]}_k\sim\mbox{Dir}(\phiv_k \theta_{kj})\notag\\
%&z_{ji}\sim\mbox{Cat}(\thetav^{[j]}),~\thetav^{[j]}\sim\mbox{Dir}(\thetav_j),\notag\\
%&n_j\sim\mbox{NB}(\theta_{j\cdotv},p_j), \label{eq:NBPF3}
%\end{align}
%can be expressed as
%\beq
%P(\xv_j,\zv_j,n_j)=\frac{1}{n_j!}\prod_v\prod_{k} \frac{\Gamma(n_{vjk}+\phi_{vk}\theta_{kj})}{\Gamma(\phi_{vk}\theta_{kj})}p_j^{n_{vjk}}(1-p_j)^{\phi_{vk}\theta_{kj}},
%\eeq
multiplied by the combinatorial coefficient ${n_j!}/({\prod_{v=1}^V\prod_{k=1}^K n_{vjk}!})$, is equal to the likelihood %of $\{n_{vj},n_{vj1},\ldots,n_{vjK}\}_{v=1,V}$ given $\Phimat$, $\thetav_j$, and $p_j$ in both 
of negative binomial factor analysis (NBFA) described in (\ref{eq:NBFA1}) or (\ref{eq:NBFA2}), %which is a product of negative binomial likelihoods as 
expressed as
\beq
P(\{n_{vj},n_{vj1},\ldots,n_{vjK}\}_{v=1,V}\,|\,\Phimat,\thetav_j,p_j)=\prod_{v=1}^V\prod_{k=1}^K \emph{\mbox{NB}}(n_{vjk}; \phi_{vk}\theta_{kj}, p_j). \label{eq:NBPF5}
\eeq
%the likelihood for NBFA
\end{prop}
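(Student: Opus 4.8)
The plan is to form the joint likelihood of $\xv_j$, $\zv_j$, and $n_j$ as the product of the conditional DMMM likelihood in \eqref{eq:DirMult_like} with the randomizing NB mass function for the sample length in \eqref{eq:NBPF4}, and then simplify. First I would write
$$P(\xv_j,\zv_j,n_j\,|\,\Phimat,\thetav_j,p_j) = P(\xv_j,\zv_j\,|\,\Phimat,\thetav_j,n_j)\,\mbox{NB}(n_j;\theta_{\cdotv j},p_j),$$
substituting \eqref{eq:DirMult_like} for the first factor and $\frac{\Gamma(n_j+\theta_{\cdotv j})}{n_j!\,\Gamma(\theta_{\cdotv j})}p_j^{n_j}(1-p_j)^{\theta_{\cdotv j}}$ for the second. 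The crucial cancellation is then immediate: the factor $\Gamma(\theta_{\cdotv j})/\Gamma(n_j+\theta_{\cdotv j})$ coming from the Dirichlet-categorical normalizers in \eqref{eq:DirMult_like} is exactly the reciprocal of the $\Gamma(n_j+\theta_{\cdotv j})/\Gamma(\theta_{\cdotv j})$ appearing in the NB mass function, so both disappear, leaving
$$\frac{1}{n_j!}\,p_j^{n_j}(1-p_j)^{\theta_{\cdotv j}}\prod_{v=1}^V\prod_{k=1}^K\frac{\Gamma(n_{vjk}+\phi_{vk}\theta_{kj})}{\Gamma(\phi_{vk}\theta_{kj})}.$$

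Next I would multiply by the combinatorial coefficient $n_j!/(\prod_{v,k}n_{vjk}!)$, which cancels the $1/n_j!$ and distributes the missing $1/n_{vjk}!$ factors into the terms of the double product. At this point the claim reduces to matching aggregate exponents against the NBFA likelihood \eqref{eq:NBPF5}, which I would rewrite termwise as $\prod_{v,k}\frac{\Gamma(n_{vjk}+\phi_{vk}\theta_{kj})}{n_{vjk}!\,\Gamma(\phi_{vk}\theta_{kj})}p_j^{n_{vjk}}(1-p_j)^{\phi_{vk}\theta_{kj}}$. The $\Gamma$-ratios and factorials already agree, so only two bookkeeping identities remain: the power of $p_j$ must satisfy $\sum_{v,k}n_{vjk}=\sum_v n_{vj}=n_j$, and the power of $(1-p_j)$ must satisfy $\sum_{v,k}\phi_{vk}\theta_{kj}=\sum_k\theta_{kj}\sum_v\phi_{vk}=\sum_k\theta_{kj}=\theta_{\cdotv j}$.

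I expect no genuine obstacle; the argument is a short cancellation followed by regrouping. The one point that demands care is the exponent on $(1-p_j)$, which is the sole place where structure beyond pure algebra enters: reducing $\sum_{v,k}\phi_{vk}\theta_{kj}$ to $\theta_{\cdotv j}$ relies on the factor normalization $\sum_{v=1}^V\phi_{vk}=1$, i.e.\ that each $\phiv_k$ is a probability vector, which is precisely the mixed-membership constraint under which the DMMM model is defined. Once these two exponent identities and the per-term factorials are in place, the product collapses termwise into $\prod_{v,k}\mbox{NB}(n_{vjk};\phi_{vk}\theta_{kj},p_j)$, which is exactly \eqref{eq:NBPF5}, completing the proof.
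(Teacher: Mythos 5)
Your proof is correct and follows essentially the same route as the paper's: multiply the Dirichlet-multinomial likelihood \eqref{eq:DirMult_like} by the NB mass function \eqref{eq:NBPF4}, cancel the $\Gamma(\theta_{\cdotv j})/\Gamma(n_j+\theta_{\cdotv j})$ factors, and then multiply by the combinatorial coefficient to match \eqref{eq:NBPF5} termwise. The only difference is that you make explicit the two exponent identities ($\sum_{v,k}n_{vjk}=n_j$ and $\sum_{v,k}\phi_{vk}\theta_{kj}=\theta_{\cdotv j}$ via $\sum_v\phi_{vk}=1$) that the paper's terse proof uses implicitly when it distributes $p_j^{n_j}(1-p_j)^{\theta_{\cdotv j}}$ across the double product.
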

%\begin{proof}
%Multiplying the likelihood in (\ref{eq:DirMult_like}) with the PMF of the negative binomial distribution in (\ref{eq:NBPF4}), we have
%\beq
%P(\xv_j,\zv_j,n_j\,|\,\Phimat,\thetav_j,p_j)=\frac{1}{n_j!}\prod_{v=1}^V\prod_{k=1}^K \frac{\Gamma(n_{vjk}+\phi_{vk}\theta_{kj})}{\Gamma(\phi_{vk}\theta_{kj})}p_j^{n_{vjk}}(1-p_j)^{\phi_{vk}\theta_{kj}}, \label{eq:NBFA_like}
%\eeq
%which, multiplied by the combinatorial coefficient ${n_j!}/({\prod_{v=1}^V\prod_{k=1}^K n_{vjk}!})$, is equal to (\ref{eq:NBPF5}).
%\end{proof}

%The above Lemma can be straightforwardly proved by multiplying together the PMFs of the variables in the hierarchical model, omitted here for brevity. 
%We call the model constituted of (\ref{eq:NBPF3_1})-(\ref{eq:NBPF4}) as
The DMMM model could model not only the burstiness of the covariates, but also that of the factors via the Dirichlet-categorical distributions, as further explained below when discussing related models. As far as the conditional posteriors of $\phiv_k$ and $\thetav_j$ are concerned, the DMMM model with the lengths of its samples randomized via the NB distributions, as shown in (\ref{eq:NBPF3_0}) and (\ref{eq:NBPF4}), is equivalent to NBFA, as shown in~(\ref{eq:NBFA1}). %are equivalent to each other 
The representational differences, % in %these two representations, 
however, lead to different inference strategies, which will be discussed in detail along with their %the construction of their 
nonparametric Bayesian generalizations. %versions. 

\subsection{Comparisons with related models}\label{sec:3.3} % Dirichlet compound multinomial latent Dirichlet allocation}

Preceding the Dirichlet-multinomial mixed-membership (DMMM) model proposed in this paper, to account for covariate burstiness, \citet{doyle2009accounting} proposed Dirichlet compound multinomial LDA (DCMLDA) that can be expressed as
\begin{align}
%&\{b_{ji}\}_{i:x_{ji}=v,z_{ji}=k}\sim\mbox{CRP}(n_{vjk},\phi_{vk}\theta_{kj})\notag\\
&x_{ji}\sim\mbox{Cat}(\phiv^{[j]}_{z_{ji}}),~~z_{ji}\sim\mbox{Cat}(\thetav_j),~~%\notag\\
%&~~~~~~~~~
\phiv^{[j]}_k\sim\mbox{Dir}(\phiv_k r_k), %~\thetav_j\sim\mbox{Dir}(\rv),%\notag\\
%&\phiv^{[j]}_k\sim\mbox{Dir}(\phiv_k \alpha_k),~\thetav_j\sim\mbox{Dir}(\alphav).%\notag\\
%&n_j\sim\mbox{NB}(\theta_{j\cdotv},p_j), 
\label{eq:NBPF3_4}
\end{align}
where the Dirichlet prior is further imposed on $\thetav_j$. Note that the sample-specific factor scores $\{\thetav_j\}_j$ are represented under the sample-specific factors $\{\phiv^{[j]}_k\}_k$ in DCMLDA, as shown in (\ref{eq:NBPF3_4}), whereas they are represented under the same set of factors $\{\phiv_k\}_k$ in the DMMM model, as shown in (\ref{eq:NBPF3_0}).
%where $\betav_k = (\beta_{1k},\ldots,\beta_{Vk})^T$ and $\rv=(r_1,\ldots,r_K)$. 

Comparing (\ref{eq:NBPF3}) with (\ref{eq:NBPF3_4}), it is clear that removing $\thetav^{[j]}$ from (\ref{eq:NBPF3}) %one may 
reduces the DMMM model %in (\ref{eq:NBPF3}) 
to DCMLDA in (\ref{eq:NBPF3_4}).
% by %letting $\betav_k:=\phiv_k r_k$ and 
%. %the hierarchical model. 
Moreover, if we further let
% \beq \thetav_j\sim\mbox{Dir}(\rv)\eeq and further generate the length of document $j$ as
%\begin{align}
%&n_j\sim{\mbox{NB}}(r_{\cdotv},p_j), \label{eq:NBPF3_5}
%\end{align}
\begin{align}
&\thetav_j\sim\mbox{Dir}(\rv),~~n_j\sim{\mbox{NB}}(r_{\cdotv},p_j), \label{eq:NBPF3_5}
\end{align}
 then the joint likelihood of $\xv_j$, $\zv_j$, and $n_j$ given $\Phimat$, $\rv$, and $p_j$ can be expressed as
\beq
P(\xv_j,\zv_j,n_j\,|\,\Phimat,\rv,p_j)= %\frac{{\prod_{v=1}^V\prod_{k=1}^K n_{vjk}!}}{n_j!}\prod_{v=1}^V\prod_{k=1}^K
%\mbox{NB}(n_{vjk}; \phi_{vk}r_k, p_j)
 \frac{1}{n_j!}\prod_{v=1}^V\prod_{k=1}^K\frac{\Gamma(n_{vjk}+\phi_{vk}r_k)}{\Gamma(\phi_{vk}r_k)}p_j^{n_{vjk}}(1-p_j)^{\phi_{vk}r_{k}}, 
 \label{eq:NBFA_like1}
\eeq
which multiplied by the combinatorial coefficient ${n_j!}/({\prod_{v=1}^V\prod_{k=1}^K n_{vjk}!})$ is equal to the likelihood of $\{n_{vj},n_{vj1},\ldots,n_{vjK}\}_{v=1,V}$ given $\Phimat$,~$\rv$, and $p_j$ in 
\beq
n_{vj}=\sum_{k=1}^K n_{vjk},~n_{vjk}\sim\mbox{NB}(\phi_{vk}r_k, p_j). \label{eq:NBPF3_6}
\eeq
Thus, as far as the conditional posteriors of $\{\phiv_k\}_k$ and $\{r_k\}_k$ are concerned, DCMLDA constituted by (\ref{eq:NBPF3_4})-(\ref{eq:NBPF3_5}) is equivalent to a special case of NBFA shown in (\ref{eq:NBPF3_6}), which is the augmented representation of $\nv_j\sim\mbox{NB}(\Phimat\rv,p_j)$ that restricts all samples to have the same factor scores $\{r_k\}_k$ under the same set of shared factors $\{\phiv_k\}_k$.

Given the factors $\{\phiv_k\}_k$ and factor scores $\thetav_j$, for the multinomial mixed-membership model in (\ref{eq:LDA-Po0}), both the covariate and factor indices are independently drawn from the categorical distributions; for DCMLDA in (\ref{eq:NBPF3_4}), the factor indices but not the covariate indices are independently drawn from the categorical distributions; whereas for the DMMM model in (\ref{eq:NBPF3}), % differs from a regular topic mode in that 
neither the factor indices nor covariate indices are independently drawn from the categorical distributions. %More specifically, 
Denoting $y^{-ji}$ as the variable $y$ obtained by excluding the contribution of the $i$th word %the $i$th word token 
in sample~$j$, %in the prior, 
we compare in Table~\ref{tab:1} these three different models on their predictive distributions of $x_{ji}$ and $z_{ji}$. % for the multinomial topic model in (\ref{eq:LDA-Po0}), DCMLDA in (\ref{eq:NBPF3_4}), and Dirichlet-multinomial mixed-membership model in (\ref{eq:NBPF3}). %respectively, as
%$$P(x_{ji}=v\,|\,\Phimat,\thetav_j) = \phi_{vz_{ji}},~~P(z_{ji}=k\,|\,\thetav_j) = {\theta_{kj}}/{\theta_{\cdotv j}};$$ %we have for DCMLDA
% $$P(x_{ji}=v\,|\,\xv_j^{-ji},\zv_j,\Phimat,\rv) = \frac{n_{vjz_{ji}}^{-ji}+\phi_{vz_{ji}}r_{z_{ji}}}{n_{\cdotv jz_{ji}}^{-ji} +r_{z_{ji}}},~~P(z_{ji}=k\,|\,\thetav_j) = {\theta_{kj}}/{\theta_{\cdotv j}} ;$$ %\frac{n_{\cdotv jk}^{-i}+r_{k}}{n_{j}-1+r_{\cdotv }}$$
%%and we have for the Dirichlet-multinomial mixed-membership model
%$$P(x_{ji}=v\,|\,\xv_j^{-ji},\zv_j,\Phimat,\thetav_j) = \frac{n_{vjz_{ji}}^{-ji}+\phi_{vz_{ji}}\theta_{z_{ji}j}}{n_{\cdotv jz_{ji}}^{-ji} +\theta_{z_{ji}j}},~~P(z_{ji}=k\,|\,\zv_j^{-i},n_j,\thetav_j) = \frac{n_{\cdotv jk}^{-ji}+\theta_{kj}}{n_{j}-1+\theta_{\cdotv j}}.$$
\begin{table}
\begin{small}
\begin{center}
\caption{Comparisons of the predictive distributions %of the covariate and factor indices of
for the multinomial mixed-membership model in (\ref{eq:LDA-Po0}), Dirichlet compound multinomial LDA (DCMLDA) in (\ref{eq:NBPF3_4}), and Dirichlet-multinomial mixed-membership (DMMM) model in (\ref{eq:NBPF3}). }\label{tab:1}
%\begin{tabular}{ C{4.9pc}C{17.6pc} C{14.8pc} }%p{9pc} | p{7pc} | p{7pc}| p{8pc} }
%\footnotesize
\small
\begin{tabular}{ C{8pc}C{10pc} C{10pc} }%p{9pc} | p{7pc} | p{7pc}| p{8pc} }
\toprule
%\hline
Model & Predictive distribution for $x_{ji}$ & Predictive distribution for $z_{ji}$ \\
\toprule
Multinomial mixed-membership & $P(x_{ji}=v\,|\,\Phimat,\thetav_j) = \phi_{vz_{ji}}$ & $P(z_{ji}=k\,|\,\thetav_j) = {\theta_{kj}}/{\theta_{\cdotv j}}$ \\ 
\midrule
Dirichlet compound multinomial
LDA & $\displaystyle P(x_{ji}=v\,|\,\xv_j^{-ji},z_{ji},\Phimat,\rv) = \frac{n_{vjz_{ji}}^{-ji}+\phi_{vz_{ji}}r_{z_{ji}}}{n_{\cdotv jz_{ji}}^{-ji} +r_{z_{ji}}}$ & $P(z_{ji}=k\,|\,\thetav_j) = {\theta_{kj}}/{\theta_{\cdotv j}}$ \\
\midrule
Dirichlet-multinomial mixed-membership &$\displaystyle P(x_{ji}=v\,|\,\xv_j^{-ji},z_{ji},\Phimat,\thetav_j) = \frac{n_{vjz_{ji}}^{-ji}+\phi_{vz_{ji}}\theta_{z_{ji}j}}{n_{\cdotv jz_{ji}}^{-ji} +\theta_{z_{ji}j}}$ & $\displaystyle P(z_{ji}=k\,|\,\zv_j^{-i},n_j,\thetav_j) = \frac{n_{\cdotv jk}^{-ji}+\theta_{kj}}{n_{j}-1+\theta_{\cdotv j}}$\\
 \bottomrule
%\hline
\end{tabular}
\end{center}\vspace{-7.5mm}
\end{small}
\end{table}%

%we have
%\beqs
%\displaystyle
%&&P(x_{ji}=v\,|\,\Phimat,\thetav_j) = \phi_{vz_{ji}}, P(z_{ji}=k\,|\,\thetav_j) = {\theta_{kj}}/{\theta_{\cdotv j}},\\
%&&P(x_{ji}=v\,|\,\xv_j^{-i},\zv_j,\Phimat,\thetav_j) = \frac{n_{vjz_{ji}}^{-i}+\phi_{vz_{ji}}r_{z_{ji}}}{n_{\cdotv jk}^{-i} +r_{z_{ji}}}, P(z_{ji}=k\,|\,\thetav_j) = {\theta_{kj}}/{\theta_{\cdotv j}},\\
%&&P(x_{ji}=v\,|\,\xv_j^{-i},\zv_j,\Phimat,\thetav_j) = \frac{n_{vjz_{ji}}^{-i}+\phi_{vz_{ji}}\theta_{z_{ji}j}}{n_{\cdotv jk}^{-i} +\theta_{z_{ji}j}}, \displaystyle P(z_{ji}=k\,|\,\zv_j^{-i},n_j,\thetav_j) = \frac{n_{\cdotv jk}^{-i}+\theta_{kj}}{n_{j}-1+\theta_{\cdotv j}}.
%\eeqs
In comparison to the multinomial mixed-membership model, DCMLDA allows the number of times that a covariate %topic/term
 appears in a sample to exhibit the ``rich get richer'' ($i.e.$, self-excitation) behavior, leading to a better modeling of covariate burstiness, and the DMMM model further models the burstiness of the factor indices and hence encourages not only self-excitation, but also cross-excitation of covariate frequencies. %improves DCMLDA by also modeling the topic burstiness. %
% further allows the number of time a topic %topic/term
 % appears in a document to also exhibit the ``rich get richer'' behavior, leading to a better modeling of topic/word burstiness. 
 It is clear from Table~\ref{tab:1} that DCMLDA models covariate burstiness but not factor burstiness, and the corresponding NBFA restricts all samples to have the same factor scores under the shared factors $\{\phiv_k\}_k$. Thus we expect the DMMM model to clearly outperform DCMLDA, as will be confirmed by our experimental results.

Note that  \citet{GBN} have recently extended the single-layer PFA into a multilayer one. For example,  a two-layer PFA would further factorize the factor scores $\thetav_j$ in $\xv_j\sim\mbox{Pois}(\Phimat \thetav_j)$ under the gamma likelihood as
$\thetav_j\sim\mbox{Gamma}[\tilde{\Phimat} \tilde{\thetav}_j,p_j/(1-p_j)]$, which is designed to capture the co-occurrence patterns of the first-layer factors $\phiv_k$. With $\theta_{jk}$ marginalized out from $n_{\cdotv jk}\sim\mbox{Pois}(\theta_{jk})$, we have $n_{\cdotv jk}\sim\mbox{NB}(\sum_{\tilde k}\tilde{\phi}_{k\tilde k} \tilde{\theta}_{j\tilde k}, p_j)$, which can be considered as a NBFA model for the latent factor-sample counts. Thus using our previous analysis on NBFA, this multilayer construction models both the self- and cross-excitations of the first-layer factors and helps capture their bustiness, which could help better explain why adding an additional layer to PFA could often lead to clearly improved modeling performance, as shown in \citet{GBN}. However, due to the choice of the Poisson likelihood at the first layer, a multilayer PFA does not directly capture the bustiness of the covariates. On the other hand, the single-layer NBFA models the self-excitation but not cross-excitation of factors. Therefore, extending the single-layer NBFA into a multi-layer one by exploiting the deep structure of \citet{GBN}  may combine the advantages of both  by not only capturing the covariate bustiness, but also better capturing the factor bustiness. That extension is outside of the scope of the  paper and we leave it for future study. 

%
%To make connections to Dirichlet compound multinomial latent Dirichlet allocation (DCMLDA) of \citet{doyle2009accounting}, which %employs the Dirichlet-multinomial distributions to 
%improves the canonical mixed-membership model by modeling the burstiness of the covariates, we show DCMLDA can be considered as a %degenerated case of the 
%simplified DMMM model that models covariate burstiness but not factor burstiness. %It is possible for
% %DCMLDA has document-specific factor scores, but these factor scores are inferred under document-specific factors. 
% We further relate DCMLDA to NBFA by restricting the factor scores of NBFA to be the same for all samples, under the same set of factors shared by all samples. 

\section{Hierarchical gamma-negative binomial process negative binomial factor analysis}\label{sec:hGNBP}

%\vspace{-2mm}
%\subsection{Hierarchical gamma-negative binomial process}
%\vspace{-3mm}

Let % \citep{PoissonP,Wolp:Clyd:Tu:2011} 
$G\sim\Gamma{\mbox{P}}(G_0,1/c_0)$ be a gamma process \citep{ferguson73} defined on the product space $\mathbb{R}_+\times \Omega$, where $\mathbb{R}_+=\{x:x>0\}$, with two parameters: a finite and continuous base measure $G_0$ over a complete separable metric space $\Omega$, and a scale $1/c_0$, such that $G(A)\sim\mbox{Gamma}(G_0(A),1/c_0)$ for each $A\subset \Omega$. The L\'evy measure of the gamma process can be expressed as $\nu(drd\phiv)=r^{-1}e^{-c_0r}dr G_0(d\phiv)$. %Although the L\'evy measure integrates to infinity, $\int _{\mathbb{R}_+\times \Omega} 
%\min\{r,1\}\nu(drd\omega)$ is finite, and therefore 
A draw from %the gamma process 
$G\sim\Gamma\mbox{P}(G_0,1/c_0)$ can be represented as the countably infinite sum
$G=\sum_{k=1}^\infty r_k\delta_{\phiv_k},~\phiv_k\sim g_0,$
where $\gamma_0=G_0(\Omega)$ as the mass parameter and $g_0(d\phiv)=G_0(d\phiv)/\gamma_0$ is the base distribution.

%
%Denote $G\sim\Gamma\mbox{P}(G_0,1/c_0)$ as a gamma process \citep{ferguson73} defined on the product space $\mathbb{R}^+\times \Omega$, where $\mathbb{R}^+ =\{x:x>0\}$, $\Omega$ is a complete and separable metric space, $G_0$ is a continuous and finite base measure over $\Omega$, $c_0\in\mathbb{R}^+$, and $\gamma_0:=G_0(\Omega)$ is the mass parameter. %The L\'evy measure of the gamma process is defined as 
%A random draw from $G\sim\Gamma\mbox{P}(G_0,1/c_0)$ can be expressed~as
%\beq
%G = \sum_{k=1}^\infty r_k\delta_{\phiv_k},
%\eeq
%where $\phiv_k$ are the atoms drawn from the base distribution $G_0(d \phiv)/\gamma_0 = \mbox{Dir}(\eta,\ldots,\eta)$, and $r_k$ are the weights of $\phiv_k$.

To support countably infinite factors for the DMMM model, we consider 
a hierarchical gamma-negative binomial process (hGNBP) as
\beq
X_j\sim\mbox{NBP}(\Theta_j,p_j),~\Theta_j\sim\Gamma\mbox{P}(G,1/c_j),~G\sim\Gamma\mbox{P}(G_0,1/c_0), \notag
\eeq
%
%
%hierarchical gamma process as
%\beq
%\Theta_j\sim\Gamma\mbox{P}(G,1/c_j),~G\sim\Gamma\mbox{P}(G_0,1/c_0),
%\eeq
where %$\Theta_j\sim\Gamma\mbox{P}(G,1/c_j)$ are gamma processes, and 
$X\sim\mbox{NBP}(\Theta,p)$ is a  NB process defined such that $X(A_i)\sim\mbox{NB}(\Theta(A_i),p)$ are independent NB random variables for disjoint partitions $A_i$ of $\Omega$. Given a gamma process random draw $G = \sum_{k=1}^\infty r_k\delta_{\phiv_k}$, we have
\beq
\Theta_j=\sum_{k=1}^\infty \theta_{kj}\delta_{\phiv_k},~\theta_{kj}\sim\mbox{Gamma}(r_k,1/c_j),\notag
\eeq
where $\theta_{kj}:=\Theta_j(\phiv_k)$ measures the weight of factor $k$ in sample $j$,
and
\beq
X_j = \sum_{k=1}^\infty n_{jk}\delta_{\phiv_k},~n_{jk}\sim\mbox{NB}(\theta_{kj},p_j),\notag
\eeq
where $n_{j} =X_j(\Omega)$ is the length of sample $j$ and $n_{jk}:=X_j(\phiv_k)=\sum_{i=1}^{n_j}\delta(z_{ji}=k)$ represents the number of times that factor $k$ appears in sample $j$.

We provide posterior analysis for the proposed hGNBP in Appendix \ref{app:post}, where we utilize several additional  discrete distributions, including the Chinese restaurant table (CRT), logarithmic, and sum-logarithmic (SumLog) distributions, that will also be used  in the following discussion. %, as described below. 

\subsection{Hierarchical model} 
%Block Gibbs sampling with adaptive truncation} % for the HGP Dirichlet-multinomial mixed-membership model}

We express the hGNBP-DMMM model as
\begin{align}
&x_{ji}\sim\mbox{Cat}(\phiv^{[j]}_{z_{ji}}),~z_{ji}\sim\mbox{Cat}(\thetav^{[j]}),\notag\\
&\phiv^{[j]}_k\sim\mbox{Dir}(\phiv_k \theta_{kj}),~\thetav^{[j]}\sim\mbox{Dir}(\thetav_j),\notag\\
%&\{b_{ji}\}_{i:x_{ji}=v,z_{ji}=k}\sim\mbox{CRP}(n_{vjk},\phi_{vk}\theta_{kj})\notag\\
%&\{x_{ji}\}_{i:z_{ji}=k}\sim\mbox{DirCat}(n_{ jk},\phi_{1k} \theta_{kj},\ldots,\phi_{Vk} \theta_{kj}),\notag\\
%&\zv_j\sim\mbox{DirCat}(n_j,\thetav_j),~
&\theta_{kj}\sim\mbox{Gamma}(r_k,1/c_j),~c_j\sim\mbox{Gamma}(e_0,1/f_0),\notag\\
&n_j\sim\mbox{NB}(\theta_{\cdotv j},p_j), ~p_j\sim\mbox{Beta}(a_0,b_0),~G\sim\Gamma\mbox{P}(G_0,1/c_0), \label{eq:full model}
%&\phiv_k\sim\mbox{Dir}(\eta,\ldots,\eta),~\theta_{kj}\sim\mbox{Gamma}(r_k,1/c_j),\notag\\
%&r_k\sim\mbox{Gamma}(\gamma_0/K,1/c_0),~c_j\sim\mbox{Gamma}(e_0,1/f_0).
\end{align}
where the atoms of the gamma process are drawn from a Dirichlet base distribution
$
\phiv_k\sim\mbox{Dir}(\eta,\ldots,\eta).
$
We further let $\gamma_0\sim\mbox{Gamma}(a_0,1/b_0)$ and $c_0\sim\mbox{Gamma}(e_0,1/f_0)$. 
%We will consider in experiments both fixing the topic Dirichlet hyperparameter $\eta$ and letting $\eta\sim\mbox{Gamma}(a_0,1/b_0)$.
With Proposition \ref{lem:DirMult}, the above %hGNBP-DMMM 
model %in (\ref{eq:full model}) 
can also be represented as a hGNBP-NBFA as
\begin{align}
&n_{vj}=\sum_{k=1}^\infty n_{vjk}, ~~n_{vjk}\sim\mbox{NB}(\phi_{vk}\theta_{kj},\, p_j),\notag\\
&\theta_{kj}\sim\mbox{Gamma}(r_k,1/c_j),~c_j\sim\mbox{Gamma}(e_0,1/f_0),\notag\\
&p_j\sim\mbox{Beta}(a_0,b_0),~G\sim\Gamma\mbox{P}(G_0,1/c_0).\label{eq:NBFA1_2}
\end{align}
The DMMM model in (\ref{eq:full model}) and NBFA in (\ref{eq:NBFA1_2}) have the same conditional posteriors for both the factors $\{\phiv_k\}_k$ and factor scores $\{\thetav_j\}_j$, but lead to different inference strategies. To infer $\{\phiv_k\}_k$ and $\{\thetav_j\}_j$, we first develop both blocked and collapsed Gibbs sampling with (\ref{eq:full model}), %as described in detail in Appendix \ref{app:DirMultInference}, 
and then develop blocked Gibbs sampling with~\eqref{eq:NBFA1_2}, as described below. 

\section{Inference via Gibbs sampling}\label{sec:inference}
We present in this section three different Gibbs sampling algorithms, as summarized in Algorithm \ref{tab:algorithm} in the Appendix. % shown in Appendix \ref{app:DirMultInference}.

%\section{Gibbs sampling for the hGNBP Dirichlet-multinomial mixed-membership model}\label{app:DirMultInference}
% \vspace{-4mm}

\subsection{Blocked Gibbs sampling} \label{sec:blockGibbs}
 %\vspace{-2mm}
As it is impossible to draw all the countably infinite atoms of a gamma process draw, expressed as $G=\sum_{k=1}^\infty r_k\delta_{\phiv_k}$, for the convenience of implementation, it is common to consider truncating the total number of atoms to be $K$ by choosing a discrete base measure as $G_0=\sum_{k=1}^K \frac{\gamma_0}K\delta_{\phiv_k}$, under which we have
$r_k\sim\mbox{Gamma}(\gamma_0/K,1/c_0)$ for $k\in\{1,\ldots,K\}$ \citep{NBP2012}. The finite truncation strategy is also commonly used for %the 
Dirichlet process mixture models \citep{%NealDPM,
ishwaran2001gibbs,fox2011sticky}. % and the beta process \citep{Hjort} based factor models \citep{BNBP_PFA_AISTATS2012}. %Mingyuan09,
%BPFA_TIP2012}.
Although fixing %the truncation level 
$K$ often works well in practice, it may lead to a considerable waste of computation if $K$ is set to be too large. %One may also consider instantiating only the atoms whose weights are larger than a predefined threshold and using reversible jump MCMC to infer the number of active atoms \citep{Wolp:Clyd:Tu:2011}. %The reversible jump MCMC, 
%This inference procedure, however, usually comes with heavy computation. 
For nonparametric Bayesian mixture models based on the Dirichlet process \citep{ferguson73,Escobar1995} or other more general normalized random measures with independent increments (NRMIs) \citep{regazzini2003distributional,lijoi2007controlling}, %BeyondDP,favaromcmc}, 
one may consider slice sampling to adaptively truncate the number of atoms used in each Markov chain Monte Carlo (MCMC) iteration \citep{walker2007sampling,papaspiliopoulos2008retrospective}. %, % kalli2011slice,
%griffin2011posterior}. 
Unlike NRMIs whose atoms' weights have to sum to one and hence are negatively correlated, the weights of the atoms of completely random measures are independent from each other.
%For the hGNBP, %however, we use completely random measures but not NRMIs, and hence the slice sampling strategy of \citet{walker2007sampling} does not directly apply. 
%Although we do not have a slice sampling procedure for adaptive truncation, 
Exploiting this property, for our models built on completely random measures, we 
% its independence property %whoseexploit this nice property %the independence of the atoms 
%of completely random measures
 %to
 construct a sampling procedure that adaptively truncates the total number of atoms in each iteration. %to be instantiated. 

Note that the conditional posterior of $G$ %can be expressed as
%$(G\,|\,\{\widetilde{L}_j,p_j\}_j,G_0)\sim\Gamma\mbox{P}\left(G_0+\sum_j \widetilde{L}_j,\frac{1}{c_0-\sum_j\ln(1-\tilde{p}_j)}\right)$, as 
shown in (\ref{eq:post_analysis}) consists of %can be written as the summation of
 two independent gamma processes: $\mathcal{D}\sim\Gamma\mbox{P}\big(\sum_j \widetilde{L}_j,[{c_0-\sum_j\ln(1-\tilde{p}_j)}]^{-1}\big)$ and $G_\star\sim\Gamma\mbox{P}\big(G_0,[c_0-\sum_j\ln(1-\tilde{p}_j)]^{-1}\big)$. %, the latter of which c where the former 
%Inspired by Algorithm 8 of \citet{NealDPM} for Dirichlet process mixture modeling, 
To approximately represent a draw from $G_\star$ %$$G_\star\sim\Gamma\mbox{P}\left(G_0,[{c_0-\sum \nolimits_j\ln(1-\tilde{p}_j)}]^{-1}\right)$$ 
that consists of countably infinite atoms, at the end of each MCMC iteration,
we relabel the indices of the atoms with nonzero counts from 1 to $K^+:=\sum_{k=1}^\infty \delta(\sum_j X_{j}(\phiv_k)>0)$; draw $K_\star$ new atoms 
 as 
$$\widetilde{G}_\star = \sum_{k=K^++1}^{K^++K_\star} r_{k}\delta_{\phiv_k},~r_k\sim\mbox{Gamma}\left(\frac{\gamma_0}{K_\star}, \frac{1}{c_0-\sum_j\ln(1-\tilde{p}_j)}\right),~\phiv_k\sim\mbox{Dir}(\eta,\ldots,\eta);$$
and set $K:=K^++K_\star$ as the total number of atoms to be used for the next iteration. 

%For the hGNBP Dirichlet-multinomial mixed-membership model shown in (\ref{eq:full model}), 
%We outline in Algorithm \ref{tab:algorithm} the blocked Gibbs sampler and the other two Gibbs samplers for the hGNBP. 
%outline the blocked Gibbs sampler in Algorithm \ref{tab:algorithm} 
For the hGNBP DMMM model, we present the update equations for $z_{ji}$ and $\ell_{vjk}$ below and those for all the other parameters in Appendix \ref{app:hGNBP_1}.\\
%defer their update equations to Appendix \ref{app:blockGibbs}. 
%\section{Blocked Gibbs sampling for hGNBP-NBFA}\label{app:blockGibbs}
%Each 
%blocked Gibbs sampling iteration for the hGNBP DMMM model in (\ref{eq:full model}) proceeds as follows. \\
\emph{\textbf{Sample $z_{ji}$.}}
Using the likelihood in (\ref{eq:DirMult_like}), % (\ref{eq:NBFA_like}), %with $\zv_j^{-i}= (z_{j1},\ldots,z_{j(i-1)},z_{j(i+1)},\ldots,z_{jn_j})$, % $\zv\backslash z_{ji}$, 
we have 
\beq
P(z_{ji}=k\,|\,x_{ji},\zv_{j}^{-i},\Phimat, \thetav_j)\propto n_{x_{ji}jk}^{-ji}+\phi_{x_{ji}k}\theta_{kj},~~~k\in\{1,\ldots,K\}. \label{eq:samplezji}
\eeq
\emph{\textbf{Sample $\ell_{vjk}$.}}
Since $n_{vjk}\sim\mbox{NB}(\phi_{vk}\theta_{kj},p_j)$ in the prior, as shown in Proposition \ref{lem:DirMult}, we can draw a corresponding latent count $\ell_{vjk}$ for each $n_{vjk}$ as
\beq
(\ell_{vjk}\,|\,-)\sim \mbox{CRT}(n_{vjk}, \phi_{vk}\theta_{kj}),\label{eq:ellvjk}
\eeq
where $\ell_{vjk}=0$ almost surely if $n_{vjk}=0$.

%This adaptive truncation strategy is in the same spirit of Algorithm 8 of \citet{NealDPM} to handle non-conjugate priors for Dirichlet process mixtures. 

%and $K^+:=\sum_{k=1}^\infty \delta(\sum_j X_{j}(\phiv_k)>0)$ is the number of active atoms.

 %\vspace{-4mm}
\subsection{Collapsed Gibbs sampling}\label{sec:collapseGibbs}
 %\vspace{-2mm}
 
 Let us denote $\bv\sim\mbox{CRP}(n,r)$ as an exchangeable random partition of the set $\{1,\ldots,n\}$, generated by assigning $n$ customers (samples) to $\ell$ random number of tables (exclusive and nonempty subsets) using a Chinese restaurant process (CRP) \citep{aldous:crp} with concentration parameter~$r$. The exchangeable partition probability function of $\bv$ under the CRP, also known as the Ewens sampling formula \citep{ewens1972sampling,DP_Mixture_Antoniak}, can be expressed as
 $P(\bv\,|\, n,r) = \frac{\Gamma(r)r^\ell}{\Gamma(n+r)}\prod_{t=1}^\ell \Gamma(n_t),$
 where $n_t=\sum_{i=1}^n\delta(b_i=t)$ is the size of the $t$th subset and $\ell$ is the number of subsets \citep{csp}.
 
One common strategy to improve convergence and mixing for multinomial mixed-membership models is to collapse the factors $\{\phiv_k\}_k$ and factor scores $\{\thetav_{j}\}_j$ in the sampler \citep{FindSciTopic,newman2009distributed}. To apply this strategy to the DMMM model, we first need to transform the likelihood in (\ref{eq:NBFA_like1}) to make it amenable to marginalization. Using an analogy similar to that for the Chinese restaurant franchise of \citet{HDP}, if we consider $z_{ji}$ as the index of the ``dish'' that the $i$th ``customer'' in the $j$th ``restaurant'' takes, then, to make the likelihood in (\ref{eq:NBFA_like1}) become fully factorized, we may introduce $b_{ji}$ as the index of the table at which this customer is seated. The following proposition, whose proof is provided in Appendix A, reveals how the CRP can be related to the Dirichlet-multinomial and NB distributions, and shows how to introduce auxiliary variables to make the likelihood of $\zv\sim\mbox{DirCat}(n,r_1,\ldots,r_K)$, as shown in~(\ref{eq:like_DirCat}), become fully factorized.
%The proof for the Proposition is provided in Appendix \ref{app:proof}. 

\begin{prop} %[Chinese restaurant process and Dirichlet-multinomial and negative binomial distributions]
\label{lem:CRP_DirCat}
Given the sample length $n$ (number of customers) and $\rv=(r_1,\ldots,r_K)$, 
the joint distribution of the ``table'' indices $\bv=(b_1,\ldots,b_n)$ and ``dish'' indices $\zv=(z_1,\ldots,z_n)$ %and $p$ 
in
\beq
\{b_i\}_{i:z_i=k}\sim\emph{\mbox{CRP}}(n_{k},r_k),~\zv\sim\emph{\mbox{DirCat}}(n,r_1,\ldots,r_K), \notag%~n\sim\emph{\mbox{NB}}(r,p)
\eeq
is the same as that in 
\beq
z_i = s_{b_i},~ s_t\sim\emph{\mbox{Cat}}(r_1/r_{\cdotv},\ldots,r_K/r_{\cdotv}),~\bv\sim\emph{\mbox{CRP}}(n,r_{\cdotv}),\notag %~n\sim\emph{\mbox{NB}}(r,p),
\eeq
with PMF
%which can be expressed as
\begin{align}
P(\bv,\zv\,|\,n,\rv) =\frac{\Gamma(r_{\cdotv})}{\Gamma(n+r_{\cdotv})}\prod_{k=1}^{K} \left\{ r_k^{\ell_k}\prod_{t=1}^{\ell_k}\Gamma(n_{kt}) \right\},\notag
\end{align}
where $\ell_k $ is the number of unique indices in $\{b_i\}_{i:z_i=k}$, $\ell_{\cdotv}=\sum_{k=1}^K \ell_k$ is the total number of nonempty tables, $t=1,\ldots,\ell_{\cdotv}$, and $n_{kt} = \sum_{i=1}^n \delta(b_i=t,z_i=k)$ is the number of customers that sit at table $t$ and take dish $k$.

If we further randomize the sample length as
$$n\sim\emph{\mbox{NB}}(r_{\cdotv},p),$$
then we have 
 % $\ell$ is the number of unique partitions in $\bv$
%\end{lem}
%\begin{lem}
%With $\bv=(b_1,\ldots,b_n)$, $\zv=(z_1,\ldots,z_n)$, 
%The joint distribution of $\bv$, $\zv$, and $n$ given $\rv$ and $p$ in
%\beq
%\{b_i\}_{i:z_i=k}\sim\emph{\mbox{CRP}}(n_{k},r_k),~\zv\sim\emph{\mbox{DirCat}}%(\thetav),~\thetav\sim %\zv\sim
%%\emph{\mbox{Dir}}
%(n, r_1,\ldots,r_K),~n\sim\emph{\mbox{NB}}(r,p) \notag
%\eeq
%is the same as that in 
%\beq
%z_i = s_{b_i},~ s_t\sim\emph{\mbox{Cat}}(r_1/r,\ldots,r_K/r),~\bv\sim\emph{\mbox{CRP}}(n,r),~n\sim\emph{\mbox{NB}}(r,p), \notag
%\eeq
the PMF for the joint distribution of $\bv$, $\zv$, and $n$ given $\rv$ and $p$ in a fully factorized form as
\begin{align}
P(\bv,\zv,n\,|\,\rv,p) &= \frac{1}{n!}\prod_{k=1}^{K} \left\{ r_k^{\ell_k}(1-p)^{r_k}p^{n_k}\prod_{t=1}^{\ell_k}\Gamma(n_{kt}) \right\},\notag
\end{align}
 % $\ell$ is the number of unique partitions in $\bv$
which, with appropriate combinatorial analysis, can be mapped to
%summing over the sets %$\mathcal{D}_{\nv,\ellv}=(\mathcal{D}_{n_k,\ell_k}$, where 
%$\mathcal{D}_{n_k,\ell_k}=\{(n_{k1},\ldots,n_{k\ell_k}): n_{kt}\ge 1\text{ and }\sum_{t=1}^{\ell_k} n_{kt} = n_k\}$ for all $k\in\{1,\ldots,K\}$,
%becomes the same as 
the PMF of the joint distribution of $\nv=(n_1,\ldots,n_k)$, $\lv=(l_1,\ldots,\ell_k)$, and $n$ given $\rv$ and $p$ in
\beq
n=\sum_{k=1}^Kn_k,~n_k=\sum_{t=1}^{\ell_k} n_{kt}, ~n_{kt}\sim\emph{\mbox{Logarithmic}}(p), ~\ell_k\sim\emph{\mbox{Pois}}[-r_k\ln(1-p)]. \label{eq:PoissonLogBig}
\eeq
\end{prop}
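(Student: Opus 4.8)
The plan is to prove the three claims in sequence, each by directly evaluating the relevant PMFs and checking that they coincide.

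For the first claim I would compute the joint PMF of $(\bv,\zv)$ under each of the two constructions and verify that both equal the displayed formula. Under the first construction I multiply the $\mbox{DirCat}$ likelihood of $\zv$ from (\ref{eq:like_DirCat}) by the product over dishes $k$ of the Ewens formula for the within-dish partition $\{b_i\}_{i:z_i=k}\sim\mbox{CRP}(n_k,r_k)$. The Ewens factor contributes $\frac{\Gamma(r_k)r_k^{\ell_k}}{\Gamma(n_k+r_k)}\prod_{t=1}^{\ell_k}\Gamma(n_{kt})$, whose $\Gamma(r_k)$ and $\Gamma(n_k+r_k)$ cancel exactly against the corresponding factors $\frac{\Gamma(n_k+r_k)}{\Gamma(r_k)}$ in the DirCat PMF, leaving $\frac{\Gamma(r_{\cdotv})}{\Gamma(n+r_{\cdotv})}\prod_k\big(r_k^{\ell_k}\prod_t\Gamma(n_{kt})\big)$. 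Under the second construction I multiply the single Ewens formula for $\bv\sim\mbox{CRP}(n,r_{\cdotv})$, which carries a factor $r_{\cdotv}^{\ell_{\cdotv}}$, by $\prod_{t=1}^{\ell_{\cdotv}}(r_{s_t}/r_{\cdotv})=r_{\cdotv}^{-\ell_{\cdotv}}\prod_k r_k^{\ell_k}$ coming from the $\ell_{\cdotv}=\sum_k\ell_k$ independent dish draws $s_t$; the powers of $r_{\cdotv}$ cancel and, since each table serves a single dish, $\prod_{t=1}^{\ell_{\cdotv}}\Gamma(m_t)=\prod_k\prod_t\Gamma(n_{kt})$, giving the same expression. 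The point needing care is the bookkeeping of table labels: I would treat $(\bv,\zv)$ as a set partition of the customers with a dish label on each block, so that the formula, depending only on the statistics $\{n_{kt}\}$, is manifestly invariant to relabeling, and the deterministic map $z_i=s_{b_i}$ makes the $s_t$ uniquely recoverable from $(\bv,\zv)$, so that no summation over $s$ is needed.

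The second claim is immediate: I multiply the Part-one PMF by the NB PMF $f_N(n\,|\,r_{\cdotv},p)=\frac{\Gamma(n+r_{\cdotv})}{n!\,\Gamma(r_{\cdotv})}p^n(1-p)^{r_{\cdotv}}$. The factors $\Gamma(r_{\cdotv})$ and $\Gamma(n+r_{\cdotv})$ cancel, and writing $p^n=\prod_k p^{n_k}$ (using $n=\sum_k n_k$) and $(1-p)^{r_{\cdotv}}=\prod_k(1-p)^{r_k}$ distributes the remaining factors across the $K$ dishes to yield the stated fully factorized form.

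The third claim is where the real work lies, and the key is the combinatorial passage from the labeled seating $(\bv,\zv)$ to the counts $(\nv,\lv,n)$. I would compute $P(\nv,\lv,n\,|\,\rv,p)$ by summing the Part-two PMF over all seatings with the prescribed statistics: there are $\frac{n!}{\prod_k n_k!}$ ways to assign customers to dishes, and for each dish the sum, over set partitions of its $n_k$ customers into $\ell_k$ blocks, of the weight $\prod_t\Gamma(n_{kt})=\prod_t(n_{kt}-1)!$ equals the unsigned Stirling number of the first kind $|s(n_k,\ell_k)|$ (since choosing a block and a cyclic order on it is the same as choosing a permutation with $\ell_k$ cycles). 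After the $n!$ cancels this gives $P(\nv,\lv,n)=\prod_k\frac{r_k^{\ell_k}(1-p)^{r_k}p^{n_k}}{n_k!}|s(n_k,\ell_k)|$. Separately, from the generative process in (\ref{eq:PoissonLogBig}) I would compute the same PMF by summing over ordered compositions of $n_k$ into $\ell_k$ logarithmic parts; using $\ell_k\sim\mbox{Pois}(-r_k\ln(1-p))$ and the logarithmic PMF $\frac{p^{m}}{-m\ln(1-p)}$, the powers of $-\ln(1-p)$ cancel and there remains $\frac{r_k^{\ell_k}(1-p)^{r_k}p^{n_k}}{\ell_k!}\sum_{n_{k1}+\cdots+n_{k\ell_k}=n_k}\prod_t n_{kt}^{-1}$, and the identity $\sum_{\text{ordered compositions}}\prod_t n_{kt}^{-1}=\frac{\ell_k!}{n_k!}|s(n_k,\ell_k)|$ reduces this to exactly the same expression. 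Matching the two establishes the claim. The main obstacle is precisely this double-counting argument: keeping consistent conventions about labeled versus unlabeled tables so that the $\frac{1}{n!}$, the $\frac{1}{\ell_k!}$, the $(n_{kt}-1)!$ versus $n_{kt}^{-1}$ factors, and the Stirling numbers all reconcile. I expect the cleanest route is to route both sides through $|s(n_k,\ell_k)|$ rather than to attempt to match the micro-level table sizes $\{n_{kt}\}$ directly.
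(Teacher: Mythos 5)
Your proposal is correct and follows essentially the same route as the paper's proof: you compute and match the PMFs of the two seating constructions via Ewens-formula cancellations, fold in the NB length by the chain rule, and handle the many-to-one map to counts by summing $\prod_t \Gamma(n_{kt})$ over set partitions to produce $|s(n_k,\ell_k)|/n_k!$ per dish. The only (cosmetic) difference is in the final identification: the paper cites the known Poisson-logarithmic bivariate PMF in (\ref{eq:Po-log}), whereas you re-derive it by summing over ordered compositions of logarithmic variables and invoking the Stirling-number identity $\sum_{\mathcal{D}_{n_k,\ell_k}}\prod_t n_{kt}^{-1}=\frac{\ell_k!}{n_k!}|s(n_k,\ell_k)|$, which is a self-contained but equivalent step.
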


Using (\ref{eq:NBFA_like1}) and Proposition \ref{lem:CRP_DirCat}, introducing the auxiliary variables
\beq
\{b_{ji}\}_{i:\,x_{ji}=v,z_{ji}=k}\sim\mbox{CRP}(n_{vjk},\phi_{vk}\theta_{kj}), \label{eq:aug_bji}
\eeq
%are added
 we have
the joint likelihood %of $\bv_j$, $\zv_j$, $\xv_j$ and $n_j$
 for the DMMM model~as
\beq
%\frac{\prod_v\prod_k\left(n_{vjk}!\prod_{t=1}^{\ell_{vjk}}\Gamma(n_{vjkt})\right)}
P(\bv_j,\xv_j,\zv_j,n_j\,|\,\Phimat,\thetav_j,p_j)=\frac{1}{n_j!}\prod_v\prod_{k} \left\{{(\phi_{vk}\theta_{kj})^{\ell_{vjk}}}{ p_j^{n_{vjk}}(1-p_j)^{\phi_{vk}\theta_{kj}}} \prod_{t=1}^{\ell_{vjk}}\Gamma(n_{vjkt})\right\}, \label{eq:fullyaugmented}
\eeq
where $\ell_{vjk} $ is the number of unique indices in $\{b_{ji}\}_{i:\,x_{ji}=v,z_{ji}=k}$ and $n_{vjkt} =\sum_{i=1}^{n_j} \delta(x_{ji}=v,z_{ji}=k,b_{ji}=t)$.
As in Proposition \ref{lem:CRP_DirCat}, instead of first assigning the indices to factors using the Dirichlet-multinomial distributions and then assigning the indices with the same factors to tables, we may first assign the indices  to tables and then assign the tables to factors. Thus we have the following model %a model equivalent to the Dirichlet-multinomial mixed-membership model constituted of (\ref{eq:aug_bji}) and (\ref{eq:NBPF3_1})-(\ref{eq:NBPF4}) as 
%A unique property that makes the model even more interesting is that it can be represented as
%a topic model, which assigns all the words at the same table to a single topic, as
\begin{align}
&x_{ji} = w_{jz_{ji}},~
z_{ji} = s_{jb_{ji}}, 
~w_{js_{jt}}\sim\mbox{Cat}(\phiv_{s_{jt}}), ~
s_{jt}\sim\mbox{Cat} ({\thetav_{j}}/{\theta_{\cdotv j}}),\notag\\
&~~~~~~~~~~~~\bv_j\sim\mbox{CRP}(n_j,\theta_{\cdotv j}),~n_j\sim\mbox{NB}(\theta_{\cdotv j},p_j),\notag
\end{align}
whose likelihood $P(\bv_j,\xv_j,\zv_j,n_j\,|\,\Phimat,\thetav_j,p_j)$ %could be shown to be
is the same as the likelihood, as shown in (\ref{eq:fullyaugmented}), of the DMMM model constituted of (\ref{eq:NBPF3_0}) and (\ref{eq:NBPF4}) and augmented with (\ref{eq:aug_bji}).

We outline the collapsed Gibbs sampler for the hGNBP-NBFA in Algorithm \ref{tab:algorithm} and 
provide the derivation and update equations below.
%defer the derivation and update equations to Appendix \ref{app:collapseGibbs}. 
This collapsed sampling strategy marginalizes out both the factors $\{\phiv_k\}$ and factor scores $\{\thetav_{j}\}_j$, but at the expense of introducing an auxiliary variable $b_{ji}$ for each index $x_{ji}$. 
%Thus %until proved by experiments, 
% it is not clear whether the benefits of collapsing $\{\phiv_k\}_k$ and $\{\thetav_j\}_j$ would outweigh the disadvantages of introducing $\{\bv_j\}_j$. 
%Below we consider an alternative strategy that does not collapse %explicitly samples 
%$\{\phiv_k\}_k$ and $\{\thetav_j\}_j$, but completely removes the need of sampling the factor indices $\{\zv_{j}\}_j$.

%\section{Collapsed Gibbs sampling for hGNBP-NBFA}\label{app:collapseGibbs}

%
Marginalizing out $\Phimat$ and $\Thetamat$ from $\prod_j P(\bv_j,\xv_j,\zv_j,n_j\,|\,\Phimat,\thetav_j,p_j)$, we have\small
\begin{align}
&P(\{\bv_j,\xv_j,\zv_j,n_j\}_j\,|\,G,\pv) = ~
e^{ r_{\star} \sum_j \ln(1-\tilde p_j)}\left\{\prod_j p_j^{n_j}\frac{\prod_v\prod_k\left(\prod_{t=1}^{\ell_{vjk}}\Gamma(n_{vjkt})\right)}{n_j!}\right\}\notag\\
&\times\left\{\prod_{k:\,\ell_{\cdotv \cdotv k}>0} \frac{\Gamma(V\eta)}{\Gamma(\ell_{\cdotv\cdotv k}+V\eta)}\prod_{v=1}^V\frac{\Gamma(\ell_{v\cdotv k}+\eta)}{\Gamma(\eta)}\right\} %\notag\\
%\times 
\left\{
\prod_j\prod_{k:\,\ell_{\cdotv \cdotv k}>0} \frac{\Gamma(r_k+\ell_{\cdotv jk})}{\Gamma(r_k)} \frac{c_j^{r_k}}{[c_j-\ln(1-p_j)]^{r_k+\ell_{\cdotv jk}}}\right\},\label{eq:Like}
\end{align}\normalsize
where $\ell_{\cdotv \cdotv k} := \sum_j \ell_{\cdotv jk}$ and $r_\star :=\sum_{k:\,\ell_{\cdotv \cdotv k}=0} r_k$.\\
\emph{\textbf{Sample $z_{ji}$ and $b_{ji}$.}}
Using the likelihood in (\ref{eq:Like}), with $(K^+)^{-ji}$ representing the number of active atoms without considering $z_{ji}$, we have \small
\begin{align} 
&P(z_{ji}\!=\!k,b_{ji}\!=\!t\,|\,x_{ji},\zv^{-ji},\bv^{-ji},G)\! %\notag\\
\propto\!
\begin{cases} \footnotesize
\vspace{.15cm}
\displaystyle n_{x_{ji}jkt}^{-ji}
, \!\!\!&\! \emph{\mbox{if }} k\le (K^+)^{-ji}, t\le \ell_{x_{ji}jk}^{-ji} ; \\ 
%\frac{e_0+(K^{+})^{-ji}}{f_0+\psi(c+r_{\cdotv} ) - \psi(c)}\
 \footnotesize \displaystyle \frac{\ell_{x_{ji}\cdotv k}^{-ji}+\eta}{\ell_{\cdotv\cdotv k}^{-ji}+V\eta}\frac{r_k+\ell_{\cdotv jk}^{-ji}}{c_j-\ln(1-p_j)}, \!\!\!& \!\emph{\mbox{if }} k\le (K^+)^{-ji}, t= \ell_{x_{ji}jk}^{-ji}+1 ;\\
\vspace{.15cm}
 \footnotesize\displaystyle \frac{1}{V}\frac{r_{\star}}{c_j-\ln(1-p_j)},\!\!\!&\! \emph{\mbox{if }} k= (K^+)^{-ji}+1, t=1 ; \end{cases} \notag
\end{align} \normalsize
and if $k= (K^+)^{-ji}+1$ happens, similar to the direct assignment sampler for the hierarchical Dirichlet process \citep{HDP}, we draw $\beta\sim\mbox{Beta}(1,\gamma_0)$ and then let $r_k =\beta r_{\star} $ and $r_{\star} =(1-\beta) r_{\star}$. This is based on the stick-breaking representation of the Dirichlet process, $\widetilde{G}_\star\sim\mbox{DP}\left(\gamma_0,G_0/\gamma_0\right)$, whose product with an independent random variable $r_{\star}\sim\left(\gamma_0,[{c_0-\sum_j\ln(1-\tilde{p}_j)}]^{-1}\right)$ recovers the gamma process $G_\star\sim\Gamma\mbox{P}\left(G_0,[{c_0-\sum_j\ln(1-\tilde{p}_j)}]^{-1}\right)$.
Note that
instead of drawing both $z_{ji}$ and $b_{ji}$ at the same time, one may first draw $z_{ji}$ and then draw $b_{ji}$ given $z_{ji}$, or first draw $b_{ji}$ and then draw $z_{ji}$ given $b_{ji}$. %For example, one may first decide whether $k=(K^+)^{-ji}+1$; if $$ 
We describe how to sample the other parameters in Appendix~\ref{app:hGNBP_2}.

\subsection{Blocked Gibbs sampling under compound Poisson representation %without sampling the topic indices
}\label{sec:blockGibbs_1} % faster mixing} %under the compound Poisson representation} % of Negative Binomial Factor Analysis}

Examining both the blocked and collapsed Gibbs samplers presented Sections \ref{sec:blockGibbs} and \ref{sec:collapseGibbs}, respectively, and their sampling steps shown in Algorithm \ref{tab:algorithm}, 
one may notice that to obtain $n_{vjk}$ in each iteration, one has to go through all individual indices $x_{ji}$, for each of which the sampling of $z_{ji}$ from a multinomial distribution takes $O(K)$ computation. However, as it is the $\ell_{vjk}$ but not $n_{vjk}$ that are required for sampling %the posterior inference for 
all the other model parameters, one may naturally wonder whether the step of sampling $z_{ji}$ to obtain $n_{vjk}$ can be skipped. To answer that question, %with Lemma \ref{lem:CRP_DirCat} revealing how the Chinese restaurant process is related to the Dirichlet-multinomial and negative binomial distributions, 
we first introduce the following theorem, whose proof is provided in Appendix \ref{app:proof}.
%, and show that one may directly sample $\ell_{vj}$ and subsequently distribute each $\ell_{vj}$ into $\{\ell_{vjk}\}_k$. %, which may lead to a considerable saving in computation if the counts $n_{vj}$ are very large. 

\begin{thm}\label{cor:CRTMult}
%With $\bv=(b_1,\ldots,b_n)$, $\zv=(z_1,\ldots,z_n)$, 
Conditioning on $n$ and $\rv$, with $\{n_k\}_k$ marginalized out, the distribution of $\lv=(\ell_1,\ldots,\ell_k)$ %, $\nv=(n_1,\ldots,n_K)$,
 in
\beq
\ell_k\,|\,n_k\sim\emph{\mbox{CRT}}(n_{k},r_k),~\nv\,|\, n\sim\emph{\mbox{DirMult}}%(\thetav),~\thetav\sim %\zv\sim
%\emph{\mbox{Dir}}
(n, r_1,\ldots,r_K) \notag
\eeq
is the same as that in 
\beq
\ellv\,|\,\ell_{\cdotv}\sim\emph{\mbox{Mult}}(\ell_{\cdotv},r_1/r_{\cdotv},\ldots,r_K/r_{\cdotv}),~\ell_{\cdotv}\,|\, n\sim\emph{\mbox{CRT}}(n,r_{\cdotv}), \notag
\eeq
with PMF
$$
P(\ellv\,|\,n,r_1,\ldots,r_K)=\frac{\ell_{\cdotv}!}{\prod_{k=1}^K \ell_k!}\frac{\Gamma(r_{\cdotv})}{\Gamma(n+r_{\cdotv})}|s(n,\ell_{\cdotv})| \prod_{k=1}^{K} r_k^{\ell_k} .
$$
\end{thm}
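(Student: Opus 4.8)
The plan is to establish the stated PMF by a direct computation that integrates out the latent counts $\nv=(n_1,\ldots,n_K)$, and then to recognize the resulting expression as a product of a CRT and a multinomial PMF. First I would write the joint PMF of $(\ellv,\nv)$ given $n$ by multiplying the $K$ conditionally independent CRT factors by the Dirichlet-multinomial factor, using the CRT PMF $P(\ell_k\,|\,n_k,r_k)=\frac{\Gamma(r_k)}{\Gamma(n_k+r_k)}|s(n_k,\ell_k)|\,r_k^{\ell_k}$ and the Dirichlet-multinomial PMF defined just after \eqref{eq:like_DirCat}:
\[
P(\ellv,\nv\,|\,n,\rv)=\left[\prod_{k=1}^K \frac{\Gamma(r_k)}{\Gamma(n_k+r_k)}|s(n_k,\ell_k)|\,r_k^{\ell_k}\right]\frac{n!}{\prod_k n_k!}\frac{\Gamma(r_{\cdotv})}{\Gamma(n+r_{\cdotv})}\prod_k\frac{\Gamma(n_k+r_k)}{\Gamma(r_k)}.
\]
The immediate and crucial simplification is that the ratios $\Gamma(r_k)/\Gamma(n_k+r_k)$ supplied by the CRT factors cancel exactly against the factors $\Gamma(n_k+r_k)/\Gamma(r_k)$ in the Dirichlet-multinomial, leaving $P(\ellv,\nv\,|\,n,\rv)=\frac{n!}{\prod_k n_k!}\frac{\Gamma(r_{\cdotv})}{\Gamma(n+r_{\cdotv})}\prod_k |s(n_k,\ell_k)|\,r_k^{\ell_k}$, in which $\nv$ now enters only through the Stirling numbers.

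Next I would marginalize over all compositions $n_1+\cdots+n_K=n$, so that the entire content of the proof concentrates in the combinatorial sum $n!\sum_{n_1+\cdots+n_K=n}\prod_k |s(n_k,\ell_k)|/n_k!$. This is the step I expect to be the main obstacle, and I would handle it with exponential generating functions. Using the standard identity $\sum_{m\ge \ell}|s(m,\ell)|\,x^m/m!=\bigl(-\ln(1-x)\bigr)^{\ell}/\ell!$, the product over $k$ of the per-coordinate EGFs collapses to $\bigl(-\ln(1-x)\bigr)^{\ell_{\cdotv}}/\prod_k \ell_k!$, whose coefficient of $x^n/n!$ is, by the same identity read in reverse with total table count $\ell_{\cdotv}$, equal to $\frac{\ell_{\cdotv}!}{\prod_k \ell_k!}|s(n,\ell_{\cdotv})|$. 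By the EGF convolution rule this coefficient is precisely $n!\sum_{n_1+\cdots+n_K=n}\prod_k |s(n_k,\ell_k)|/n_k!$, which evaluates the sum and yields
\[
P(\ellv\,|\,n,\rv)=\frac{\ell_{\cdotv}!}{\prod_k \ell_k!}\frac{\Gamma(r_{\cdotv})}{\Gamma(n+r_{\cdotv})}|s(n,\ell_{\cdotv})|\prod_k r_k^{\ell_k},
\]
which is the asserted PMF.

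Finally, to match the alternative two-stage generative description, I would factor this PMF. Inserting the trivial identity $r_{\cdotv}^{\ell_{\cdotv}}/r_{\cdotv}^{\ell_{\cdotv}}=1$ and using $\ell_{\cdotv}=\sum_k \ell_k$, the expression splits as $\bigl[\frac{\Gamma(r_{\cdotv})}{\Gamma(n+r_{\cdotv})}|s(n,\ell_{\cdotv})|\,r_{\cdotv}^{\ell_{\cdotv}}\bigr]\bigl[\frac{\ell_{\cdotv}!}{\prod_k \ell_k!}\prod_k (r_k/r_{\cdotv})^{\ell_k}\bigr]$. The first bracket is exactly the PMF of $\ell_{\cdotv}\sim\mbox{CRT}(n,r_{\cdotv})$ and the second is the PMF of $\ellv\sim\mbox{Mult}(\ell_{\cdotv},r_1/r_{\cdotv},\ldots,r_K/r_{\cdotv})$, which establishes the claimed equality in distribution. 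The only genuine work lies in the Stirling-number convolution of the middle paragraph; the opening cancellation and the closing factorization are each a single line.
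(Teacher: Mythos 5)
Your proposal is correct, and its skeleton is the same as the paper's: form the joint PMF of $(\ellv,\nv)$ given $n$, cancel the ratios $\Gamma(r_k)/\Gamma(n_k+r_k)$ against the Dirichlet-multinomial factors, marginalize over the compositions of $n$, and finish by splitting off $r_{\cdotv}^{\ell_{\cdotv}}$ to recognize the product of a $\mbox{CRT}(n,r_{\cdotv})$ PMF and a $\mbox{Mult}(\ell_{\cdotv},r_1/r_{\cdotv},\ldots,r_K/r_{\cdotv})$ PMF. Where you genuinely diverge is in how the central convolution identity
$$n!\sum_{n_1+\cdots+n_K=n}\prod_{k=1}^K \frac{|s(n_k,\ell_k)|}{n_k!}=\frac{\ell_{\cdotv}!}{\prod_{k=1}^K \ell_k!}\,|s(n,\ell_{\cdotv})|$$
is established. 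The paper proves it combinatorially: it expands each $|s(n_k,\ell_k)|$ via the representation $|s(n,\ell)|=\frac{n!}{\ell!}\sum_{(n_1,\ldots,n_\ell)\in\mathcal{D}_{n,\ell}}\prod_t \frac{1}{n_t}$ introduced in Appendix B, and then observes that the double sum (over compositions of $n$ into $K$ nonnegative parts, and of each $n_k$ into $\ell_k$ positive parts) regroups bijectively into a single sum over compositions of $n$ into $\ell_{\cdotv}$ positive parts. You instead invoke the exponential generating function $\sum_{m\ge\ell}|s(m,\ell)|x^m/m!=\bigl(-\ln(1-x)\bigr)^{\ell}/\ell!$ together with the EGF product rule. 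Both arguments are valid and prove exactly the same identity; the EGF route is arguably cleaner and avoids the explicit bijective bookkeeping, leaning on a standard analytic fact, while the paper's route is self-contained given its own definition of the unsigned Stirling numbers and stays within the same compositions-of-integers language used in its proof of Proposition 2 (the table/dish construction), which makes the connection between the two results more transparent. Your EGF identity is also, implicitly, the compound-Poisson (logarithmic-series) structure that the paper exploits elsewhere, so the two proofs are two readings of the same underlying fact.
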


%
%For inference, one may augment the NBFA as
%\beq
%n_{vj}=\sum_{k=1}^K n_{vjk}, ~n_{vjk}\sim\sum_{t=1}^{\ell_{vjk}}Log(p_j),~\ell_{vjk} \sim\mbox{Pois}(-\phi_{vk}\theta_{kj}\ln(1- p_j)).
%\eeq 
%or for the Dirichlet-multinomial mixed-membership model, further introduce
%\beq
%\{b_{ji}\}_{i:x_{ji}=v,z_{ji}=k}\sim\mbox{CRP}(n_{vjk},\phi_{vk}\theta_{kj})
%\eeq
%and denote $\ell_{vjk}$ as the number of unique values in $\{b_{ji}\}_{i:x_{ji}=v,z_{ji}=k} $. The joint likelihood of $\bv_j$, $\zv_j$, $\xv_j$ and $n_j$ is
%\beq
%\frac{\prod_v\prod_k\left(n_{vjk}!\prod_{t=1}^{\ell_{vjk}}\Gamma(n_{vjkt})\right)}{n_j!}\prod_v\prod_{k} \frac{(\phi_{vk}\theta_{kj})^{\ell_{vjk}}}{n_{vjk}!} p_j^{n_{vjk}}(1-p_j)^{\phi_{vk}\theta_{kj}}
%\eeq
%

%\beq
%\frac{\prod_v\prod_k\left(n_{vjk}!\prod_{t=1}^{\ell_{vjk}}\Gamma(n_{vjkt})\right)}{n_j!}\prod_v\prod_{k} \frac{(\phi_{vk}\theta_{kj})^{\ell_{vjk}}}{n_{vjk}!} p_j^{n_{vjk}}(1-p_j)^{\phi_{vk}\theta_{kj}}
%\eeq
Rather than representing NBFA in (\ref{eq:NBFA1_2}) as the DMMM model in (\ref{eq:full model}), we may directly consider its compound Poisson representation as 
%With the difference up to a combinatorial coefficient that is unrelated to latent model parameters, we have
% a compound Poisson factor analysis model as
\beq
n_{vj}=\sum_{t=1}^{\ell_{vj}} n_{vjt},~ n_{vjt}\sim\mbox{Logarithmic}(p_j), ~\ell_{vj} %=\sum_{k=1}^K \ell_{vjk}, ~\ell_{vjk}
\sim\mbox{Pois}\left[-\sum_k\phi_{vk}\theta_{kj} \ln(1-p_j)\right]. \label{eq:NBFA0_2}
\eeq
%or with $n_{vjt}$ marginalized out, as
%$
%n_{vj}\sim\mbox{SumLog}(\ell_{vj},p_j), ~\ell_{vj} %=\sum_{k=1}^K \ell_{vjk}, ~\ell_{vjk}
%\sim\mbox{Pois}\left[-\sum_k\phi_{vk}\theta_{kj} \ln(1-p_j)\right]. \label{eq:NBFA0_1}
%$
Under this representation, 
%Rather than explicitly assigning each word to a topic, 
we may first infer $\ell_{vj}$ for each $n_{vj}$ and then factorize the latent count matrix $\{\ell_{vj}\}_{v,j}$ under the Poisson likelihood, as described below. 
%
%
%\beq
%\left\{\prod_v \,|\,s(n_{vj},\ell_{vj})\,|\,\right\}\left\{\prod_{v} \frac{(\sum_k \phi_{vk}\theta_{kj})^{\ell_{vj}}}{n_{vj}!} p_j^{n_{vj}}(1-p_j)^{\sum_k\phi_{vk}\theta_{kj}}\right\}
%\eeq
%
%\beq
%\left\{\prod_v\prod_{k} \,|\,s(n_{vjk},\ell_{vjk})\,|\,\right\}\left\{ \prod_v\prod_{k} \frac{(\phi_{vk}\theta_{kj})^{\ell_{vjk}}}{n_{vjk}!} p_j^{n_{vjk}}(1-p_j)^{\phi_{vk}\theta_{kj}}\right\}
%\eeq
%
%
%This new representation could have much lower computational complexity as it first infers $\ell_{vj}$ and then factorize this latent count matrix under the Poisson likelihood. 
%

Rather than first sampling $z_{ji}$ (and hence $n_{vjk}$) using (\ref{eq:samplezji}) and then sampling $\ell_{vjk}$ using (\ref{eq:ellvjk}), %as in %Appendix \ref{app:DirMultInference}, 
with Theorem %\ref{lem:CRP_DirCat} %and 
\ref{cor:CRTMult} 
and the compound Poisson representation in (\ref{eq:NBFA0_2}), we can skip sampling $z_{ji}$ and 
%using the following inference procedure. \\
%\emph{\textbf{sample $\ell_{vjk}$.}} We 
directly sample $\ell_{vjk}$ as
\begin{align}
(\ell_{vj}\,|\,-)&\sim \mbox{CRT}\left(n_{vj}, \sum_k \phi_{vk}\theta_{kj}\right),\label{eq:ellvjk_CRT1}\\
[(\ell_{vj1},\ldots,\ell_{vjK})\,|\,-]&\sim \mbox{Mult}\left(\ell_{vj}, \frac{\phi_{v1}\theta_{1j}}{\sum_k \phi_{vk}\theta_{kj}},\ldots,\frac{\phi_{vK}\theta_{kj}}{\sum_k \phi_{vk}\theta_{kj}}\right).\label{eq:ellvjk_CRT2}
\end{align}
All the other model parameters can be sampled in the same way as they are sampled in %Section~\ref{sec:blockGibbs}
the regular blocked Gibbs sampler, with
 (\ref{eq:sampletpj})-(\ref{eq:sampletheta}). % of Appendix~\ref{app:DirMultInference}. 

Note that $\ell_{vj}=n_{vj}$ a.s. if $n_{vj}\in\{0,1\}$, $\ell_{vj}\le n_{vj}$ a.s. if $n_{j}\ge 2$, and %in expectation, we have 
$$\textstyle\E[\ell_{vj}\,|\,n_{vj}, \phi_{vk}\theta_{kj}] =\left( \sum_k \phi_{vk}\theta_{kj}\right) \left[\psi(n_{vj}+\phi_{vk}\theta_{kj} ) - \psi(\phi_{vk}\theta_{kj} )\right],$$ where $\psi(x)=\frac{d}{dx}\ln\Gamma(x)$ is the digamma function. Thus $\E[\ell_{vj}] \approx ( \sum_k \phi_{vk}\theta_{kj}) \ln(n_{vj}) $ when $n_{vj}$ is large. %From the modeling perspective,
Clearly, this new sampling strategy significantly impacts $n_{vj}$ that are large. %are much more effective for large counts.
%is mainly affecting the counts with % terms of a document with 
%$n_{vj}\ge 2$. %whose word frequency counts are larger than one. %From the computational perspective, i
In comparison to the regular blocked Gibbs sampler %discussed in Section \ref{sec:blockGibbs} and 
described in detail in Section \ref{sec:blockGibbs}, % Appendix~\ref{app:DirMultInference}, 
the compound Poisson based blocked Gibbs sampler essentially replaces (\ref{eq:samplezji})-(\ref{eq:ellvjk}) of the regular one %blocked Gibbs sampler
 with (\ref{eq:ellvjk_CRT1})-(\ref{eq:ellvjk_CRT2}), which can be readily justified by Theorem~\ref{cor:CRTMult}.
 Instead of directly assigning the $n_{vj}$ covariate indices to the $K$ latent factors, now we only need to assign them to $ O[\ln(n_{vj}+1)]$ tables and directly assign these tables to latent factors. As assigning an index to a table can be accomplished by just a single Bernoulli random draw, this new sampling procedure reduces the computational complexity for sampling all $\ell_{vjk}$
from $O(n_{\cdotv\cdotv}K)$ to %$\sum_{v}\sum_j \{O(n_{vj}) + O[\ln(n_{vj}+1) K]\}\approx 
$O\big[\sum_{v}\sum_j\ln(n_{vj}+1)K\big]$, which could lead to a considerable saving in computation 
for long samples where %and hence a large number of large counts
large counts $n_{vj}$ are abundant. 
 %if a noticible proportion of $n_{vj}$ are large.
%and follow the Dirichlet-multinomial mixed-membership model to 
%sample all the other parameters. 

This new sampling algorithm not only is less expensive in computation, but also may converge %and mix 
much faster %and mix much faster 
as there is no need to worry about the dependencies between the MCMC samples for the factor indices $z_{ji}$, which are not used at all under the compound Poisson representation. Note that the collapsed Gibbs sampler in Section \ref{sec:collapseGibbs} removes the need to sample $\Phimat$ and $\Thetamat$ at the expense of having to sample $z_{ji}$ and augment a $b_{ij}$ for each $z_{ij}$. We will show in Appendix \ref{app:sampler} that maintaining  $\Phimat$ and $\Thetamat$ but removing the need to sample $z_{ji}$ leads to a more effective sampler. 

 %sample dependences between the topic indices $z_{ji}$ and $z_{ji'}$ for $i\neq i'$.
 %\vspace{-5mm}

 %We compare different samplers for the hGNBP-NBFA and provide justifications for choosing these samplers
\subsection{Model comparison}

We also consider NBFA based on the GNBP, in which each of the $J$ samples is assigned with a sample-specific NB process and a globally shared gamma process is mixed with all the $J$ NB processes. Given its connection to DCMLDA, as revealed in Section \ref{sec:3.3}, we call this nonparametric Bayesian model as the GNBP-DCMLDA, %and We show the GNBP-NBFA, which can be considered as a nonparametric Bayesian extension of DCMLDA, 
which is shown to be restrictive in that although each sample has its own factor scores under the corresponding sample-specific factors, all the samples are enforced to have the same factor scores under the same set of globally shared factors. % shared by all documents. 
 By contrast, by modeling not only the burstiness of the covariates, but also that of the factors, the hGNBP-NBFA provides sample-specific factor scores under the same set of shared factors, making it suitable for extracting low-dimensional latent representations for high-dimensional covariate count vectors.%NB likelihood. 

We describe in detail in Appendix \ref{app:GNBP} how to use the gamma-negative binomial process (GNBP) as a nonparametric Bayesian prior for both PFA and DCMLDA. %We provide these details for the GNBP-PFA and GNBP-DCMLDA 
 %\vspace{-3mm}
In the prior, for PFA, we have $n_{vj}\sim\mbox{Pois}(\sum_{k}\phi_{vk}\theta_{kj})$, whereas for NBFA, we have $n_{vj}\sim\mbox{NB}(\sum_{k}\phi_{vk}\theta_{kj}, p_j)$, which can be augmented as
$$
\textstyle 
n_{vj}\sim\mbox{Pois}(\lambda_{vj}),~\lambda_{vj}\sim\mbox{Gamma}\left[\sum_k\phi_{vk}\theta_{kj}, {p_j}/({1-p_j})\right].
$$
Thus we have 
$
\textstyle (\lambda_{vj}\,|\,n_{vj}, \Phimat,\thetav_j,p_j) \sim\mbox{Gamma}\left(n_{vj}+\sum_k\phi_{vk}\theta_{kj},~ p_j\right)
$
for NBFA. Similarly, we have $
(\lambda_{vj}\,|-)
%\,n_{vj}, \Phimat,\rv,p_j) 
\sim\mbox{Gamma}\left(n_{vj}+\sum_k\phi_{vk}r_{k},~ p_j\right)
$ for the GNBP-DCMLDA.
To better understand the similarities and differences between the GNBP-PFA, GNBP-DCMLDA, and hGNBP-NBFA, 
in Table \ref{tab:NBP} we compare their Poisson rates of $n_{vj}$, estimated with the factors and factor scores in a single MCMC sample, and several other important model properties.

\begin{table}[!t]
\begin{footnotesize}
\begin{center}\caption{Comparisons of %the distribution of $n_{\cdotv jk}$ for
 the GNBP-PFA, GNBP-DCMLDA, and hGNBP-NBFA.}\label{tab:NBP}
%\vspace{-5mm}
\begin{tabular}{ C{7.5pc} | C{6.5pc} C{6.5pc} C{7pc} }%p{9pc} | p{7pc} | p{7pc}| p{8pc} }
\toprule
%\hline
%Model properties
 & GNBP-PFA (multinomial mixed-membership model) & GNBP-DCMLDA &hGNBP-NBFA 
(Dirichlet-multinomial mixed-membership model)\\
\midrule
 Estimated Poisson rate of $n_{vj}$ given the factors and factor scores & $~~~~ \sum_k \phi_{vk}\theta_{kj} $ 
 &$(n_{vj}+\sum_k \phi_{vk}r_k)p_j $ 
 & $ (n_{vj}+\sum_k \phi_{vk}\theta_{kj})p_j $ \\
\midrule
%\hline
Factor analysis 
& $\nv_{j}\sim\mbox{Pois}(\Phimat\thetav_{j})$
 & $\nv_{j}\sim\mbox{NB}(\Phimat\rv, \,p_j)$
 & $\nv_{j}\sim\mbox{NB}(\Phimat\thetav_{j}, \,p_j)$ \\
\midrule
Mixed-membership modeling &  $z_{ji}\sim\mbox{Cat}({\thetav_j}/{\theta_{\cdotv j}})$, $x_{ji}\sim\mbox{Cat}(\phiv_{z_{ji}})$
 &  $\thetav_j\sim\mbox{Dir}(\rv) $, $z_{ji}\sim\mbox{Cat}(\thetav_j)$, $\phiv^{[j]}_k\sim\mbox{Dir}(\phiv_k r_k)$, 
 $x_{ji}\sim\mbox{Cat}(\phiv^{[j]}_{z_{ji}})$ & $\thetav^{[j]}\sim\mbox{Dir}(\thetav_j) $,  $z_{ji}\sim\mbox{Cat}(\thetav^{[j]}),$ $
\phiv^{[j]}_k\sim\mbox{Dir}(\phiv_k \theta_{kj}),$ $x_{ji}\sim\mbox{Cat}(\phiv^{[j]}_{z_{ji}})$ \\
\midrule
Distribution of $X_j = $ $\sum_{k=1}^\infty n_{\cdotv jk} \delta_{\phiv_k}$ given $G$ & $X_j\,|\,G, p_j\sim \mbox{NBP}(G,p_j) $ & $X_j\,|\,G, p_j\sim\mbox{NBP}(G,p_j) $ & $X_j\,|\,G, c_j,p_j\sim\mbox{GNBP}(G,c_j,p_j)$\\
\midrule
Variance-to-mean ratio of $n_{\cdotv jk}$ given $c_j$~and~$p_j$ & $\displaystyle\frac{1}{1-p_j}$& $\displaystyle\frac{1}{1-p_j}$ & $\displaystyle\frac{1}{1-p_j}+\frac{p_j}{c_j(1-p_j)^2}$\\
%\hline
 \bottomrule
%\hline
\end{tabular}\vspace{-6mm}
\end{center}
\end{footnotesize}
\end{table}%

To estimate the latent Poisson rates for each count $n_{vj}$ and hence the smoothed normalized covariate frequencies, it is clear from the second row of Table~\ref{tab:NBP} that PFA (the multinomial mixed-membership model) solely relies on the inferred factors $\{\phiv_k\}$ and factor scores $\{\theta_{kj}\}$, DCMLDA adds a sample-invariant smoothing parameter, calculated as $\sum_{v}\phi_{vk}r_k$, into the observed count $n_{vj}$ and weights that sum by a sample-specific probability parameter $p_j$, whereas NBFA (the DMMM model) adds a sample-specific smoothing parameter, calculated as $\sum_{v}\phi_{vk}\theta_{kj}$, into the observed count $n_{vj}$ and weights that sum by $p_j$. Thus PFA represents an extreme that the observed counts are used to infer the factors and factor scores but are not used to directly estimate the Poisson rates; DCMLDA represents another extreme that the covariate frequencies in all samples are indiscriminately smoothed by the same set of smoothing parameters; whereas NBFA combines the observed counts with the inferred sample-specific smoothing parameters. This unique working mechanism also makes NBFA have reduced hyper-parameter sensitivity, as will be demonstrated with experiments. %less sensitive to 

\section{Example Results}\label{sec:results}

%Our experiments show that the number of factors inferred by NBFA (the Dirichlet multinomial mixed-membership model) based on the hGNBP could be much smaller than that inferred by PFA (the multinomial topic model) based on the GNBP. 
%
%Note that if we choose to use a finite and discrete base measure $G_0=\sum_{k=1}^K r_k\delta_{\phiv_k}$, then %all the atoms have been used, which may happen during the beginning of the iteration, then we 
%we sample $\gamma_0$ as
%\begin{align}
%&(\tilde{\tilde{\ell}}_k\,|\,-)\sim\mbox{CRT}(\tilde{\ell}_{\cdotv k}, \gamma_0/K),~~(\gamma_0\,|\,-)\sim \mbox{Gamma}\left(a_0+\tilde{\tilde{\ell}}_{\cdotv},\frac{1}{b_0-\ln(1-\tilde{\tilde{p}})}\right)\notag
%\end{align}
%and sample $r_k$ as
%$$
%(r_k\,|\,-)\sim \mbox{Gamma}\left(\gamma_0/K + \tilde{\ell}_{\cdotv k},\frac{1}{c_0-\sum_j\ln(1-\tilde{p}_j)}\right).
% $$

We apply the proposed models to factorize covariate-sample count matrices, %for topic modeling of a text dataset, each document of which 
each column of which
is represented as a $V$ dimensional covariate-frequency count vector, where $V$ is the number of unique covariates. We set the hyper-parameters as $a_0=b_0=0.01$ and $e_0=f_0=1$. %The models are not sensitive to the setting of these hyper-parameters.
We consider the JACM ({\footnotesize %\footnote{
\href{http://www.cs.princeton.edu/~blei/downloads/}{http://www.cs.princeton.edu/$\sim$blei/downloads/}}), 
Psychological Review (PsyReview, {\footnotesize\href{http://psiexp.ss.uci.edu/research/programs_data/toolbox.htm}{http://psiexp.ss.uci.edu/research/programs$\_$data/toolbox.htm}}), 
and NIPS12 ({\footnotesize\href{http://www.cs.nyu.edu/~roweis/data.html}{http://www.cs.nyu.edu/$\sim$roweis/data.html}}) datasets, %restricting the vocabulary of each dataset to the terms 
choosing covariates that occur in five or more samples. In addition, we consider the 20newsgroups dataset ({\footnotesize\href{http://qwone.com/~jason/20Newsgroups/}{http://qwone.com/$\sim$jason/20Newsgroups/})}, 
consisting of 18,774 samples from 20 different categories. %with %a vocabulary of size 
%%$V= $ 61,188;
%%The 20newsgroups dataset 
It is partitioned into a training set of 11,269 samples and a testing set of 7,505 ones that were collected at later times. 
We remove a standard list of stopwords and covariates that appear less than five times. % for the 20newsgroups dataset.
%We summarize all three datasets in Table~\ref{tab:data}.
%We choose the PsyReview, JACM, and NIPS12 copora as three typical examples. 
As summarized in Table \ref{tab:data} of Appendix~\ref{app:plot}, for the PysReview and JACM datasets, each of whose sample corresponds to the abstract of a research paper, the average sample lengths are only about 56 and 127, respectively. 
%, each document of which the average length of a PsyReview (JACM) document, which includes the words of the abstract of a research paper, is only about 56 (127), 
By contrast, %the average length of 
a NIPS12 sample that includes the words of all sections of a research paper is in average more than ten times longer. By varying the percentage of covariate {indices} randomly selected from each sample for training, we construct a set of covariate-sample matrices with a large variation on the average lengths of samples, which will be used to help make comparison between different models. 
%
% we have $V=33,420$ and $2,351,800$ total covariate tokens. % for this dataset. 
%
%
%The JACM dataset includes 536 documents, with $V=1539$ unique terms and 68,055 total word counts. 
%The PsyReview dataset includes 1281 abstracts of Psychological Review from 1967 to 2003, 
%with $V=2566$ and 71,279 total word counts. 
%%We consider the NIPS12 dataset\footnote{\href{http://www.cs.nyu.edu/~roweis/data.html}{http://www.cs.nyu.edu/$\sim$roweis/data.html}}, %restricting the vocabulary to terms that occur in five or more documents. This dataset
%The NIPS12 dataset includes 1740 documents, with $V=13,649$ %unique terms 
%and 2,301,375 total word counts. 
%%We consider the NIPS12\footnote{\href{http://www.cs.nyu.edu/~roweis/data.html}{http://www.cs.nyu.edu/$\sim$roweis/data.html}} dataset, which includes 1740 documents, with $V=13,649$ unique terms in the vocabulary and 2,301,375 total covariate tokens. We also consider 
%
 Depending on applications, we either treat the Dirichlet smoothing parameter $\eta$ as a tuning parameter, or sample it via data augmentation, as described in Appendix \ref{app:eta}. %, for both the hGNBP-NBFA and GNBP-DCMLDA. 
 
 To learn the factors in all the following experiments, we use the compound Poisson representation based blocked Gibbs sampler for both the hGNBP-NBFA and GNBP-NBFA, and use collapsed Gibbs sampling for the GNBP-PFA. We compare different samplers for the hGNBP-NBFA and provide justifications for choosing these samplers in Appendix \ref{app:sampler}.

%
%
%\begin{table}[h]
%\begin{small}
%\caption{Datasets used in experiments.}\label{tab:data}
%\vspace{-4mm}
%\begin{center}
%%\begin{tabular}{c R{6pc} R{6pc} R{6pc} R{6pc} }
%\begin{tabular}{l | r r r r}
%\toprule
% &\,\,\,\,\,\,\,\,\,\,JACM &PsyReview & \,\,NIPS12 \\
%\midrule
%Vocabulary size $V$& 1,539 & 2,566 & 13,649 \\
%Number of documents& 536 & 1,281 & 1,740 \\
%Total number of words & 68,055 & 71,279 & 2,301,375 \\
%Average document length & 127 &56 & 1,323\\
%\bottomrule
%\end{tabular}
%\end{center}
%\end{small}
%\end{table}%

 %\vspace{-4mm}

\subsection{Prediction of heldout covariate {indices} }

\subsubsection{Experimental settings}

 %This dataset has already been preprocessed by removing stopwords and rare words. 

%We treat it as a tuning parameter, which strongly influences the number of factors to be inferred by any of the considered nonparametric Bayesian models. 

% determines the number of 

%In each iteration, we randomize the ordering of the words. 
We randomly choose %20\%, 
%50\% %, 60\%, or 80\% 
a certain percentage
of the covariate {indices} in each sample as training, and use the remaining ones to calculate heldout perplexity. As shown in \citet{NBP2012}, the GNBP-PFA performs similarly to the hierarchical Dirichlet process LDA of \citet{HDP} and outperforms a wide array of discrete latent variable models, thus we choose it %as a strong baseline %typical multinomial topic model
 for comparison. To demonstrate the importance of modeling both the burstiness of the covariates and that of the factors, we also make comparison to the GNBP-DCMLDA that considers only covariate burstiness. 
Since the inferred number of factors and hence the performance % of a nonparametric Bayesian 
often depends on the Dirichlet smoothing parameter $\eta$, we set $\eta$ as %as a tuning parameter to be selected
$0.005$, $0.02$, $0.01$, $0.05$, $0.10$, $0.25$, or $0.50$. 
 We vary both the training percentage and $\eta$ to examine how the average sample length and the value of $\eta$ influence the behaviors of the GNBP-PFA, GNBP-DCMLDA, and hGNBP-NBFA and impact their performance relative to each other. 
%We will %illustrate these instructive distinctions by 
%examine the performance of all three algorithms %as a function of
 % by controlling both $\eta$ and the average document length %percentage of covariate {indices} used for training 
 %at various values.

For all three algorithms, %we consider five random trials for each unique combination of $\eta$ and the training percentage. 
we initialize the number of factors as $K=400$ and consider 5000 Gibbs sampling iterations, % for each random trial, 
with the first 2500 samples discarded and every sample per
five iterations collected afterwards. 
 For each collected sample, for the GNBP-PFA, we draw the factors $(\phiv_k\,|\,-)\sim\mbox{Dir}(\eta+n_{1\cdotv k},\ldots,\eta+n_{V\cdotv k})$ and factor scores $(\theta_{kj}\,|\,-)\sim\mbox{Gamma}(n_{jk}+r_k,p_j)$ for $k\in\{1,\ldots,K^++K_\star\}$, where we let $n_{v\cdotv k}=0$ for all $k>K^+$; 
%topic proportions $(\thetav_k\,|\,-)\sim\mbox{Dir}(n_{j1}+\alpha\tilde{r}_1,\ldots,n_{jK_J}+\alpha\tilde{r}_{K_J})$ %, \alpha\tilde{r}_*)$ 
%for the HDP, 
%to compute %the expected posterior Poisson rates as 
%$$\textstyle \lambda_{vj}=\sum_{k=1}^{K^++K_\star}\phi_{vk}\theta_{kj};$$
for the GNBP-DCMLDA, we draw the factors $(\phiv_k\,|\,-)\sim\mbox{Dir}(\eta+\ell_{1\cdotv k},\ldots,\eta+\ell_{V\cdotv k})$ and the weights $(r_{k}\,|\,-)\sim\mbox{Gamma}(\ell_{\cdotv \cdotv k},1/[c_0-\sum_j\ln(1-p_j)])$, where we let $\ell_{v\cdotv k}=0$ and $\ell_{\cdotv \cdotv k}=\gamma_0/K_{*}$ for all $k>K^+$; 
% to compute %the expected posterior Poisson rates as 
%$$\textstyle \lambda_{vj}=\sum_{k=1}^{K^++K_\star}\phi_{vk}r_k;$$
and for the hGNBP-NBFA, we draw the factors $(\phiv_k\,|\,-)\sim\mbox{Dir}(\eta+\ell_{1\cdotv k},\ldots,\eta+\ell_{V\cdotv k})$ and factor scores $(\theta_{kj}\,|\,-)\sim\mbox{Gamma}[r_k + \ell_{\cdotv j k},1/(c_j-\ln(1-p_j)]$, where we let $\ell_{v\cdotv k}=0$ and $r_k=\gamma_0/K_\star$ for all $k>K^+$. We set $K_{*}=20$ for all three algorithms. %to compute %the expected posterior Poisson rates as 
%$$\textstyle \lambda_{vj}=\left(n_{vj}+\sum_{k=1}^{K^++K_*}\phi_{vk}\theta_{kj}\right)p_j.$$

We compute
the heldout perplexity as
$$%\textstyle
\exp\left(-\frac{1}{m^{\text{test}}_{\cdotv \cdotv}}\sum_{v} \sum_{j} m^{\text{test}}_{vj} \ln \frac{ \sum_{s} \lambda_{vj}^{(s)} }{\sum_{s}\sum_{v'}\lambda_{v'j}^{(s)}} \right)
,$$
%$$
%%\mathcal{P}_{\text{PFA}}=
%\exp\Bigg(-\frac{1}{m^{\text{test}}_{\cdotv \cdotv}}\sum_{v} \sum_{j} m^{\text{test}}_{vj} \ln \frac{ \sum_{s}\sum_{k} \phi^{(s)}_{vk}\theta^{(s)}_{jk} }{\sum_{s}\sum_{v}\sum_{k} \phi^{(s)}_{vk}\theta^{(s)}_{jk}} \Bigg)
%,$$
where $s\in\{1,\ldots,S\}$ is the index of a collected MCMC sample, $m^{\text{test}}_{vj}$ is the number of test covariate {indices} at covariate $v$ in sample $j$, $m^{\text{test}}_{\cdotv \cdotv}= \sum_v\sum_j m^{\text{test}}_{vj}$, and $\lambda_{vj}^{(s)}$ are computed using the equations shown in the second row of Tabel \ref{tab:NBP}, %with the summation over all $k\in\{1,\ldots,K^++K_{\star}\}$, 
$e.g.$, we have $\lambda^{(s)}_{vj} = \left(n_{vj}+\sum_{k=1}^{ K^++K_{\star}} \phi_{vk}^{(s)} \theta_{kj}^{(s)}\right) p_j^{(s)}$ for the hGNBP-NBFA.
%%to compute %the expected posterior Poisson rates as 
%$$\mbox{GNBP-PFA: }\textstyle \lambda_{vj}=\sum_{k=1}^{K^++K_\star}\phi_{vk}\theta_{kj}$$
%%to compute %the expected posterior Poisson rates as 
%$$\mbox{GNBP-DCMLDA: } \textstyle \lambda_{vj}=\sum_{k=1}^{K^++K_\star}\phi_{vk}r_k;$$
%%to compute %the expected posterior Poisson rates as 
%$$\mbox{hGNBP-NBFA: }\textstyle \lambda_{vj}=\left(n_{vj}+\sum_{k=1}^{K^++K_*}\phi_{vk}\theta_{kj}\right)p_j.$$
% 
% For the hGNBP NBFA, we draw the factors $(\phiv_k\,|\,-)\sim\mbox{Dir}(\eta+\ell_{1\cdotv k},\ldots,\eta+\ell_{V\cdotv k})$ and factor scores $(\theta_{kj}\,|\,-)\sim\mbox{Gamma}[r_k + \ell_{\cdotv j k},1/(c_j-\ln(1-p_j)]$ for each collected sample, and compute the expected posterior Poisson rates $\lambda_{vj}=(n_{vj}+\sum_{k=1}^K\phi_{vk}\theta_{kj})p_j$, with which we have
%$$
%\exp\Bigg(-\frac{1}{m^{\text{test}}_{\cdotv \cdotv}}\sum_{v} \sum_{j} m^{\text{test}}_{vj} \ln \frac{ \sum_{s} \lambda_{vj}^{(s)} }{\sum_{s}\sum_{v}\lambda_{vj}^{(s)}} \Bigg)
%.$$
For each unique combination of $\eta$ and the training percentage, the results are averaged over five random training/testing partitions. The evaluation method is similar to those used in \citet{%newman2009distributed,
wallach09}, \citet{DILN_BA}, and \citet{NBP2012}. %; and \citet{BNBP_PFA_AISTATS2012}. 
All algorithms are coded in MATLAB, with the steps of sampling factor and table indices coded in C to optimize speed. We terminate a trial and omit the results for that particular setting if it takes a single core of an %the 
Intel Xeon %5680 
3.3 GHz CPU more than 24 hours to finish 5000 iterations. 
The code will be made available in the author's website for reproducible research.
%We will make all the code available for reproducible research if the paper gets accepted.

We first consider the NIPS12 dataset, whose average sample length is about 1323, and present its results in Figures \ref{fig:perplexity1_1}-\ref{fig:perplexity1_2_time}. We also consider both the PsyReview and JACM datasets, whose average sample lengths are about 56 and 127, respectively, and provide related plots %to those for the NIPS12 dataset
 in %defer their results to 
 Appendix \ref{app:plot}. 

%Similar to \cite{DILN_BA,BNBP_PFA_AISTATS2012}, we set the topic Dirichlet smoothing parameter as $\eta=0.01$, $0.02$, $0.05$, $0.10$, $0.25$, or $0.50$. To test how the algorithms perform in more extreme settings, %for an overly small $\eta$, 
%we also consider $\eta=0.001$, $0.002$, and $0.005$. % and permitted by our computational resource.

%\subsubsection{Distinct behavior of the hGNBP-NBFA}
%\subsubsection{Parsimonious representation of the hGNBP-NBFA}
%\subsubsection{Instructive observations}
\vspace{-4mm}
\subsubsection{General observations}
 \vspace{-2mm}
For multinomial mixed-membership models, generally speaking, the smaller the Dirichlet smoothing parameter $\eta$ is, the more sparse and specific the inferred factors are encouraged to be, and the larger the number of inferred factors using a nonparametric Bayesian mixed-membership modeling prior, such as the hierarchical Dirichlet process and the gamma- and beta-negative binomial processes %multinomial topic mode
\citep{DILN_BA,BNBP_PFA_AISTATS2012}. As shown in Figures \ref{fig:perplexity1_1}(a)-(e), for the hGNBP-NBFA, a nonparametric Bayesian DMMM model, we observe a relationship between the %characteristics and 
 number of inferred factors and $\eta$ similar to that for the GNBP-PFA, a nonparametric Bayesian multinomial mixed-membership model. 
 
In comparison to multinomial mixed-membership models such as the GNBP-PFA, %the following observations illustrate 
%Below we first summarize the unique and desirable properties of the hGNBP-NBFA:
what make the hGNBP-NBFA different and desirable are: 1) its parsimonious representation that uses fewer factors to achieve better heldout prediction, as shown in Figures \ref{fig:perplexity1_1_time}(a)-(e);
2)
ts distinct mechanism in adjusting %the distinct relationship between 
 its number of inferred factors according to the lengths of samples, as shown in Figures \ref{fig:perplexity1_2}(a)-(f); 
3) its significantly lower computational complexity for a covariate-sample matrix with long samples (large column sums), with the differences becoming increasingly more significant as the the average sample length increases, as shown in Figures \ref{fig:perplexity1_1_time}(f)-(j) and \ref{fig:perplexity1_2_time}(a)-(f);
4) its ability to achieve the same predictive power with significantly less time, as shown in Figures \ref{fig:perplexity1_2_time}(g)-(l);
5) and its overall better predictive performance both under various values of $\eta$ while controlling the sample lengths, as shown in Figures \ref{fig:perplexity1_1}(f)-(j), and under various sample lengths while controlling $\eta$, as shown in Figures \ref{fig:perplexity1_2}(g)-(l).

%
% \begin{itemize}
% \item[1)] its parsimonious representation that uses fewer factors to achieve better heldout prediction, as shown in Figures \ref{fig:perplexity1_1_time}(a)-(e);
% \vspace{-3mm}
% \item[2)] its distinct mechanism in adjusting %the distinct relationship between 
% its number of inferred factors according to the lengths of samples, as shown in Figures \ref{fig:perplexity1_2}(a)-(f); 
% \vspace{-3mm}
% \item[3)] its significantly lower computational complexity for a covariate-sample matrix with long samples (large column sums), with the differences becoming increasingly more significant as the the average sample length increases, as shown in Figures \ref{fig:perplexity1_1_time}(f)-(j) and \ref{fig:perplexity1_2_time}(a)-(f);
% \vspace{-3mm}
% \item[4)] its ability to achieve the same predictive power with significantly less time, as shown in Figures \ref{fig:perplexity1_2_time}(g)-(l);
% \vspace{-3mm}
% \item[5)] and its overall better predictive performance both under various values of $\eta$ while controlling the sample lengths, as shown in Figures \ref{fig:perplexity1_1}(f)-(j), and under various sample lengths while controlling $\eta$, as shown in Figures \ref{fig:perplexity1_2}(g)-(l). 
% \end{itemize}
 %Below we provide more detailed discussions. 
 
\begin{figure}[!tb]
\begin{center}
\includegraphics[width=0.95\columnwidth]{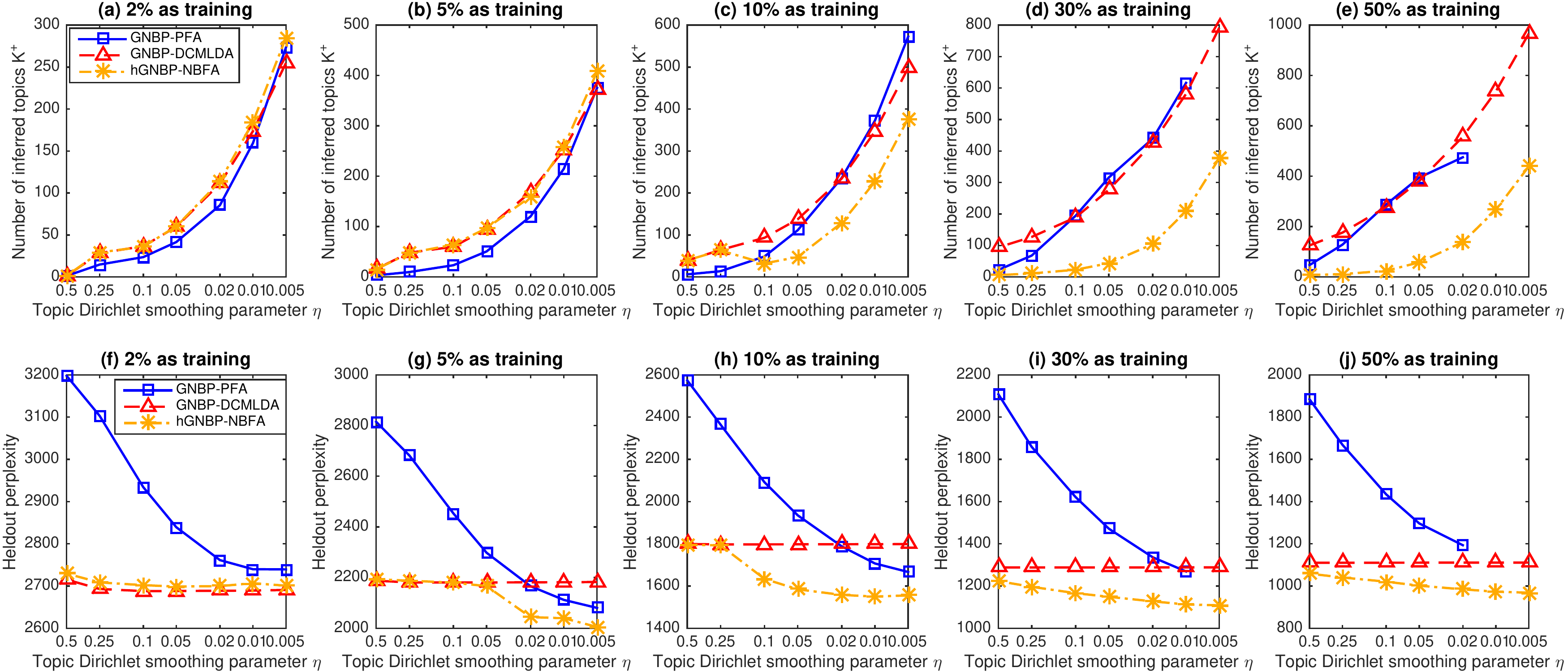}
\end{center}
\vspace{-6.0mm}
\caption{\small \label{fig:perplexity1_1}
Comparisons of the GNBP-PFA (multinomial mixed-membership model), GNBP-DCMLDA, and hGNBP-NBFA (Dirichlet-multinomial mixed-membership model) on (a)-(e) the posterior means of the number of inferred factors $K^+$ and (f)-(j) heldout perplexity, both as a function of the Dirichlet smoothing parameter $\eta$ for the NIPS12 dataset. The values of $\eta$ are plot in the logarithmic scale from large to small. In both rows, the plots from left to right are obtained using $2\%$, $5\%$, $10\%$, $30\%$, and $50\%$ of the covariate {indices} for training, respectively. 
% using various settings on the topic Dirichlet smoothing parameter $\eta$. 
All plots are based on five independent random trials. The error bars are not shown as variations across different trials are small. Some results for the GNBP-PFA are missing as they took more than 24 hours to run 5000 Gibbs sampling iterations on a 3.3 GHz CPU and hence were terminated before completion.
}
%\end{figure}

%\vspace{10mm}

%\begin{figure}[!tb]
\begin{center}
\includegraphics[width=0.95\columnwidth]{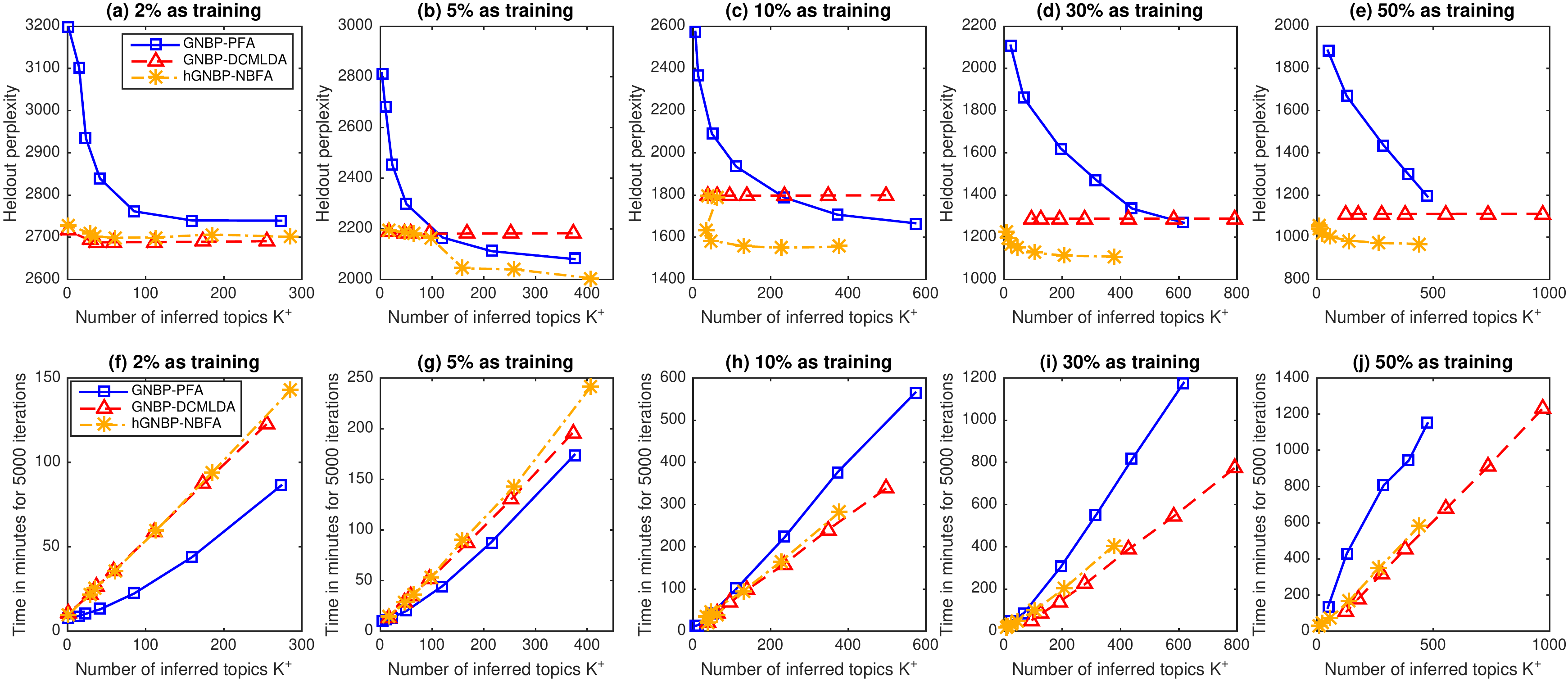}
\end{center}
\vspace{-6.0mm}
\caption{\small \label{fig:perplexity1_1_time}
Using the same results shown in Figure \ref{fig:perplexity1_1}, we plot (a)-(e) the obtained heldout perplexity and (f)-(j) the number of minutes to finish 5000 Gibbs sampling iterations, both as a function of the number of inferred factors $K^+$. 
}
\end{figure}

\begin{figure}[!tb]
\begin{center}
\includegraphics[width=.95\columnwidth]{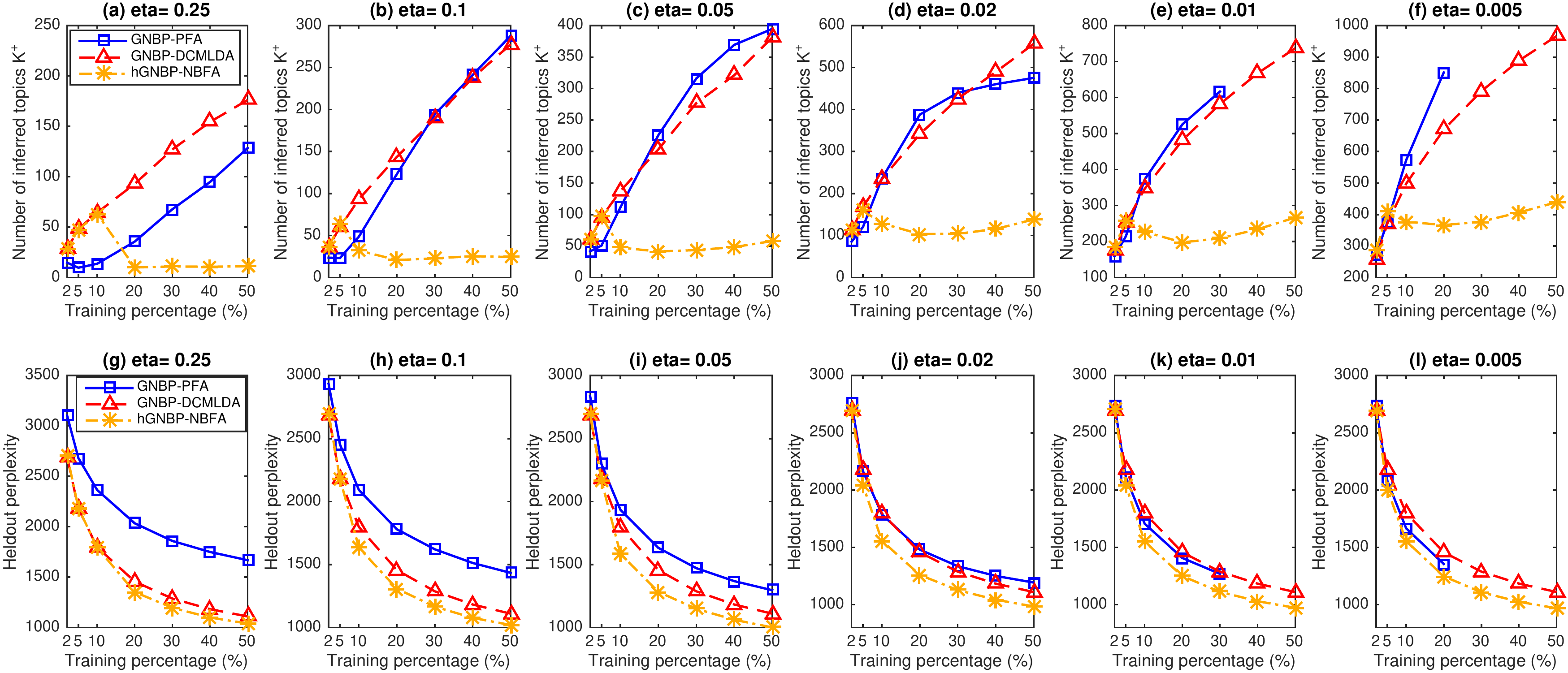}
\end{center}
\vspace{-6.0mm}
\caption{\small \label{fig:perplexity1_2}
Comparisons of the GNBP-PFA (multinomial mixed-membership model), GNBP-DCMLDA, and hGNBP-NBFA (Dirichlet-multinomial mixed-membership model) on (a)-(f) the posterior means of the number of inferred factors $K^+$ and (g)-(l) heldout perplexity, both as a function of the percentage of covariate {indices} used for training for the NIPS12 dataset. In both rows, the plots from left to right are obtained with $\eta=0.25, 0.1,0.05, 0.02, 0.01$, and $0.005$, respectively. Other specifications are the same as those of Figure \ref{fig:perplexity1_1}.
%All plots are based on five independent random trials. The error bars are not shown as variations across different trials are small. Some results for the GNBP-PFA are missing as they took more than 24 hours to run 5000 Gibbs sampling iterations on a 3.3 GHz CPU and hence were terminated before completion. 
%All combinations of $\eta$ and training percentage that take more than 24 hours to run are not shown. 
%, where the results of
%? = 0.002 and 0.001 for the HDP-LDA are omitted.
%Classification accuracy (\%) as a function of the network depth $T_{\max}$ for two 20newsgroups binary classification tasks. The widths of all layers are automatically inferred, with the upper bound of the first-layer width set as $K_{1\max}=25$, $50$, $100$, or $200$. (a)-(b) are the boxplots of the accuracies of eight independent runs with $K_{1\max}=200$. (c)-(d) show the average accuracy of these eight runs for different $K_{1\max}$ and $T_{\max}$.
%Note that $K_{1\max}=200$ is large enough to cover all active first-layer factors (inferred to be around 185 for both binary classification tasks), whereas all the first-layer factors would be used if $K_{1\max}=25$, $50$, or $100$.
% \vspace{-2.5mm}
}

\begin{center}
\includegraphics[width=.95\columnwidth]{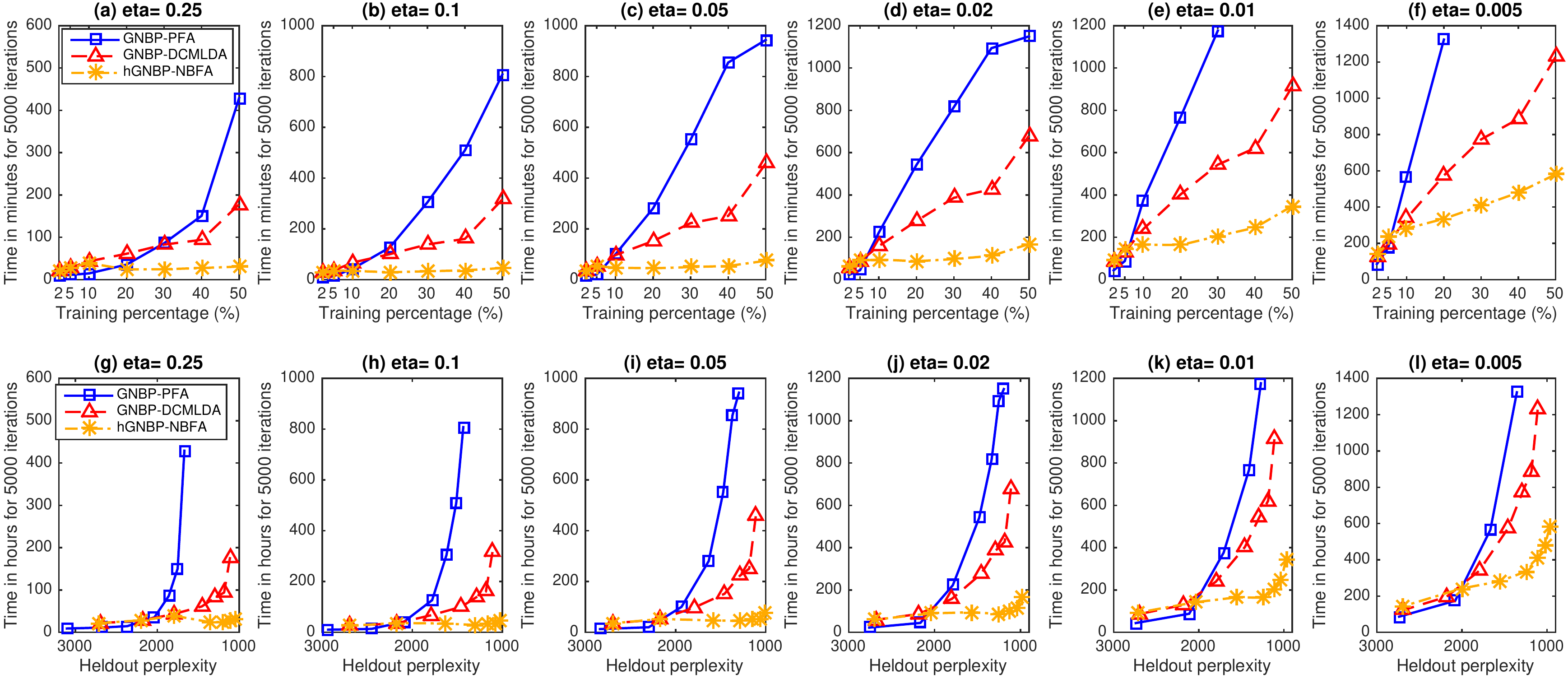}
\end{center}
\vspace{-6.0mm}
\caption{\small \label{fig:perplexity1_2_time}
Using the results shown in Figure \ref{fig:perplexity1_2}, we plot the number of minutes to finish 5000 Gibbs sampling iterations both (a)-(f) as a function of the percentage of covariates used for training and (g)-(l) as a function of heldout perplexity. 
%All combinations of $\eta$ and training percentage that take more than 24 hours to run are not shown. 
%, where the results of
%? = 0.002 and 0.001 for the HDP-LDA are omitted.
%Classification accuracy (\%) as a function of the network depth $T_{\max}$ for two 20newsgroups binary classification tasks. The widths of all layers are automatically inferred, with the upper bound of the first-layer width set as $K_{1\max}=25$, $50$, $100$, or $200$. (a)-(b) are the boxplots of the accuracies of eight independent runs with $K_{1\max}=200$. (c)-(d) show the average accuracy of these eight runs for different $K_{1\max}$ and $T_{\max}$.
%Note that $K_{1\max}=200$ is large enough to cover all active first-layer factors (inferred to be around 185 for both binary classification tasks), whereas all the first-layer factors would be used if $K_{1\max}=25$, $50$, or $100$.
% \vspace{-2.5mm}
}
\end{figure}

 %the distinct relationship between its number of inferred factors and the lengths of documents, as shown in Figures \ref{fig:perplexity1_2}(a)-(f) and summarized below. % and will be illustrated with extensive experiments. 

 \subsubsection{Detailed discussions}

\textbf{Distinct behavior and parsimonious representation.}
When fixing $\eta$ but gradually increasing the average sample length, the number of factors inferred by a nonparametric Bayesian multinomial mixed-membership model such as the GNBP-PFA often increases at a near-constant rate, as shown with the blue curves in Figure \ref{fig:perplexity1_2}(a)-(f). The GNBP-DCMLDA, which models covariate burtiness, behaves similarly in the number of inferred factors, as shown with the red curves in Figure \ref{fig:perplexity1_2}(a)-(f). %, which is a nonparametric Bayesian Dirichlet-multinomial mixed-membership model, 
Under the same setting, 
the number of inferred factors by the hGNBP-NBFA often first increases at a similar near-constant rate %as that by the GNBP-PFA 
when the average sample length is short, however, it starts decreasing once the average sample length becomes sufficiently long, and eventually turns around and increases, but at a much lower rate, as the average sample length further increases, as shown with the yellow curves in Figure \ref{fig:perplexity1_2}(a)-(f). This distinct behavior implies that by exploiting its ability to model both covariate and factor burstiness, the hGNBP-NBFA could have a parsimonious representation of a dataset with long samples. By contrast, a nonparametric Bayesian multinomial mixed-membership model, such as the GNBP-PFA, models neither covariate nor factor burstiness. Consequently, it has to increase its latent dimension at a near-constant rate as a function of the average sample length, in order to adequately capture self- and cross-excitation of covariate frequencies, which are often more prominent in longer samples. 
It is clear that by decreasing $\eta$ and hence increasing the number of inferred factors, the GNBP-PFA can gradually approach and eventually outperform the GNBP-DCMLDA, but still clearly underperform the hGNBP-NBFA in most cases, even if using many more factors and consequently significantly more computation. 

%To further demonstrate the clear advantages of the hGNBP-NBFA on providing better results with more parsimonious representations, we re-plot the same results in Figure \ref{fig:perplexity1_2}, where we plot both the number of inferred factors and heldout perplexity as a function of the training word percentage (and hence the average document length), under various values of $\eta$. It is clear that by decreasing $\eta$ and hence increasing the number of inferred factors, the GNBP-PFA can gradually approach and eventually outperform the GNBP-DCMLDA, but still clearly underperforms the hGNBP-NBFA even using a much larger number of factors. Figures \ref{fig:perplexity1_2} (a)-(f) also reveal an intriguing behavior that distincts the hGNBP-NBFA from all previously proposed topic models: as the training documents become longer and longer, the number of inferred factors first increases, then decreases, and eventually increases at a very slow rate. 
%covariate

%self- and cross-excitation of covariate frequencies 
% as the average document length increases. 
 
 %We compare the performance of the GNBP-PFA, GNBP-DCMLDA, and hGNBP-NBFA in covariates of predicting heldout covariates. %heldout perplexity. 
%The results for

%various combination of the training percentage and $\eta$ 

%For different nonparametric Bayesian topic models, the same $\eta$ often leads to a different number of factors. For a fair comparison,
% we will visualize the performance of different models both as a function of $\eta$ and as a function of the inferred number of factors. 

\textbf{Combining factorization and the modeling of burstinss.}
As shown in Figure \ref{fig:perplexity1_1}(f), when the training percentage is as small as $2\%$, the GNBP-DCMLDA, which combines the observed counts $n_{vj}$ and the inferred sample-invariant smoothing parameters $\sum_k \phi_{vk}r_k $ to estimate the Poisson rates (and hence the smoothed normalized covariate frequencies), achieves the best predictive performance (lowest heldout perplexity); the hGNBP-NBFA tries to improve DCMLDA by combining the observed counts and document-specific smoothing parameters $\sum_k\phi_{vk}\theta_{kj}$, and the GNBP-PFA only relies on $\sum_k\phi_{vk}\theta_{kj}$, yielding slightly and significantly worse performance, respectively, at this relatively extreme setting. % % the GNBP-PFA provides the worst performance in this case, 
This suggests that when the observed counts are too small, using factorization may not provide any advantages than simply smoothing the raw covariate counts with sample-invariant smoothing parameters. 

As the training percentage increases, all three algorithms quickly improve their performance, as shown in Figures \ref{fig:perplexity1_1}(g)-(j). Given a training percentage that is sufficiently large, e.g., 10\% for this dataset ($i.e.$, the average training sample length is about 132), all three algorithms tend to increase their numbers of inferred factors $K^+$ as $\eta$ decreases, although the hGNBP-NBFA usually has a lower increasing rate. They differ from each other significantly, however, on how the performance improves as the inferred number factors increases, as shown in Figures \ref{fig:perplexity1_1_time}(a)-(e): for the GNBP-DCMLDA, as it relies on $\sum_k \phi_{vk}r_k $ to smooth the observed counts, its predictive power is almost invariant to the change of $\eta$ and its number of factors; for the GNBP-PFA, by decreasing $\eta$ and hence increasing its number of inferred factors, it can approach and eventually outperform DCMLDA; whereas for the hGNBP-NBFA, it follows DCMLDA closely when $\eta$ is large or the lengths of samples are short, but often reduces its rate of increase for the number of inferred factors as $\eta$ decreases and quickly lowers its perplexity as $K^+$ increases, as long as $\eta$ is sufficiently small or the samples are sufficiently long. Thus in general, %under most settings, 
the hGNBP-NBFA provides the lowest perplexity using the least number of inferred factors.

Note that when the lengths of the training samples are short, setting $\eta$ to be large will make the factors $\phiv_k$ of NBFA become over-smoothed and hence NBFA becomes essentially the same as DCMLDA. As $\eta$ decreases given the same average sample length, or as the average sample length increases given the same $\eta$, the factorization of NBFA with sample-dependent factor scores %with document-specific covariate frequencies 
gradually take effect to improve the estimation of the Poisson rates and hence the smoothed normalized covariate frequencies for each sample. Overall, by combining the factorization, as used in PFA, the modeling of covariate burstiness, as used in DCMLDA, and the modeling of factor burstiness, unique to NBFA, the hGNBP-NBFA captures both self- and cross-excitation of covariate frequencies and achieves the best predictive performance with the most parsimonious representation as long as the average sample length is not too short and the value of $\eta$ is not set too large to overly smooth the~factors. %' distributions over the covariates in the vocabulary. 
% as long as $\eta$ is set with a reasonable value, such as $\eta=0.05$.

%un%
%All algorithms are implemented in Matlab. % code
%%on a 3.4 GHz CPU. 
%On a 3.4 GHz CPU, on the NIPS12 dataset, the blocked Gibbs sampler of the hGNBP NBFA takes about ? seconds per iteration, whereas the GNBP PFA takes about ? seconds per iteration when the inferred numbers of factors are both around ?. 
%

\textbf{Significantly lower computation for sufficiently long samples.}
For the GNBP-PFA, the collapsed Gibbs sampler samples all the factor indices with a computational complexity of $O(n_{\cdotv \cdotv} K^+)$, whereas for the hGNBP-NBFA, the corresponding computation has a complexity of $ O\big[\sum_{v}\sum_j\ln(n_{vj}+1) K^+\big]$ and sampling $\{\phi_k\}_k$ and $\{\thetav_j\}_j$ adds an additional computation of $O(VK^++NK^+)$. Thus the computation for the hGNBP-NBFA not only is often lower given the same $K^+$ for a dataset consisting of sufficiently long samples, but also becomes much lower because the inferred $K^+$ is often much smaller when the sample lengths are sufficiently long. For example, as shown in Figure \ref{fig:perplexity1_1_time}(i), when 30\% of the covariate {indices} in each sample are used for training, which means the average training sample length is about 397, the time for the GNBP-PFA to finish 5000 Gibbs sampling iteration on a 3.3 GHz CPU is about double that for the hGNBP-NBFA when their inferred numbers of factors are similar to each other; and when 20\% of the covariate {indices} in each sample are used for training ($i.e.$, the average training sample length is around 265), in comparison to the hGNBP-NBFA, the GNBP-PFA takes about three times more minutes when $\eta=0.1$, as shown in Figures \ref{fig:perplexity1_2_time}(b), and four times more minutes when $\eta=0.01$, as shown in Figures \ref{fig:perplexity1_2_time}(e).
Overall, for a dataset whose samples are not too short to exhibit self- and cross-excitation of covariate frequencies, the hGNBP-NBFA often takes the least time to finish the computation while controlling the value of $\eta$ and average sample length, has lower computation given the same inferred number of factors~$K^+$, and achieves a low perplexity with significantly less computation.

\subsection{Unsupervised feature learning for classification}

%We set the hyper-parameters as $a_0=b_0=0.01$, $e_0=f_0=1$, and $\eta^{(t)}=0.1$ for all $t\ge 2$. 
%For covariate extraction, 
%We choose $MaxIter_t=1000$ for all $t$ to estimate $\Phimat^{1},\ldots,\Phimat^{(T)}$ and the top hidden layer hidden units's gamma shape parameters $\rv$, with which 
%
%We consider the JACM\footnote{\href{http://www.cs.princeton.edu/~blei/downloads/}{http://www.cs.princeton.edu/$\sim$blei/downloads/}}, 
%PsyReview\footnote{\href{http://psiexp.ss.uci.edu/research/programs_data/toolbox.htm}{http://psiexp.ss.uci.edu/research/programs$\_$data/toolbox.htm}}, and NIPS12\footnote{\href{http://www.cs.nyu.edu/~roweis/data.html}{http://www.cs.nyu.edu/$\sim$roweis/data.html}} corpora, restricting the vocabulary to covariates that occur in five or more documents. The JACM dataset includes 536 documents, with $V=1539$ unique covariates and 68,055 total word counts. The PsyReview dataset includes 1281 documents, %abstracts from 1967 to 2003, 
%with $V=2566$ and 71,279 total word counts. 

%The direct assignment sampler of the HDP-LDA has comparable computational complexity when the inferred number of factors is similar. 
%Note that when the inferred number of factors $K_J$ is large, %.one may separate the $K_J$ dimensional vector into a dense vector and a sparse vector, and then apply a sparse computation trick for 
%the sparse computation technique for LDA \cite{porteous2008fast,mimno2012sparse} may also be used to considerably speed up %applied to 
%the sampling algorithm of the BNBP topic model. % for faster sampling. %, even though the concentration parameter $\alpha$ is not marginalized out. 

%\textbf{Unsupervised covariate learning for classification.}

To further verify the advantages of NBFA that models both self- and cross-excitation of covariate frequencies, %the bustiness of both covariates and factors. 
 we use the proposed models to extract low-dimensional feature vectors from high-dimensional covariate-frequency count vectors %from the 11,269 training documents 
of the 20newsgroups dataset, and then examine how well the unsupervisedly extracted feature vector of a test sample can be used to correctly classify it to one of the 20 news groups. As the classification accuracy often strongly depends on the dimension of the feature vectors, we truncate the total number of factors at $K=25$, 50, 100, 200, 400, 600, 800, or 1000. Correspondingly, we slightly modify the gamma process based nonparametric Bayesian models by choosing a discrete base measure for the gamma process as $G_0=\sum_{k=1}^K \frac{\gamma_0}K \delta_{\phiv_k},~\phiv_k\sim\mbox{Dir}(\eta,\ldots,\eta)$. Thus in the prior we now have $r_k=G(\phiv_k)\sim\mbox{Gamma}(\gamma_0/K,1/c_0)$ and consequently the Gibbs sampling update equations for $\{r_k\}_k$ and $\gamma_0$ will also slightly change. We omit these details for brevity, and refer %the readers
 to \citet{NBP2012} on how the same type of finite truncation is used in inference for nonparametric Bayesian models.

For this application, we fix the truncation level $K$ but impose the non-informative $\mbox{Gamma}(0.01,1/0.01)$ prior on the Dirichlet smoothing parameter $\eta$, letting it be inferred from the data using \eqref{eq:sampleeta}. %We provide the inference for $\eta$ via data augmentation in the Appendix. 
The same as before, we consider collapsed Gibbs sampling for the GNBP-PFA and the compound Poisson representation based blocked Gibbs sampler for the hGNBP-NBFA, with the main difference in that a fixed instead of an adaptive truncation is now used for inference. We do not consider the GNBP-DCMLDA here since it does not provide sample specific feature vectors under the same set of shared factors. Note that although we fix $K$, if $K$ is set to be large enough, not necessarily all factors would be used and hence a truncated model still preserves its ability to infer the number of active factors $K^+\le K $; whereas if $K$ is set to be small, a truncated model may lose its ability to infer $K^+$, but it still maintains asymmetric priors %, which has been shown to be important for topic modeling
\citep{wallach2009rethinking} on the factor scores. %is inferring the number of factors. 
%When $K$ is set to be small, the model becomes a parametric model . 

For both the hGNBP-NBFA and GNBP-PFA, we consider 2000 Gibbs sampling iterations on the 11,269 training samples of the 20newsgroups dataset, and retain the weights $\{r_k\}_{1,K}$ and the posterior means of $\{\phiv_k\}_{1,K}$ as factors, according to the last MCMC sample, for testing. With these $K$ inferred factors and weights, we further apply 1000 blocked Gibbs sampling iterations for both models and collect the last 500 MCMC samples to estimate the posterior mean of the feature usage proportion vector $\thetav_j /\theta_{\cdotv j}$, for every sample in both the training and testing sets. Denote $\bar{\thetav}_j\in \mathbb{R}^{K}$ as the estimated feature vector for sample~$j$. We use the $L_2$ regularized logistic regression provided by the LIBLINEAR ({\footnotesize\href{https://www.csie.ntu.edu.tw/~cjlin/liblinear/}{https://www.csie.ntu.edu.tw/$\sim$cjlin/liblinear/})} package \citep{REF08a} to train a linear classifier on all $\bar{\thetav}_j$ in the training set and use it to classify each $\bar{\thetav}_j$ in the test set to one of the 20 news groups; the regularization parameter $C$ of the classifier five-fold cross validated on the training set from $(2^{-10}, 2^{-9},\ldots, 2^{15})$. %For reproducible research, the code 
% is %, with its optimal value
% cross-validated 
%on the training set, starting from $C=1/1024$ (using the `-C -s 0 -c 1/1024 -v 5' option in LIBLINEAR Version 2.01).

\begin{figure}[!tb]
\begin{center}
\includegraphics[width=0.70\columnwidth]{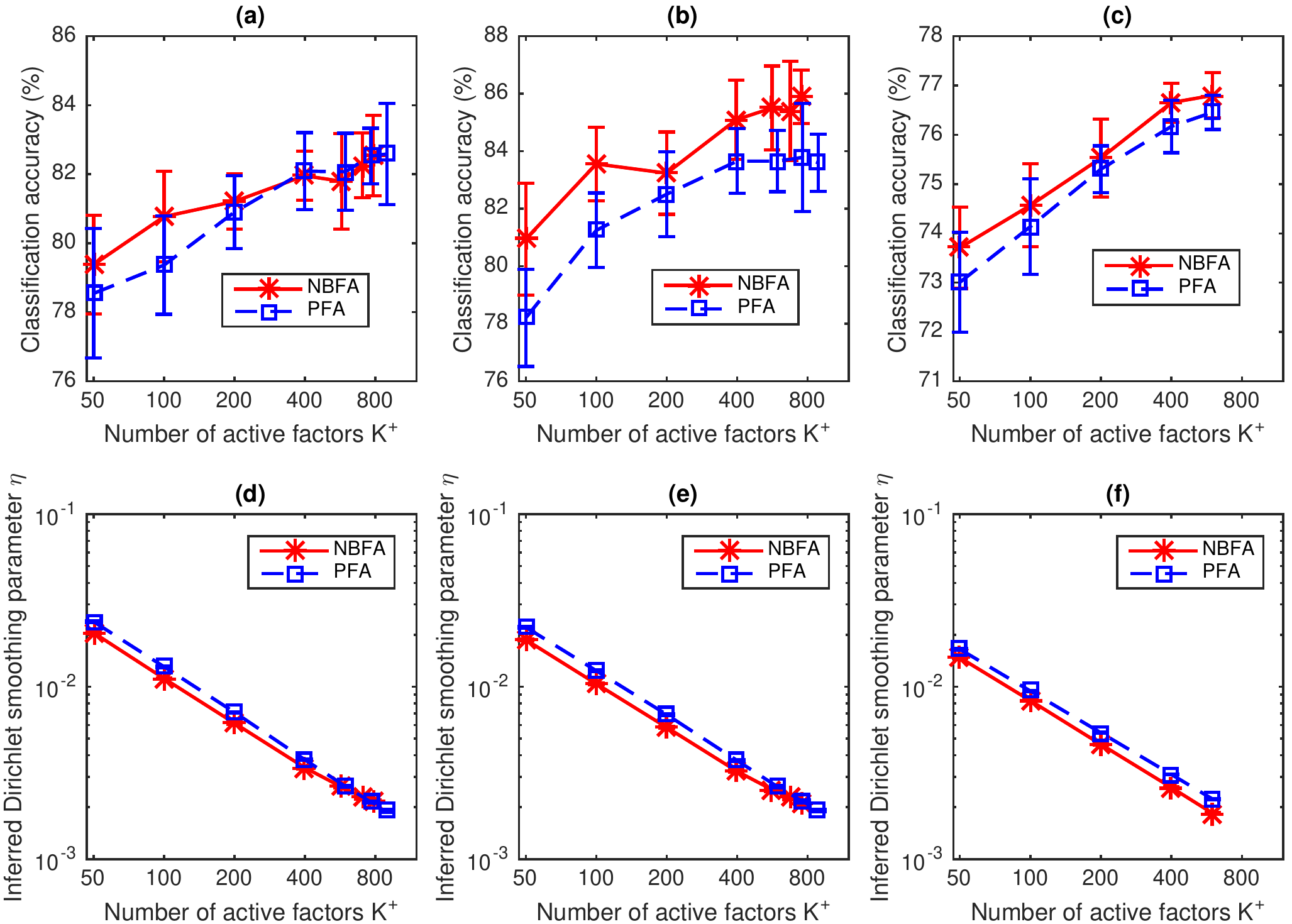}
\end{center}
\vspace{-5.5mm}
\caption{\small \label{fig:classify}
Comparison between negative binomial factor analysis (NBFA) and Poisson factor analysis (PFA) on three different classification tasks, with the number of factors $K$ fixed and the Dirichlet smoothing parameter $\eta$ inferred from the data. 
 For the $alt.atheism$ versus $talk.religion.misc$ binary classification task, based on twelve independent random trials, we plot (a) the classification accuracy and (d) the inferred $\eta$, both as a function of the number active topics $K^+\le K$, where $K\in\{50, 100, 200, 400, 600, 800, 1000\}$.
 % and according to the last MCMC samples of eight independent trials, 
 %(c)
%the classification accuracy as a function of the inferred $\eta$; both $\eta$ and $K$ are displayed in the logarithmic scale.
 (d) and (e): Analogous plots to
(a) and (b) for the $comp.sys.ibm.pc.hardware$ versus $comp.sys.mac.hardware$ binary classification task, with $K\in\{50, 100, 200, 400, 600, 800, 1000\}$. % based on twelve independent random trials. 
(c) and (f): Analogous plots to (a) and (b) for the 20newsgroups multi-class classification task, with $K\in\{50, 100, 200, 400, 600\}$. % based on twelve independent random trials. 
%, where the results of
%? = 0.002 and 0.001 for the HDP-LDA are omitted.
%Classification accuracy (\%) as a function of the network depth $T_{\max}$ for two 20newsgroups binary classification tasks. The widths of all layers are automatically inferred, with the upper bound of the first-layer width set as $K_{1\max}=25$, $50$, $100$, or $200$. (a)-(b) are the boxplots of the accuracies of eight independent runs with $K_{1\max}=200$. (c)-(d) show the average accuracy of these eight runs for different $K_{1\max}$ and $T_{\max}$.
%Note that $K_{1\max}=200$ is large enough to cover all active first-layer factors (inferred to be around 185 for both binary classification tasks), whereas all the first-layer factors would be used if $K_{1\max}=25$, $50$, or $100$.
 \vspace{-5mm}
}
\end{figure}

We first consider %there 
%two binary classification tasks that 
distinguishing between 
the $alt.atheism$ and $talk.religion.misc$ news groups, %$talk talk.politics.guns$ and $talk.politics.mideast$, 
and between the $comp.sys.ibm.pc.hardware$ and $comp.sys.mac.hardware$ 
%and between the
%$sci.electronics$ and $sci.med$ 
news groups. 
For each binary classification 
task, we remove a standard list of stop words and only consider the covariates that appear at least five times in both newsgroups combined, and report the classification accuracies based on twelve independent runs with random initializations. For the first binary classification task, we have  856 training documents, with 6509 unique terms and about 116K words,  while  for the second one, we have 1162  training documents, with 4737  unique terms and about 91K words.

As shown in Figures \ref{fig:classify}(a) and \ref{fig:classify}(b), NBFA clearly outperforms PFA for both binary classification tasks in that it in general provides higher classification accuracies on testing samples while controlling the %number of active factors $K^+$ 
truncation level $K$ ($i.e.$, the dimension of the extracted feature vectors). % and allowing the Dirichlet smoothing parameter $\eta$ to be inferred from the data. % that is equal or smaller than the truncation level $K$. 
It is also interesting to examine how the inferred Dirichlet smoothing parameter $\eta$ changes as the truncation level $K$ increases, as shown in Figures \ref{fig:classify}(d) and \ref{fig:classify}(e). It appears that the inferred $\eta$'s and 
active factors $K^+$'s could be fitted with a decreasing straight line in the logarithmic scale, except for the tails that seem slightly concave up, for both NBFA and PFA. %, and PFA has a larger absolute value for the negative slope of the straight line. 
When the truncation level $K$ is not sufficiently large, the inferred $\eta$ of NBFA is usually smaller than that of PFA given the same $K$. This may be explained by examining (\ref{eq:sampleeta}), where $\ell_{vjk}\le n_{vjk}$ a.s. and the differences could be significant for large~$n_{vjk}$. Note when using the raw word counts as the features, 
the classification accuracies of logistic regression are 79.4\% and 86.7\% for the first and second binary classification tasks, respectively, and when using  the normalized term frequencies as the features, these are 80.8\% and 88.0\%,  respectively.
%pc mac, 4737D, raw counts, 86.6\%, tf, 88.3\%
%religion, 6509D, raw counts, 79.6\%, tf, 80.3\%
%all, 33420D, raw counts, 78.0\%, tf,  79.4\% %, which means that $\ell_{\cdotv \cdotv k} \le n_{\cdotv \cdotv k}$ a.s. and implies that the difference could be large is $K$ is not too large. 
%In addition, as shown in Figures \ref{fig:classify}(c) and \ref{fig:classify}(f), given a similar value for the inferred $\eta$, NBFA often uses a smaller number of factors but still outperforms PFA, especially if the inferred $\eta$ is large. 

%These differences maybe able to be explained by the fact that ....
%
%the same $\eta$ indicate the over dispersion of $\ell_{v\cdot k}$ and $n_{v\cdot k}$ are similar, 
%
%%and 
%%monodically decreases when $K$ is small, NBFA often infers a smaller $\eta$ than PFA does, and 
%
%% using five-fold cross-validation on the training set. 

%
% For the first layer, we consider 2500 Gibbs sampling iterations, and collect the last 500 samples, with which we calculate the perplexity; for each subsequent layer, we consider 1000 iterations, and collect the last 500 samples to calculate the perplexity. If adding a layer improves the fitting of the testing data and reduces overfitting on the training, then we hope to see that the per-test-word perplexity reduces (per train-word perplexity increases?).
% 

In addition to these two binary classification tasks, we consider multi-class classification on the 20newsgroups dataset. After removing stopwords and terms (covariates) that appear less than five times, we obtain $ 33,420$ unique terms and about 1.4 million words for training, as summarized in Table \ref{tab:data}. We use all 11,269 training documents to infer the factors and factor scores, and mimic the same testing procedure used for binary classification to extract low-dimensional feature vectors, with which each testing sample is classified to one of the 20 news groups using the same $L_2$ regularized \ logistic regression. Note 
the classification accuracies of logistic regression with the raw counts or normalized term frequencies as features are 78.0\% and 79.4\%, respectively. As shown in Figure \ref{fig:classify}(f), NBFA generally outperforms PFA in terms of classification accuracies given the same feature dimensions, consistent with our observations for both binary classification tasks. We also observe similar relationship between the $K^+$ and inferred $\eta$ % and the classification accuracy and the inferred $\eta$ 
as we do in both binary classification tasks.

\section{Conclusions}\label{sec:conclusion}

Negative binomial factor analysis (NBFA) is proposed to factorize the covariate-sample count matrix under the NB likelihood. Its equivalent representation as the Dirichlet multinomial mixed-membership model reveals its distinctions %how the modeling of not only the burstiness of covariates, but also that of the factors distinguishes it 
from previously proposed discrete latent variable models. The hierarchical gamma-negative binomial process (hGNBP) is further proposed to support NBFA with countably infinite factors, and a compound Poisson representation based blocked Gibbs sampler %, which adaptively truncates the number of factors in each MCMC iteration, %with adaptive truncation is constructed for inference, which
 is shown to %to adaptively truncate the number of factors in each MCMC iteration and is shown to 
converge fast and have low computational complexity. By capturing both self- and cross-excitation of covariate frequencies and by smoothing the observed counts with both sample and covariate specific rates obtained through factorization under the NB likelihood, % of latent counts,
 the hGNBP-NBFA not only infers a parsimonious representation of a covariate-sample count matrix, but also % while % with sufficiently long samples, 
 achieves state-of-the-art predictive performance at low computational cost. In addition, the latent feature vectors inferred under the hGNBP-NBFA are better suited for classification than those inferred by the GNBP-PFA. It is of interest to investigate a wide variety of %topic model 
 extensions built on Poisson factor analysis %and multinomial topic modeling 
 under this new modeling framework.

%\section{Acknowledgements}

%\newpage

%\bibliographystyle{plainnat}

\bibliographystyle{ba}
\bibliography{%References052016.bib,
References052016_1.bib}
%\bibliography{/Users/zhoum/Dropbox/WorkingPapers/DeepPoGamma/References102014.bib}

\begin{acknowledgement}
 The author would like to thank the editor-in-chief, editor, associate editor, and two anonymous referees for their %insightful
%and constructive 
invaluable comments and suggestions, which have helped improve the paper
substantially.  %The authors
\end{acknowledgement}

\newpage

\appendix
\begin{center}
\Large{\textbf{Nonparametric Bayesian Negative Binomial Factor Analysis: Supplementary Material}}
\end{center}
\renewcommand{\thechapter}{A}
%\addcontentsline{toc}{section}{Appendices}

%\section*{Appendices}
\renewcommand{\thesection}{\Alph{section}}

\numberwithin{equation}{section}
\section{Proofs}\label{app:proof}
\begin{proof}[Proof of Theorem 1]
With $\tv:=(t_0,\ldots,t_K)\in \mathbb{R}^{K+1}$, the characteristic function of $\xv$ can be expressed as
\beq
\E\left[e^{i\tv^T\xv}\right] = \prod_{k=1}^K \E\left[e^{i(t_0+t_k)x_k}\right] =\prod_{k=1}^K\left(\frac{1-p}{1-pe^{i(t_0+t_k)}}\right)^{r_k}.\notag
\eeq
We augment $\yv$ as
\beq
(y_1,\ldots,y_K)\sim\mbox{Mult}(y,\thetav),~\thetav\sim\mbox{Dir}(r_1,\ldots,r_K),~y\sim\mbox{Pois}(\lambda),~\lambda\sim\mbox{Gamma}(r_{\cdotv},p/(1-p)), \notag
\eeq
where $\thetav=(\theta_1,\ldots,\theta_K)^T$.
Conditioning on $\thetav$ and $y$, we have
$$
\E\left[e^{i\tv^T\yv}\,|\,\thetav,y\right] = \left(\sum_{k=1}^K\theta_k e^{i(t_0+t_k)}\right)^y;
$$
conditioning on $\thetav$ and $\lambda$, we have
\begin{align}
\E\left[e^{i\tv^T\yv}\,|\,\thetav,\lambda\right] &= \E_y\left[\left(\sum_{k=1}^K\theta_k e^{i(t_0+t_k)}\right)^y\right]\notag\\
& = \exp\left[\lambda\left(\sum_{k=1}^K\theta_k e^{i(t_0+t_k)}-1\right)\right]\notag\\
&=\prod_{k=1}^Ke^{\lambda_k \left(e^{i(t_0+t_k)}-1\right)}, \notag
\end{align}
where $\lambda_k = \lambda \theta_k$ are independent gamma random variables, as the independent product of the gamma random variable $\lambda$ and the Dirichlet random vector $\thetav$, with the gamma shape parameter and Dirichlet concentration parameter both equal to $r_{\cdotv}$, leads to independent gamma random variables. 
Further marginalizing out $\lambda_k$, we have
\beq
\E\left[e^{i\tv^T\yv}\right] = \left[1-\frac{p}{1-p}\left(e^{i(t_0+t_k)}-1\right)\right]^{-r_k} =\prod_{k=1}^K\left(\frac{1-p}{1-pe^{i(t_0+t_k)}}\right)^{r_k}. \notag
\eeq
Thus $\xv$ and $\yv$ are equal in distribution as their characteristic functions are the same. 
\end{proof}

\begin{proof}[Poof of Proposition \ref{lem:DirMult}]
Multiplying the likelihood in (\ref{eq:DirMult_like}) with the PMF of the NB distribution in (\ref{eq:NBPF4}), we have
\beq
P(\xv_j,\zv_j,n_j\,|\,\Phimat,\thetav_j,p_j)=\frac{1}{n_j!}\prod_{v=1}^V\prod_{k=1}^K \frac{\Gamma(n_{vjk}+\phi_{vk}\theta_{kj})}{\Gamma(\phi_{vk}\theta_{kj})}p_j^{n_{vjk}}(1-p_j)^{\phi_{vk}\theta_{kj}}, \notag % \label{eq:NBFA_like}
\eeq
which, multiplied by the combinatorial coefficient ${n_j!}/({\prod_{v=1}^V\prod_{k=1}^K n_{vjk}!})$, becomes the same as (\ref{eq:NBPF5}).
\end{proof}

\begin{proof}[Proof of Proposition \ref{lem:CRP_DirCat}]
For the first hierarchical model, we have
\begin{align}
P(\bv,\zv\,|\,n,\rv,p) &= \left\{\prod_{k=1}^{K} \mbox{CRP}(\{b_i\}_{i:z_i=k}; n_{k},r_k)\right\}\mbox{DirCat}(\zv;n,\rv)\notag\\
& = \frac{\Gamma(r_{\cdotv})}{\Gamma(n+r_{\cdotv})}\prod_{k=1}^{K} \mbox{CRP}(\{b_i\}_{i:z_i=k}; n_{k},r_k)\frac{\Gamma(n_k+r_k)}{\Gamma(r_k)} \notag\\
& =\frac{\Gamma(r_{\cdotv})}{\Gamma(n+r_{\cdotv})}\prod_{k=1}^{K} \left\{ r_k^{\ell_k}\prod_{t=1}^{\ell_k}\Gamma(n_{kt}) \right\}, \notag
\end{align}
where $\ell_k$ is the number of unique indices in $\{b_i\}_{i:z_i=k}$ and $n_{kt}= \sum_{i=1}^n \delta(b_i=t,z_i=k)$.

For the second hierarchical model, we have
\begin{align}
P(\bv,\zv\,|\,n,\rv,p) &= P(\zv\,|\,\bv,\rv) \mbox{CRP}(\bv;n,r_{\cdotv})\notag\\
& = P(\zv\,|\,\bv,\rv) r_{\cdotv}^\ell \frac{\Gamma(r_{\cdotv})}{\Gamma(n+r_{\cdotv})} \prod_{t=1}^{\ell}\Gamma\left(\sum_{i=1}^n\delta(b_i=t)\right)\notag\\
& = \frac{\Gamma(r_{\cdotv})}{\Gamma(n+r_{\cdotv})} \left\{\prod_{k=1}^{K} r_k^{\sum_{t=1}^\ell \delta(s_t=k)} \right\}\left\{\prod_{t=1}^{\ell}\Gamma\left(\sum_{i=1}^n\delta(b_i=t)\right)\right\}\notag\\
& =\frac{\Gamma(r_{\cdotv})}{\Gamma(n+r_{\cdotv})}\prod_{k=1}^{K} \left\{ r_k^{\ell_k}\prod_{t=1}^{\ell_k}\Gamma(n_{kt}) \right\}, \notag
\end{align}
where $\ell_k =\sum_{t=1}^\ell \delta(s_t=k) $ is the number of unique indicies in $\{b_i\}_{i:z_i=k}$ and $n_{kt} =\delta(s_t=k) \sum_{i=1}^n \delta(b_i=t)= \sum_{i=1}^n \delta(b_i=t,z_i=k)$.
%\end{proof}

%\begin{proof}[Poof of Proposition \ref{lem:CRP_DirCat}]
%For both the first and second hierarchical models, we have
Simply applying the chain rule, we have
\begin{align}
P(\bv,\zv,n\,|\,\rv,p) &= P(\bv,\zv\,|\,n,\rv) \mbox{NB}(n;r_{\cdotv},p) = \frac{1}{n!}\prod_{k=1}^{K} \left\{ r_k^{\ell_k}(1-p)^{r_k}p^{n_k}\prod_{t=1}^{\ell_k}\Gamma(n_{kt}) \right\} . \notag
\end{align}
The mapping from $\{\bv,\zv,n\}$ to $\{\lv,\nv,n\}$ is many to one, with
\begin{align}
P(\lv,\nv,n\,|\,\rv,p) 
& = \prod_{k=1}^K \left\{\sum_{(n_1,\ldots,n_k)\in\mathcal{D}_{n_k,\ell_k}}\frac{1}{\ell_k! \prod_{t=1}^{\ell_k}n_{kt}!} r_k^{\ell_k}(1-p)^{r_k}p^{n_k}\prod_{t=1}^{\ell_k}\Gamma(n_{kt}) \right\}\notag\\
&=
\prod_{k=1}^{K} \left\{ r_k^{\ell_k}(1-p)^{r_k}p^{n_k} \frac{\,|\,s(n_k,\ell_k)\,|\,}{n_k!}\right\}, \notag
\end{align}
%summing over the sets %$\mathcal{D}_{\nv,\ellv}=(\mathcal{D}_{n_k,\ell_k}$, where 
where $\mathcal{D}_{n_k,\ell_k}:=\{(n_{k1},\ldots,n_{k\ell_k}): n_{kt}\ge 1\text{ and }\sum_{t=1}^{\ell_k} n_{kt} = n_k\}$.
% for all $k\in\{1,\ldots,K\}$,
%becomes the same as 
%where $\ell_k =\sum_{t=1}^\ell \delta(s_t=k) $ and $n_{kt} =\delta(s_t=k) \sum_{i=1}^n \delta(b_i=t)= \sum_{i=1}^n \delta(b_i=t,z_i=k)$.
For \eqref{eq:PoissonLogBig}, using the PMF of the Poisson-logarithmic bivariate distribution shown in (\ref{eq:Po-log}), we have
\begin{align}
P(\lv,\nv,n\,|\,\rv,p) 
& = \prod_{k=1}^{K} \left\{ r_k^{\ell_k}(1-p)^{r_k}p^{n_k} \frac{\,|\,s(n_k,\ell_k)\,|\,}{n_k!}\right\}. \notag
\end{align}
%where $\ell_k =\sum_{t=1}^\ell \delta(s_t=k) $ and $n_{kt} =\delta(s_t=k) \sum_{i=1}^n \delta(b_i=t)= \sum_{i=1}^n \delta(b_i=t,z_i=k)$.
\end{proof}

\begin{proof}[Proof of Theorem \ref{cor:CRTMult}]
For the first hierarchical model, we have
\begin{align}
P(\nv, \ellv\,|\,n,\rv) &= \left\{\prod_{k=1}^{K} \mbox{CRT}(\ell_k; n_{k},r_k)\right\}\mbox{DirMult}(\nv;n,\rv)\notag\\
& =\frac{n!}{\prod_{k=1}^K n_k!}\frac{\Gamma(r_{\cdotv})}{\Gamma(n+r_{\cdotv})}\prod_{k=1}^{K} r_k^{\ell_k}|s(n_k,\ell_k)|\notag\\
& = \frac{n!}{\prod_{k=1}^K \ell_k!}\frac{\Gamma(r_{\cdotv})}{\Gamma(n+r_{\cdotv})}\prod_{k=1}^{K} \left\{r_k^{\ell_k}\sum_{(n_{k1},\ldots,n_{k\ell_k})\in\mathcal{D}_{n_k,\ell_k}} \prod_{t=1}^{\ell_k} \frac{1}{n_{kt}} \right\}. \notag
\end{align}
Summing over all $\nv$ in the set $\mathcal{M}_{n,K}=\left\{(n_1,\ldots,n_K): n_k\ge 0\text{ and }\sum_{k=1}^K n_k=n\right\}$, we have
\begin{align}
P(\ellv\,|\,n,\rv) &= \sum_{(n_1,\ldots,n_K)\in\mathcal{M}_{n,K}} \frac{n!}{\prod_{k=1}^K \ell_k!}\frac{\Gamma(r_{\cdotv})}{\Gamma(n+r_{\cdotv})}\prod_{k=1}^{K} \left\{ r_k^{\ell_k}\sum_{(n_{k1},\ldots,n_{k\ell_k})\in\mathcal{D}_{n_k,\ell_k}} \prod_{t=1}^{\ell_k} \frac{1}{n_{kt}} \right\}\notag\\
& = \left\{\frac{n!}{\prod_{k=1}^K \ell_k!}\frac{\Gamma(r_{\cdotv})}{\Gamma(n+r_{\cdotv})}\prod_{k=1}^{K} r_k^{\ell_k} \right\} \left\{\sum_{(n_1,\ldots,n_K)\in\mathcal{M}_{n,K}}\sum_{(n_{k1},\ldots,n_{k\ell_k})\in\mathcal{D}_{n_k,\ell_k}} \prod_{t=1}^{\ell_k} \frac{1}{n_{kt}}\right\} \notag\\
&= \left\{\frac{n!}{\prod_{k=1}^K \ell_k!}\frac{\Gamma(r_{\cdotv})}{\Gamma(n+r_{\cdotv})}\prod_{k=1}^{K} r_k^{\ell_k}\right\} %\sum_{(n_1,\ldots,n_K)\in\mathcal{M}_{n,K}}
 \left\{\sum_{(n_{1},\ldots,n_{\ell_{\cdotv}})\in\mathcal{D}_{n,\ell_{\cdotv}}} \prod_{t=1}^{\ell_{\cdotv}} \frac{1}{n_{t}} \right\}\notag\\
& = \frac{\ell_{\cdotv}!}{\prod_{k=1}^K \ell_k!}\frac{\Gamma(r_{\cdotv})}{\Gamma(n+r_{\cdotv})}|s(n,\ell_{\cdotv})|\prod_{k=1}^{K} r_k^{\ell_k} \notag\\
& = \mbox{Mult}(\ellv;\ell_{\cdotv},r_1/r_{\cdotv},\ldots,r_K/r_{\cdotv}) \mbox{CRT}(\ell_{\cdotv};n,r_{\cdotv})\,\, . \notag
\end{align}
%where $\ell_k$ is the number of unique indices in $\{b_i\}_{i:z_i=k}$ and $n_{kt}= \sum_{i=1}^n \delta(b_i=t,z_i=k)$.

\end{proof}

\section{Posterior analysis for the hGNBP} %hierarchical gamma-negative binomial process }
\label{app:post}
 Denote $\ell\sim\mbox{CRT}(n,r)$ as the Chinese restaurant table (CRT) random variable %that %The random count $\ell\sim\mbox{CRT}(n,r)$ 
%can be 
generated as the summation of $n$ independent Bernoulli random variables as
$
\ell=\sum_{i=1}^{n} h_i,~h_i\sim\mbox{Bernoulli}\left(\frac{r}{r+i-1}\right).\notag
$
 The probability mass function (PMF) of %of the random number of tables in a Chinese restaurant process, denoted as 
 $\ell\sim\mbox{CRT}(n,r)$ can be expressed as
$
 f_L(\ell\,|\,n,r) = \frac{\Gamma(r)r^\ell}{\Gamma(n+r)}|s(n,\ell)|,\notag
$
where $\ell\in\{0,1,\ldots,n\}$ and $$|s(n,\ell)| = \frac{n!}{\ell!}\sum_{(n_1,\ldots,n_{\ell})\in \mathcal{D}_{n,\ell}}\prod_{t=1}^\ell\frac{1}{n_t}$$
are unsigned Stirling numbers of the first kind \citep{johnson1997discrete} that can be obtained by summing over the elements of the set $\mathcal{D}_{n,\ell}=\{(n_1,\ldots,n_{\ell}): n_t\ge 1\text{ and }\sum_{t=1}^\ell n_t = n\}$. %Following \citet{NBP2012},
 %as a random count drawn from a Chinese restaurant table (CRT) distribution, which 
%Let $n\sim\mbox{NB}(r,p)$ denote the NB distribution % \citep{Yule} 
%with PMF
%$f_N(n\,|\,r,p)=\frac{\Gamma(n+r)}{n!\Gamma(r)} p^n(1-p)^r,$ where $n\in\{0,1,\ldots\}$,
%and 

Let $u\sim\mbox{Logarithmic}(p)$ denote the logarithmic distribution \citep{Fisher1943} with PMF
$
f_U(u\,|\,p) = \frac{1}{-\ln(1-p)}\frac{p^u}{u},\notag
$
where $u\in\{1,2,\ldots\}$.
Denote $x \sim\mbox{SumLog}(\ell,p)$ as the sum-logarithmic random variable \citep{NBP_CountMatrix} generated as
$x=\sum_{t=1}^{\ell} u_t,~u_t\sim\mbox{Logarithmic}(p)$, with PMF $
f_{N}(n \,|\, \ell,p) = \frac{ p^n \ell! \ |s(n,\ell)|}{n! \ [-\ln(1-p)]^\ell} .
$
As revealed in \citet{NBP2012},
the joint distribution of $n$ and $\ell$ given $r$ and $p$ in 
$
\ell\,|\, n \sim\mbox{CRT}(n,r),~n\sim\mbox{NB}(r,p)\notag
$
 is the same as that in
$
n \,|\, \ell \sim \mbox{SumLog}(\ell,p), %= \sum_{t=1}^\ell u_t,~u_t\sim\mbox{Logarithmic}(p),
 ~\ell\sim\mbox{Pois}[-r\ln(1-p)],%\label{eq:CompoundPo}
$
which is called as the Poisson-logarithmic bivariate distribution, with PMF 
\beq
f_{N,L}(n,\ell\,|\,r,p)=\frac{|s(n,\ell)|r^\ell}{n!}p^n(1-p)^r.\label{eq:Po-log}
\eeq

% \vspace{-2mm}
Denote $L\sim\mbox{CRTP}(X,G)$ as a CRT process such that
$ %\label{eq:Latable}
 L(A)=\sum_{\omega\in A}L(\omega), L(\omega)\sim\mbox{CRT}[X(\omega),G(\omega)]
$
for each $A\subset\Omega$, and $X\sim\mbox{SumLogP}(L,p)$ as a sum-logarithmic process such that
$
X(A) \sim \mbox{SumLog}[L(A),p]
$
for each $A\subset\Omega$. %, where $x \sim\mbox{SumLog}(\ell,p)$ is the sum-logarithmic random variable \citep{NBP_CountMatrix} generated as
%$x=\sum_{t=1}^{\ell} u_t,~u_t\sim\mbox{Logarithmic}(p)$.
 As in \citet{NBP2012}, generalizing the Poisson-logarithmic bivariate distribution, % in Section \ref{sec:useful_dist}, 
 one may show that $X$ and $L$ given $G$ and $p$ in %representation
$$L \,|\,X \sim\mbox{CRTP}(X,G),~X\sim\mbox{NBP}(G,p)$$
is equivalent in distribution to those in %in distribution to
$$X \,|\, L \sim\mbox{SumLogP}(L,p),~L\sim\mbox{PP}[-G\ln(1-p)],$$
where $L\sim\mbox{PP}[-G\ln(1-p)]$ is a Poisson process such that $L(A)\sim\mbox{Pois}[-G(A)\ln(1-p)]$ for each $A\subset\Omega$.
Generalizing the analysis for the GNBP in \citet{NBP2012} and \citet{NBP_CountMatrix}, with 
$$\tilde{p}_j := \frac{-\ln(1-p_j)}{c_j-\ln(1-p_j)},~~\tilde{\tilde{p}}: = \frac{-\sum_j \ln(1-\tilde{p}_j)}{c_0-\sum_j\ln(1-\tilde{p}_j)},$$
we can express the conditional posteriors of $G$ and $\Theta_j$ as
\beqs
&(L_j\,|\,X_j,\Theta_j)\sim\mbox{CRTP}(X_j,\Theta_j),~~(\widetilde{L}_j\,|\,L_j,G)\sim\mbox{CRTP}(L_j,G),\notag\\
%&(\tilde{\widetilde{L}}\,|\,\{\widetilde{L}_j\}_j,G_0)\sim\mbox{CRTP}(\sum_j \widetilde{L}_j,G_0)\notag\\
%&(\gamma_0\,|\,\tilde{\widetilde{L}}, \{p_j,c_j\}_j,c_0)\sim\mbox{Gamma}\left(e_0+\tilde{\widetilde{L}}(\Omega),\frac{1}{f_0-\ln(1-\tilde{\tilde{p}})}\right)\notag\\
&(G\,|\,\{\widetilde{L}_j,p_j\}_j,G_0)\sim\Gamma\mbox{P}\left(G_0+\sum_j \widetilde{L}_j,[{c_0-\sum_j\ln(1-\tilde{p}_j)}]^{-1}\right),\notag\\
&(\Theta_j\,|\,G,\widetilde{L}_j,p_j,c_j)\sim\Gamma\mbox{P}\left(G+ {L}_j,[{c_j-\ln(1-p_j)}]^{-1}\right).\label{eq:post_analysis}
\eeqs %,~~\tilde{\tilde{p}}: = \frac{-\sum_j \ln(1-\tilde{p}_j)}{c_0-\sum_j\ln(1-\tilde{p}_j)}.$$ 
If we let $\gamma_0\sim\mbox{Gamma}(a_0,1/b_0)$, the conditional posterior of $\gamma_0$ can be expressed as
\beqs
&(\widetilde{\widetilde{L}}\,|\,\{\widetilde{L}_j\}_j,G_0)\sim\mbox{CRTP}(\sum_j \widetilde{L}_j,G_0),\notag\\
&(\gamma_0\,|\,\widetilde{\widetilde{L}}, \{p_j,c_j\}_j,c_0)\sim\mbox{Gamma}\left(a_0+\widetilde{\widetilde{L}}(\Omega),[{b_0-\ln(1-\tilde{\tilde{p}})}]^{-1}\right). \notag
\eeqs
 If the base measure $G_0$ is finite and continuous, we have $$\widetilde{\widetilde{L}}(\Omega)=\sum_{k=1}^\infty \delta\left(\sum_j\widetilde{L}_{j}(\phiv_k)>0\right)=\sum_{k=1}^\infty \delta\left(\sum_j X_{j}(\phiv_k)>0\right),$$ which is the number of active atoms that are associated with nonzero counts, otherwise we have $\big(\widetilde{\widetilde{L}}(\omega_k)\,|\,\{\widetilde{L}_j\}_j,G_0\big)\sim\mbox{CRT}\big[\sum_j \widetilde{L}_j (\omega_k),G_0(\omega_k)\big]$ for all atoms $\omega_k\in \Omega$. %In the following discussion, 
 In this paper, we let $K^+= \widetilde{\widetilde{L}}(\Omega)$ denote the number of active atoms.
% \vspace{-4mm}

\section{Additional Gibbs sampling update equations}

\begin{algorithm}[h]
%\small
 \caption{Gibbs sampling algorithms for the hierarchical gamma-negative binomial process negative binomial factor analysis (%hGNBP-NBFA
 Dirichlet-multinomial mixed-membership model).
% \textbf{Inputs:} observed counts $\{x_{vj}\}_{v,j}$, topic Dirichlet . 
% \textbf{Outputs:} A total of $T_{\max}$ jointly trained PGBNs with depths $T=1$, $T=2$, $\ldots$, and $T=T_{\max}$.
 }\label{tab:algorithm}
 \begin{algorithmic}[1] 
 %\State \text{}
 %\State \text{Initialize model parameters} % at random}
% \For{\text{$T=1,2,\ldots,T_{\max}$}} Jointly train all the $T$ layers of the network
% %\For{\text{$Iter=1:PreTrainIterations$}} 
% \State
% Set $K_{T-1}$, the inferred width of layer $T-1$, as $K_{T\max}$, the upper bound of layer $T$'s width. % if $T>1$.
% \State
% Let $r_k\sim \mbox{Gamma}(\gamma_0/K_{T\max},1/c_0)$.
% %\emph{Pre-train layer $T$ with the lower $T-1$ layers fixed.}
% %\EndFor
 \For{\text{$iter=1: MaxIter$ %Burnin_T+Collection_T$
 }} Gibbs sampling
 % \Loop{ Gibbs sampling} % iteration}
 % \For{\texttt{$iter=1:maxIter$}}
 % \State
 %\State $r\gets a\bmod b$
\algnewcommand\algorithmicswitch{\textbf{switch}}
\algnewcommand\algorithmiccase{\textbf{case}}
\algnewcommand\algorithmicassert{\texttt{assert}}
\algnewcommand\Assert[1]{\State \algorithmicassert(#1)}%
% New "environments"
\algdef{SE}[SWITCH]{Switch}{EndSwitch}[1]{\algorithmicswitch\ #1\ \algorithmicdo}{\algorithmicend\ \algorithmicswitch}%
\algdef{SE}[CASE]{Case}{EndCase}[1]{\algorithmiccase\ #1}{\algorithmicend\ \algorithmiccase}%
\algtext*{EndSwitch}%
\algtext*{EndCase}%
\Switch{$Gibbs~sampler$}
 \Case{$regular~blocked~Gibbs~sampler$}
 \State \text{sample $\{z_{ji}\}_{j,i}$ and then calculate $\{n_{vjk}\}_{v,j,k}$}; \text{sample a latent count $\ell_{vjk}$ for each $n_{vjk}$};
 \EndCase
 \Case{$collapsed~Gibbs~sampler$}
 \State \text{sample $\{z_{ji},b_{ji}\}_{j,i}$ and then calculate $\{n_{vjk},\ell_{vjk}\}_{v,j,k}$};
 \EndCase
 \Case{$compound~Poisson~based~blocked~Gibbs~sampler$}
 \State \text{sample a latent count $\ell_{vj}$ for each $n_{vj}$}; \text{sample $\{\ell_{vjk}\}_{k}$ for each $\ell_{vj}$};
 \EndCase
 \EndSwitch
 \State \textbf{end switch}
 \State \text{sample $\{p_j\}_{j}$}; \text{sample $\{c_j\}_{j}$}; \text{sample $\gamma_0$; sample $c_0$}; relabel the %indices of 
 active factors from 1 to $K^+$. 
\If {$collapsed~Gibbs~sampler$}
 \State \text{sample $\{r_k\}_{1,K^+}$; sample $r_{\star}$; sample $\{\theta_{\cdotv j}\}_j$};
 \Else
\For{\texttt{$k=1:K^++K_\star$}}
 \State 
 %sample $p_j^{(t)}$ and calculate $c_j^{(t)}$ if $t=2$ and 
 sample $\phiv_k^{(t)}$; sample $r_k$; sample $\{\theta_{kj}\}_j$;
 \EndFor

\EndIf
% \If{blocked Gibbs sampler = 'regular'}
% \State
% \text{sample $\{z_{ji}\}_{j,i}$ and then calculate $\{n_{vjk}\}_{v,j,k}$};
% \EndIf
% \If{blocked Gibbs sampler = 'regular'}
% \State \text{sample a latent count $\ell_{vjk}$ for each $n_{vjk}$};
% \EndIf
% \State \text{sample $\{p_j\}_{j}$}; \text{sample $\{c_j\}_{j}$};
% \State \text{sample $\gamma_0$; sample $c_0$}; 
 
 \EndFor
 \end{algorithmic}
 \normalsize
\end{algorithm}
%\begin{algorithm}[t]
%\small
% \caption{Collapsed Gibbs sampler for the hGNBP-NBFA.
%% \textbf{Inputs:} observed counts $\{x_{vj}\}_{v,j}$, topic Dirichlet . 
%% \textbf{Outputs:} A total of $T_{\max}$ jointly trained PGBNs with depths $T=1$, $T=2$, $\ldots$, and $T=T_{\max}$.
% }\label{tab:algorithm}
% \begin{algorithmic}[1] 
% %\State \text{}
% %\State \text{Initialize model parameters} % at random}
%% \For{\text{$T=1,2,\ldots,T_{\max}$}} Jointly train all the $T$ layers of the network
%% %\For{\text{$Iter=1:PreTrainIterations$}} 
%% \State
%% Set $K_{T-1}$, the inferred width of layer $T-1$, as $K_{T\max}$, the upper bound of layer $T$'s width. % if $T>1$.
%% \State
%% Let $r_k\sim \mbox{Gamma}(\gamma_0/K_{T\max},1/c_0)$.
%% %\emph{Pre-train layer $T$ with the lower $T-1$ layers fixed.}
%% %\EndFor
% \For{\text{$iter=1: MaxIter$ %Burnin_T+Collection_T$
% }} Collapsed Gibbs sampling
% % \Loop{ Gibbs sampling} % iteration}
% % \For{\texttt{$iter=1:maxIter$}}
% % \State
% 
% %\State $r\gets a\bmod b$
% \State 
% \text{sample $\{z_{ji},b_{ji}\}_{j,i}$ and then calculate $\{n_{vjk},\ell_{vjk}\}_{v,j,k}$};
% %\State \text{sample a latent count $\ell_{vjk}$ for each $n_{vjk}$};
% \State \text{sample $\{p_j\}_{j}$}; \text{sample $\{c_j\}_{j}$};
% \State \text{sample $\gamma_0$; sample $c_0$}; 
% \State \text{sample $\{r_k\}_{1,K^+}$; sample $r_{\star}$; sample $\{\theta_{\cdotv j}\}_j$};
% \EndFor
% \end{algorithmic}
% \normalsize
%\end{algorithm}

\subsection{Additional update equations for blocked Gibbs sampling}\label{app:hGNBP_1}

\emph{\textbf{Sample $p_j$.}} We sample $p_j$ as
\beq
(p_j\,|\,-)\sim\mbox{Beta}(a_0 + n_j,b_0+\theta_{\cdotv j}).\label{eq:sampletpj}
\eeq
\emph{\textbf{Sample $c_j$.}} We sample $c_j$ as
\beq
(c_j\,|\,-)\sim \mbox{Gamma}\left[e_0 + G(\Omega),1/(f_0+\theta_{\cdotv j})\right]. \notag
\eeq
\emph{\textbf{Sample $r_k$}}. % and its hyper-parameters.}} %Since $\ell_{vjk}\sim\mbox{Pois}(-\phi_{vk}\theta_{kj}\ln(1-p_j))$ in the prior, we can
We first sample latent counts % for each $\ell_{\cdotv j k}$ and 
and then sample $\gamma_0$ and $c_0$ as
\begin{align}
%&\{b_{ji}\}_{i:x_{ji}=v,z_{ji}=k}\sim\mbox{CRP}(n_{vjk},\phi_{vk}\theta_{kj})\notag\\
&(\tilde{\ell}_{jk}\,|\,-)\sim\mbox{CRT}(\ell_{\cdotv jk},r_k),~(\gamma_0\,|\,-)\sim \mbox{Gamma}\left(a_0+K^+,\frac{1}{b_0-\ln(1-\tilde{\tilde{p}})}\right),\notag\\
&(c_0\,|\,-)\sim \mbox{Gamma}\left(e_0 + \gamma_0,\frac{1}{f_0+G(\Omega)}\right), \notag %~(r_k\,|\,-)\sim \mbox{Gamma}\left(\gamma_0/K + \tilde{\ell}_{\cdotv k},\frac{1}{c_0-\sum_j\ln(1-\tilde{p}_j)}\right),
\end{align}
%
%\begin{align}
%%&\{b_{ji}\}_{i:x_{ji}=v,z_{ji}=k}\sim\mbox{CRP}(n_{vjk},\phi_{vk}\theta_{kj})\notag\\
%&(\tilde{\ell}_{jk}\,|\,-)\sim\mbox{CRT}(\ell_{\cdotv jk},r_k),~(\tilde{\tilde{\ell}}_k\,|\,-)\sim\mbox{CRT}(\tilde{\ell}_{\cdotv k}, \gamma_0/K)\notag\\
%&(\gamma_0\,|\,-)\sim \mbox{Gamma}\left(a_0+\tilde{\tilde{\ell}}_{\cdotv},\frac{1}{b_0-\ln(1-\tilde{\tilde{p}})}\right),~(c_0\,|\,-)\sim \mbox{Gamma}\left(e_0 + \gamma_0,\frac{1}{f_0+r}\right).\notag\\
%&(r_k\,|\,-)\sim \mbox{Gamma}\left(\gamma_0/K + \tilde{\ell}_{\cdotv k},\frac{1}{c_0-\sum_j\ln(1-\tilde{p}_j)}\right),
%\end{align}
where $\tilde{\ell}_{\cdotv k} := \sum_j \tilde{\ell}_{j k}$, $K^+:=\sum_{k}\delta(n_{\cdotv \cdotv k}>0) = \sum_{k}\delta(\tilde{\ell}_{\cdotv k} >0) $. %, $\tilde{\tilde{\ell}}_{\cdotv} := \sum_k\tilde{\tilde{\ell}}_k $, 
%and $$\tilde{p}_j := \frac{-\ln(1-p_j)}{c_j-\ln(1-p_j)},~~\tilde{\tilde{p}}: = \frac{-\sum_j \ln(1-\tilde{p}_j)}{c_0-\sum_j\ln(1-\tilde{p}_j)}.$$
For all the points of discontinuity, $i.e.$, the factors in the set $\{\phiv_k\}_{k:n_{\cdotv \cdotv k}>0}$, we relabel their indices from 1 to $K^+$ and then sample $r_k$ as
\beq
(r_k\,|\,-)\sim \mbox{Gamma}\left(\tilde{\ell}_{\cdotv k},\frac{1}{c_0-\sum_j\ln(1-\tilde{p}_j)}\right), \notag
\eeq
and for the absolutely continuous space $\{\phiv_k\}_{k:n_{\cdotv \cdotv k}=0}$, we draw $K_\star$ unused atoms, whose weights are sampled as % has weight
\beq
(r_k\,|\,-)\sim \mbox{Gamma}\left(\frac{\gamma_0}{K_\star} ,\frac{1}{c_0-\sum_j\ln(1-\tilde{p}_j)}\right). \notag %\label{eq:sampler_k}
\eeq
We let $K:=K^+ +K_\star$ and $G(\Omega):=\sum_{k=1}^{K} r_k $.\\ % and proceed to the next iteration. 
\emph{\textbf{Sample $\phiv_{k}$.}} Denote $\ell_{v\cdotv k}=\sum_{j=1}^J \ell_{vj k}$. Since $\ell_{vjk}\sim\mbox{Pois}[-\phi_{vk}\theta_{kj}\ln(1-p_j)]$ in the prior, we can sample $\phiv_{k}$ as 
\beq
(\phiv_k\,|\,-)\sim\mbox{Dir}(\eta+\ell_{1\cdotv k},\ldots,\eta+\ell_{V\cdotv k}). \notag %\label{eq:samplephi}
\eeq
\emph{\textbf{Sample $\theta_{kj}$.}} Denote $\ell_{\cdotv j k}=\sum_{v=1}^V \ell_{vj k}$. We can sample $\theta_{kj}$ as 
\beq
(\theta_{kj}\,|\,-)\sim\mbox{Gamma}\left(r_k + \ell_{\cdotv j k},\frac{1}{c_j-\ln(1-p_j)}\right).\label{eq:sampletheta}
\eeq

\subsection{Additional update equations for collapsed Gibbs sampling}\label{app:hGNBP_2}

The other model parameters can all be sampled in the way similar to how they are sampled in Section \ref{sec:blockGibbs}. Below we highlight the differences. First, we do not need to sample $\{\phiv_k\}$. Instead of sampling $\{\theta_{kj}\}_k$, %One may sample all $\theta_{kj}$ or 
%
%using (\ref{eq:samplephi})-(\ref{eq:sampler_k}).
%
 %\emph{\textbf{sample $\theta_{j}$.}} Denote $\ell_{\cdotv j k}=\sum_{v=1}^V \ell_{vj k}$. 
 we only need to sample $\theta_{\cdotv j}$ as 
\beq
(\theta_{\cdotv j}\,|\,-)\sim\mbox{Gamma}\left[G(\Omega) +\textstyle\sum_k \ell_{\cdotv j k},1/(c_j-\ln(1-p_j)\right]. \notag
\eeq 
For the absolutely continuous space, we have 
\beq 
(r_\star\,|\,-)\sim \mbox{Gamma}\left(\gamma_0,\frac{1}{c_0-\sum_j\ln(1-\tilde{p}_j)}\right). \notag
\eeq
We have $K^+=\sum_k \delta(\ell_{\cdotv \cdotv k}>0)$ and $G(\Omega) := r_\star + \sum_{k:\,\ell_{\cdotv \cdotv k}>0} r_k $.

\section{Gamma-negative binomial process PFA and DCMLDA}\label{app:GNBP}
% \vspace{-3mm}
We consider the GNBP \citep{NBP2012} as
\vspace{-3mm}\beq
X_j\sim\mbox{NBP}(G,p_j),~G\sim\Gamma\mbox{P}(G_0,1/c_0).\label{eq:GNBP}\vspace{-3mm}
\eeq
The GNBP multinomial mixed-membership model of \citet{NBP2012} can be expressed as
\vspace{-3mm}\begin{align}
&x_{ji}\sim\mbox{Cat}(\phiv_{z_{ji}}),~z_{ji}\sim\mbox{Cat}(\thetav_j/\theta_{\cdotv j}),\notag\\
&\theta_{kj}\sim\mbox{Gamma}\left[r_k,{p_j}/{(1-p_j)}\right],\notag\\
%&\{b_{ji}\}_{i:x_{ji}=v,z_{ji}=k}\sim\mbox{CRP}(n_{vjk},\phi_{vk}\theta_{kj})\notag\\
%&\{x_{ji}\}_{i:z_{ji}=k}\sim\mbox{DirCat}(n_{ jk},\phi_{1k} \theta_{kj},\ldots,\phi_{Vk} \theta_{kj}),\notag\\
%&\zv_j\sim\mbox{DirCat}(n_j,\thetav_j),~
&n_j\sim\mbox{Pois}(\theta_{\cdotv j}), ~p_j\sim\mbox{Beta}(a_0,b_0),\notag\\
%&\theta_{kj}\sim\mbox{Gamma}(r_k,1/c_j),~c_j\sim\mbox{Gamma}(e_0,1/f_0),\notag\\
&G\sim\Gamma\mbox{P}(G_0,1/c_0), \notag %\label{eq:GNBP PFA}
%&\phiv_k\sim\mbox{Dir}(\eta,\ldots,\eta),~\theta_{kj}\sim\mbox{Gamma}(r_k,1/c_j),\notag\\
%&r_k\sim\mbox{Gamma}(\gamma_0/K,1/c_0),~c_j\sim\mbox{Gamma}(e_0,1/f_0).
\vspace{-3mm}\end{align}
which, as far as the conditional posteriors of $\{\phiv_k\}_k$ and $\{\thetav_j\}_j$ are concerned, can be equivalently represented as the GNBP-PFA 
\begin{align}
&n_{vj}=\sum_{k=1}^\infty n_{vjk}, ~~n_{vjk}\sim\mbox{Pois}(\phi_{vk}\theta_{kj}).\notag\\
&\theta_{kj}\sim\mbox{Gamma}\left[r_k,{p_j}/{(1-p_j)}\right],\notag\\
&p_j\sim\mbox{Beta}(a_0,b_0),~G\sim\Gamma\mbox{P}(G_0,1/c_0). \notag %\label{eq:NBFA1_1}
\end{align}
Similar to how adaptive truncation is used in blocked Gibbs sampling for the hGNBP-NBFA, one may readily extend the blocked Gibbs sampler for the GNBP multinomial mixed-membership model developed in \citet{NBP2012}, which has a fixed finite truncation, to a one with adaptive truncation. We omit these details for brevity. We %provide the update equations for 
describe a collapsed Gibbs sampler for the GNBP-PFA in Appendix \ref{app:GNBP-PFA}.
% in the Appendix. 

As discussed before, the GNBP can also be applied to DCMLDA to support countably infinite factors. 
%The GNBP has been successfully used in \citet{NBP2012} as the nonparametric Bayesian prior for PFA (the multinomial topic mode), however, it is found to be inadequate for NBFA.
We express the GNBP-DCMLDA as
\vspace{-2mm}\begin{align}
&x_{ji}\sim\mbox{Cat}(\phiv^{[j]}_{z_{ji}}),~z_{ji}\sim\mbox{Cat}(\thetav_j),\notag\\
&\phiv^{[j]}_k\sim\mbox{Dir}(\phiv_k r_k),~\thetav_j\sim\mbox{Dir}(\rv),\notag\\
%&\{b_{ji}\}_{i:x_{ji}=v,z_{ji}=k}\sim\mbox{CRP}(n_{vjk},\phi_{vk}\theta_{kj})\notag\\
%&\{x_{ji}\}_{i:z_{ji}=k}\sim\mbox{DirCat}(n_{ jk},\phi_{1k} \theta_{kj},\ldots,\phi_{Vk} \theta_{kj}),\notag\\
%&\zv_j\sim\mbox{DirCat}(n_j,\thetav_j),~
&n_j\sim\mbox{NB}(r_{\cdotv},p_j), ~p_j\sim\mbox{Beta}(a_0,b_0),\notag\\
%&\theta_{kj}\sim\mbox{Gamma}(r_k,1/c_j),~c_j\sim\mbox{Gamma}(e_0,1/f_0),\notag\\
&G\sim\Gamma\mbox{P}(G_0,1/c_0), \label{eq:GNBP DCMLDA}
%&\phiv_k\sim\mbox{Dir}(\eta,\ldots,\eta),~\theta_{kj}\sim\mbox{Gamma}(r_k,1/c_j),\notag\\
%&r_k\sim\mbox{Gamma}(\gamma_0/K,1/c_0),~c_j\sim\mbox{Gamma}(e_0,1/f_0).
\vspace{-3mm}
\end{align}
which, as far as the conditional posteriors of $\{\phiv_k\}_k$ and $\{r_k\}_k$ are concerned, can be equivalently represented as
\vspace{-3mm}\begin{align}
&n_{vj}=\sum_{k=1}^\infty n_{vjk}, ~~n_{vjk}\sim\mbox{NB}(\phi_{vk}r_k,\, p_j).\notag\\
&p_j\sim\mbox{Beta}(a_0,b_0),~G\sim\Gamma\mbox{P}(G_0,1/c_0).\label{eq:NBFA1_1}\vspace{-3mm}
\end{align}
The restriction is evident from (\ref{eq:NBFA1_1}) as all the samples are enforced to have the same factor scores as $\rv_k$ under the shared factors $\{\phiv_k\}_k$. Blocked Gibbs sampling with and without sampling $z_{ji}$ %and collapsed Gibbs sampling 
for the GNBP-DCMLDA
can be similarly derived as those for the hGNBP DMMM model, omitted here for brevity. 
 We describe in detail a collapsed Gibbs sampler
%We provide the update equations for collapsed Gibbs sampling
 for the GNBP-DCMLDA in Appendix \ref{app:DCMLDA}.
% in the Appendix. 
 % \vspace{-5mm}

\subsection{Collapsed Gibbs sampling for GNBP-PFA}\label{app:GNBP-PFA}

For the GNBP in \eqref{eq:GNBP}, the conditional likelihood $p(\{X_j\}_{1,J}\,|\, G)$ is shown in Appendix B.1 of \citet{NBP_CountMatrix}. As there is a one-to-many mapping from $\{X_j\}_{1,J}$ to $\zv=\{z_{11},\ldots,z_{Jm_J}\}$, similar to the analysis in \citet{BNBP_EPPF}, we have the joint likelihood of $\zv$ and the sample lengths $\mv=(m_1,\ldots,m_J)$
as %The joint likelihood of $\zv$ and $\mv$ given $G=\sum_{k=1}^\infty r_k\delta_{\omega_k}$ can be expressed as
 \begin{align}\label{eq:Likelihood}
p(\zv,\mv\,|\, G,\pv) = \frac{p(\{X_j\}_{1,J}\,|\, G)}{ \prod_{j=1}^J \frac{n_j!}{{\prod_{k=1}^{\infty}} n_{jk}!}} %=p(n_{j1},\cdots,n_{jK}|p_j,G) 
%= \E_{\Theta_j}[p(\zv_j,n_j|\Theta_j)] %= \frac{1}{n_j!}\prod_{k=1}^\infty \frac{((1-p_j)/p_j)^{r_k}}{\Gamma(r_k)}\int_{0}^\infty \theta_{jk}^{n_{jk+r_k-1}}e^{-\theta_{jk}/p_j}\\
&= %e^{-q_{\cdotv}G(\Omega\backslash\mathcal{D}_J)} 
\prod_{j=1}^J\frac{p_j^{n_{j}}}{n_j!}\prod_{k=1}^{\infty} \frac{\Gamma(n_{jk}+r_k)}{\Gamma(r_k)}(1-p_j)^{r_k}. \notag %\\
%&=\frac{\prod_{k=1}^\infty n_{jk}}{n_j!} \prod_{k=1}^\infty \mbox{NB}(n_{jk}; r_k,p_j)
\end{align}
Assuming the $K^+$ factors that are associated with nonzero counts are relabeled in an arbitrary oder from $1$ to $K^+$, based on this conditional likelihood, we have a prediction rule conditioning on $G$ as
\begin{align}
P(z_{ji}=k\,|\, \zv^{-ji},\mv,G) &\propto
\begin{cases} \vspace{0.15cm} n_{jk}^{-ji}+r_k
, & \mbox{for } k=1,\cdots,(K^+)^{-ji} ; \\
r_{\star}, %\frac{e_0+(K^+)^{-ji}}{f_0+\psi(c+r_{\cdotv} ) - \psi(c)}\frac{r_j}{c+ r_{\cdotv}}, 
& \mbox{if } k=(K^+)^{-ji}+1,
\end{cases} \notag 
\end{align}
where $r_{\star}=G(\Omega\backslash\{\phiv_k\}_{1,K^+})$ is the total weight of all the factors assigned with zero count. 
This prediction rule becomes very similar to the direct assignment sampler of the hierarchical Dirichlet process \citep{HDP} if one writes each $r_k$ as the product of a total random mass $\alpha$ and a probability $\pi_k$, with $\alpha = \sum_{k=1}^\infty r_k$ and $\sum_{k=1}^\infty \pi_k=1$. This is as expected since the gamma process can be represented as the independent product of a gamma process and a Dirichlet process, under the condition that the mass parameter of the gamma process is the same as the concentration parameter of the Dirichlet process, and the GNBP is closely related to the hierarchical Dirichlet process for mixed-membership modeling \citep{NBP2012}. 

Similar to the derivation of collapsed Gibbs sampling for the mixed-membership model based on the beta-negative binomial process, as shown in \cite{BNBP_EPPF}, we can write the collapsed Gibbs sampling update equation for the factor indices as
\begin{align} %\label{eq:GNBP_MultTM}
P(z_{ji}=k|\xv, \zv^{-ji},\mv,G) &\propto
\begin{cases}\displaystyle \vspace{0.15cm} \frac{\eta+n_{v_{ji}\cdotv k}^{-ji}}{V\eta+ n_{\cdotv k}^{-ji}}\cdotv (n_{jk}^{-ji}+r_k)
, & \mbox{for } k=1,\ldots,(K^+)^{-ji} ; \\
\displaystyle\frac{1}{ V}\cdotv r_{\star}, %\frac{e_0+(K^+)^{-ji}}{f_0+\psi(c+r_{\cdotv} ) - \psi(c)}\frac{r_j}{c+ r_{\cdotv}}, 
& \mbox{if } k=(K^+)^{-ji}+1;
\end{cases} \notag
\end{align}
and if $k= (K^+)^{-ji}+1$ happens, %similar to the direct assignment sampler for the hierarchical Dirichlet process \citep{HDP}, 
we draw $\beta\sim\mbox{Beta}(1,\gamma_0)$ and then let $r_k =\beta r_{\star} $ and $r_{\star} =(1-\beta) r_{\star}$. 
Gibbs sampling update equations for the other model parameters of the GNBP can be similarly derived as in \citet{NBP2012} and \citet{NBP_CountMatrix}, omitted here for brevity. 

\subsection{Collapsed Gibbs sampling for GNBP-DCMLDA}\label{app:DCMLDA}
For collapsed Gibbs sampling of (\ref{eq:GNBP DCMLDA}), introducing the auxiliary variables
\beq
\{b_{ji}\}_{i:x_{ji}=v,z_{ji}=k}\sim\mbox{CRP}(n_{vjk},\phi_{vk}r_k), \notag %\label{eq:aug_bji_1}
\eeq
%are added
 we have
the joint likelihood of $\bv_j$, $\zv_j$, $\xv_j$ and $n_j$ for DCMLDA as
\beq
%\frac{\prod_v\prod_k\left(n_{vjk}!\prod_{t=1}^{\ell_{vjk}}\Gamma(n_{vjkt})\right)}
P(\bv_j,\xv_j,\zv_j,n_j\,|\,\Phimat,\thetav_j,p_j)=\frac{1}{n_j!}\prod_v\prod_{k} \left\{{(\phi_{vk}r_k)^{\ell_{vjk}}}{ p_j^{n_{vjk}}(1-p_j)^{\phi_{vk}r_k}} \prod_{t=1}^{\ell_{vjk}}\Gamma(n_{vjkt})\right\}, \notag %\label{eq:fullyaugmented_1}
\eeq
where $\ell_{vjk} $ is the number of unique indices in $\{b_{ji}\}_{i:x_{ji}=v,z_{ji}=k}$ and $n_{vjkt} =\sum_{i=1}^{n_j} \delta(x_{ji}=v,z_{ji}=k,b_{ji}=t)$.

Marginalizing out $\Phimat$ from this likelihood, we have
\begin{align}
P(\{\bv_j,\xv_j,\zv_j,n_j\}_j\,|\,G,\pv) = 
&\, e^{ r_{\star} \sum_j \ln(1-p_j)}\left\{\prod_j p_j^{n_j}\frac{\prod_v\prod_k\left(\prod_{t=1}^{\ell_{vjk}}\Gamma(n_{vjkt})\right)}{n_j!}\right\}\notag\\
&\times\left\{\prod_{k:\,\ell_{\cdotv \cdotv k}>0} r_k^{\ell_{\cdotv \cdotv k}} e^{r_k\sum_j \ln(1-p_j)} \frac{\Gamma(V\eta)}{\Gamma(\ell_{\cdotv\cdotv k}+V\eta)}\prod_{v=1}^V\frac{\Gamma(\ell_{v\cdotv k}+\eta)}{\Gamma(\eta)} \right\},\label{eq:Like_1}
\end{align}
where $r_\star :=\sum_{k:\,\ell_{\cdotv \cdotv k}=0} r_k$.
%Using the likelihood in (\ref{eq:Like_1}),
With this likelihood, we have
\begin{align} 
P(z_{ji}=k,b_{ji}=t\,|\,x_{ji},\zv^{-ji},\bv^{-ji},G) &\propto
\begin{cases}\vspace{.15cm}
\displaystyle n_{x_{ji}jkt}^{-ji}
, & \emph{\mbox{if }} k\le (K^+)^{-ji}, t\le \ell_{x_{ji}jk}^{-ji} ; \\ 
%\frac{e_0+(K^{+})^{-ji}}{f_0+\psi(c+r_{\cdotv} ) - \psi(c)}\
\displaystyle \frac{\ell_{x_{ji}\cdotv k}^{-ji}+\eta}{\ell_{\cdotv\cdotv k}^{-ji}+V\eta}r_k, & \emph{\mbox{if }} k\le (K^+)^{-ji}, t= \ell_{x_{ji}jk}^{-ji}+1 ;\\
\vspace{.15cm}
\displaystyle \frac{r_{\star}}{V}, & \emph{\mbox{if }} k= (K^+)^{-ji}+1, t=1 ; \end{cases} \notag
\end{align}
and if $k= (K^+)^{-ji}+1$ happens, then we draw then we draw $\beta\sim\mbox{Beta}(1,\gamma_0)$ and then let $r_k =\beta r_{\star} $ and $r_{\star} =(1-\beta) r_{\star}$. %Instead of drawing both $z_{ji}$ and $b_{ji}$ at the same time, one may first draw $z_{ji}$ and then draw $b_{ji}$ given $z_{ji}$, or first draw $b_{ji}$ and then draw $z_{ji}$ given $b_{ji}$. %For example, one may first decide 

Using the Palm formula \citep{james2002poisson,james2009posterior, CarTehMur2013a}, similar to related derivation in \citet{NBP_CountMatrix}, we may further marginalize out $G$ from (\ref{eq:Like_1}), leading to
\begin{align}
&P(\{\bv_j,\xv_j,\zv_j,n_j\}_j\,|\,\gamma_0,c_0,\pv) \notag\\
&=\gamma_0^{K^+} e^{ -\ln\left(\frac{c_0- \sum_j \ln(1-p_j)}{c_0}\right)}\left\{\prod_j p_j^{n_j}\frac{\prod_v\prod_k\left(\prod_{t=1}^{\ell_{vjk}}\Gamma(n_{vjkt})\right)}{n_j!}\right\}\notag\\
&\,\,\,\,\,\times\left\{\prod_{k:\,\ell_{\cdotv \cdotv k}>0} \frac{\Gamma(\ell_{\cdotv \cdotv k})}{[c_0-\sum_j\ln(1-p_j)]^{\ell_{\cdotv \cdotv k}}} \frac{\Gamma(V\eta)}{\Gamma(\ell_{\cdotv\cdotv k}+V\eta)}\prod_{v=1}^V\frac{\Gamma(\ell_{v\cdotv k}+\eta)}{\Gamma(\eta)} \right\}
, \notag %\label{eq:Like_2}
\end{align}
with which we have
\begin{align} 
&P(z_{ji}=k,b_{ji}=t\,|\,x_{ji},\zv^{-ji},\bv^{-ji},\gamma_0,c_0)\notag\\
 &\propto
\begin{cases}\vspace{.15cm}
\displaystyle n_{x_{ji}jkt}^{-ji}
, & \emph{\mbox{if }} k\le (K^+)^{-ji}, t\le \ell_{x_{ji}jk}^{-ji} ; \\ 
%\frac{e_0+(K^{+})^{-ji}}{f_0+\psi(c+r_{\cdotv} ) - \psi(c)}\
\displaystyle \frac{\ell_{x_{ji}\cdotv k}^{-ji}+\eta}{\ell_{\cdotv\cdotv k}^{-ji}+V\eta} \frac{ \ell_{\cdotv \cdotv k}^{-ji}}{c_0-\sum_j\ln(1-p_j)}, & \emph{\mbox{if }} k\le (K^+)^{-ji}, t= \ell_{x_{ji}jk}^{-ji}+1 ;\\
\vspace{.15cm}
\displaystyle \frac{1}{V} \frac{ \gamma_0}{c_0-\sum_j\ln(1-p_j)}, & \emph{\mbox{if }} k= (K^+)^{-ji}+1, t=1. \end{cases} \notag
\end{align}
We use the above equation in the collapsed Gibbs sampler for GNBP-DCMLDA.

\section{Sample the Dirichlet smoothing parameter }\label{app:eta}

 For the hGNBP-NBFA, from (\ref{eq:fullyaugmented}), we have the likelihood for $\{\phiv_k\}$ as
 \beq
 \mathcal{L}(\{\phiv_k\})\propto \prod_k \mbox{Mult}( \ell_{1\cdotv k},\ldots, \ell_{V\cdotv k} ; \ell_{\cdotv\cdotv k}, \phiv_k) \label{eq:philike}
 \eeq
 Marginalizing out $\phiv_k$ from (\ref{eq:philike}), we have the likelihood for $\eta$ as
 \beq
 %P(\ell_{1\cdotv k},\ldots, \ell_{V\cdotv k}\,|\, \ell_{\cdotv\cdotv k},\eta)
 \mathcal{L}(\eta)\propto\prod_k\mbox{DirMult}( \ell_{1\cdotv k},\ldots, \ell_{V\cdotv k};\ell_{\cdotv\cdotv k},\eta,\ldots,\eta). \notag
 \eeq
Since the product of $ \mathcal{L}(\eta)$ and $\prod_k \mbox{Beta}(q_k;\ell_{\cdotv\cdotv k},\eta V)$ can be expressed as
 \beq
 %P(\ell_{1\cdotv k},\ldots, \ell_{V\cdotv k}\,|\, \ell_{\cdotv\cdotv k},\eta)
\mathcal{L}(\eta)\prod_k \mbox{Beta}(q_k;\ell_{\cdotv\cdotv k},\eta V) \propto
\prod_k \prod_v \mbox{NB}(\ell_{v\cdotv k}; \eta,q_k), \notag
%\prod_k \mbox{Beta}(q_k;\ell_{\cdotv\cdotv k},\eta V) \mbox{DirMult}(\ell_{\cdotv\cdotv k},\eta,\ldots,\eta)
 \eeq
 we can further apply the data augmentation technique for the NB distribution of \citet{NBP2012} to derive closed-form update equations for $\eta$ as
 %We provide the derivation in the Appendix. %, whose Gibbs sampling update equations are described in the Appendix. 
 %With data augmentation, we sample the topic Dirichlet smoothing parameter $\eta$ as
\begin{align}
 &(q_k\,|\,-)\sim\mbox{Beta}( \ell_{\cdotv \cdotv k}, V\eta),~~~(t_{vk}\,|\,-)\sim\mbox{CRT}(\ell_{v \cdotv k},\eta),\notag\\
 &(\eta\,|\,-)\sim\mbox{Gamma}\left(a_0 + \sum_{v=1}^V\sum_{k=1}^{K^+} t_{vk},\frac{1}{b_0-V\sum_{k=1}^{K^+}\ln(1-q_k)}\right) \label{eq:sampleeta}
 \end{align}

To sample $\eta$ for the GNBP-PFA, we simply replace $\ell_{\cdotv \cdotv k}$ and $\ell_{v \cdotv k}$ in (\ref{eq:sampleeta}) with $n_{\cdotv \cdotv k}$ and $n_{v \cdotv k}$, respectively. We note the inference of $\eta$ for the GNBP-PFA can be related to the inference of that %the topic Dirichlet smoothing parameter
 for LDA 
%and the hierarchical Dirichlet process LDA 
described in \citet{newman2009distributed}.

 \section{Comparisons of different sampling strategies}\label{app:sampler}
 %\vspace{-2mm}
We first diagnose the convergence of the regular blocked Gibbs sampler in Section \ref{sec:blockGibbs}, the collapsed Gibbs sampler in Section \ref{sec:collapseGibbs}, and the compound Poisson representation based blocked Gibbs sampler in Section \ref{sec:blockGibbs_1} for the hGNBP-NBFA (Dirichlet-multinomial mixed-membership model), %and the blocked Gibbs sampler for the hGNBP NBFA, 
via the trace plots of the inferred number of active factors $K^+$.
We set the Dirichlet smoothing parameter as $\eta=0.05$, and initialize the number of factors as $K=0$ for the collapsed Gibbs sampler and $K=10$ for both the regular and compound Poisson based blocked Gibbs samplers. We also consider initializing the number of factors as $K=500$ for all three samplers.

\begin{figure}[!b]
\begin{center}
\includegraphics[width=0.8\columnwidth]{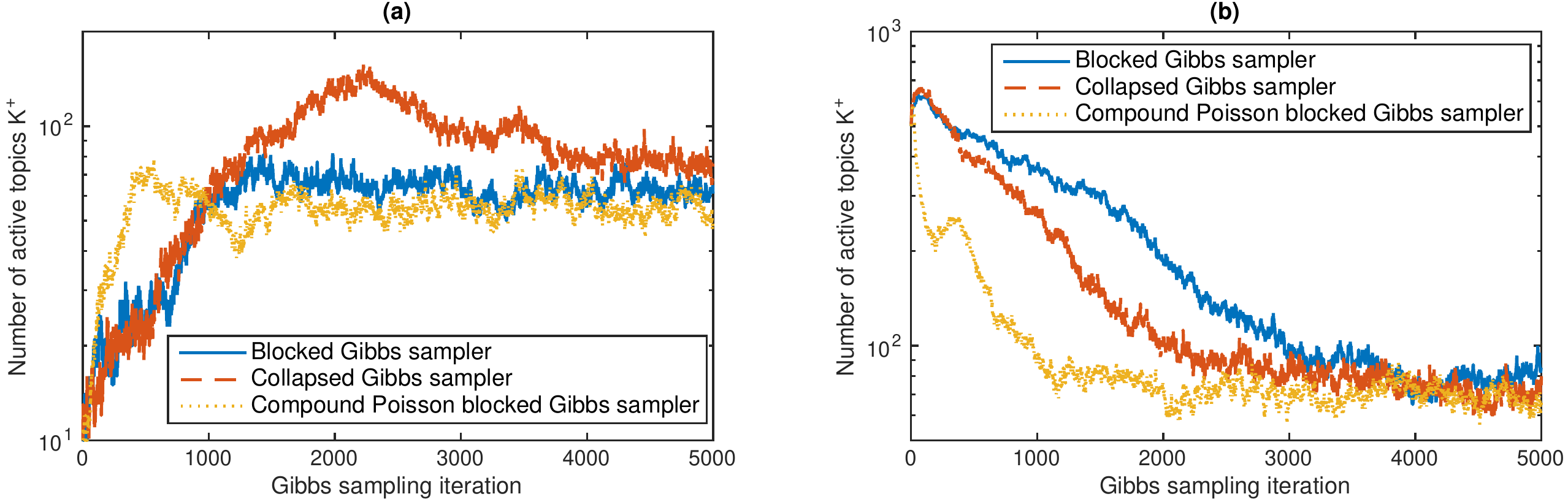}
\end{center}
\vspace{-6mm}
\caption{ \label{fig:PsyReview_Trace} \small
Comparison of three different Gibbs samplers for the hierarchical gamma-negative binomial process negative binomial factor analysis (hGNBP-NBFA) on the PsyReview dataset, with the number of factors initialized as (a) $K=0$ for the collapsed sampler and $K=10$ for both blocked samplers, and (b) $K=500$ for all three samplers. 
 In each plot, the blue, red, and yellow curves correspond to the active number of factors as a function of Gibbs sampling iteration for %inferred by 
 the regular blocked Gibbs sampler, the collapsed Gibbs sampler, and the compound Poisson representation based blocked Gibbs sampler, respectively. %as a function of Gibbs sampling iteration 
%for the (a) HDP-LDA and (b) BNBP topic model on JACM. 
%(c)-(d) and (e)-(f) are analogous plots to (a)-(c) for the PsyReview and NIPS12 corpora, respectively.
%%, (c) HDP-LDA on NIPS12, and (d) BNBP topic model on NIPS12. %In both (a) and (b), 
%From bottom to top in each plot, the red, blue, magenta, black, green, yellow, and cyan curves correspond to the results for $\eta=0.50$, $0.25$, $0.10$, $0.05$, $0.02$, $0.01$, and $0.005$, respectively. %It seems that both the HDP-LDA and BNBP topic model, initialized with $K_J=0$, converge after 1000 iterations. 
}\vspace{-0.5mm}
\end{figure}
%
%\begin{figure}[!tb]
%\begin{center}
%\includegraphics[width=0.8\columnwidth]{JACM_Trace.pdf}
%\end{center}
%\vspace{-5.6mm}
%\caption{ \label{fig:JACM_Trace} \small
%Analogous plots to Figures \ref{fig:PsyReview_Trace} (a) and (b) for the JACM dataset.
%}\vspace{-0.5mm}
%\end{figure}
%

As shown in Figure \ref{fig:PsyReview_Trace} for the PsyReview dataset, both the regular blocked Gibbs sampler and collapsed Gibbs sampler travel relatively slowly to the target distribution of the number of active factors $K^+$, especially when the number of factors is initialized to be large, %often reaching them in more than ?iterations, 
whereas the compound Poisson based blocked Gibbs sampler travels relatively quickly to the target distribution 
in both cases. 
%and is much less sensitive to the initialization on the number of factors. 
We have also made similar comparisons on %and Figure \ref{fig:JACM_Trace} for 
both the JACM and NIPS12 datasets, and the experiments on all three datasets consistently suggest that the compound Poisson representation based blocked Gibbs sampler converges the fastest in the number of inferred active factors. 

%among these three different sampling strategies and f

% no matter whether the number $K=10$ or $K=500$ are used in initialization. 

%, usually reaching them in less than ? iterations. 
%Moreover, Figures \ref{fig:Trace} shows that the chains of the regular Gibbs sampler and collapsed Gibbs sampler are taking in small steps and do not traverse their distributions quickly, whereas the chains of the compound Poisson blocked Gibbs sampler mixes very well locally and traverse their distributions relatively quickly. 
%
%With $K=0$ (or $K=10$) used in initialization, the three plots in the second row of Figures \ref{fig:Trace} also clearly show that the compound Poisson blocked Gibbs sampler converges faster and mixed better than the other two samplers. 
%
%the two plots in the second row show that the collapsed Gibbs sampler travels even slower; whereas the two plots in the third row show that the regular blocked Gibbs sampler for the hGNBP NBFA travels quickly to the target distributions, usually reaching them in less than ? iterations. Moreover, Figures \ref{fig:BNBP_K} shows that the chains of the Gibbs samplers for hGNBP Dirichlet-multinomial mixed-membership model are taking in small steps and do not traverse their distributions quickly, whereas the chains of the hGNBP NBFA mix very well locally and traverse their distributions relatively quickly. 

We observe similar differences in convergence between the blocked Gibbs sampler, the collapsed Gibbs sampler, and the compound Poisson representation based blocked Gibbs sampler for the GNBP-DCMLDA. This is as expected since GNBP-DCMLDA can be considered as a special case of the hGNBP-NBFA, and its compound Poisson representation also allows it to eliminate the need of sampling the factor indices $\{z_{ji}\}$.

% and collapsed Gibbs samplers for the GNBP DCMLDA and the blocked Gibbs sampler for the GNBP NBFA. 

For the GNBP-PFA, we find that its blocked Gibbs sampler, presented in \citet{NBP2012} and improved in this paper to allow adaptively truncating the number of active factors in each Gibbs sampling iteration, could converge slightly faster if the number of factors is initialized to be large. However, its collapsed Gibbs sampler shown in the Appendix often converges much faster in the number of inferred active factors %than the blocked Gibbs sampler with adaptive truncation does, especially 
if the number of factors is initialized with a small value. % is because we consider the collapsed Gibbs sampler presented in the Appendix, which can be related to the direct assignment Gibbs sampler of the hierarchical Dirichlet process proposed in \citep{HDP}. % converges faster and mixes better than the blocked Gibbs sampler when $K$ is initialized to be small. 

Therefore, to learn the factors in all the following experiments, we use the compound Poisson representation based blocked Gibbs sampler for both the hGNBP-NBFA and GNBP-NBFA, and use collapsed Gibbs sampling for the GNBP-PFA.

 %, as shown below. 

%\section{Perplexity for PsyReview and JACM}\label{app:plot}
\section{Additional table and plots}\label{app:plot}

\begin{table}[h]
\begin{scriptsize}
\caption{Datasets used in experiments.}\label{tab:data}
\vspace{-4mm}
\begin{center}
%\begin{tabular}{c R{6pc} R{6pc} R{6pc} R{6pc} }
\begin{tabular}{l | r r r r r r}
\toprule
 &JACM &PsyReview & NIPS12 & $atheism$ vs & $pc$ vs  & 20 news \\
  & & &  & $religion$ & $mac$ & groups \\
\midrule
Number of unique covariates $V$& 1,539 & 2,566 & 13,649 & 6,509 &4,737 & 33,420 \\
Number of samples& 536 & 1,281 & 1,740 & 856 &1,162 & 11,269 \\
Total number of covariate {indices} & 68,055 & 71,279 & 2,301,375 & 115,904 & 90,667 &1,424,713 %2,351,800 (1424713 for training)
 \\
Average sample length & 127 &56 & 1,323& 135 & 78 &126\\%209\\
\bottomrule
\end{tabular}
\end{center}
\end{scriptsize}
% \vspace{-10mm}
\end{table}

We show the results of the GNBP-NBFA, GNBP-DCMLDA, and hGNBP-NBFA on both the PsyReview and JACM datasets in the following figures. 

\begin{figure}[!h]
\begin{center}
\includegraphics[width=1\columnwidth]{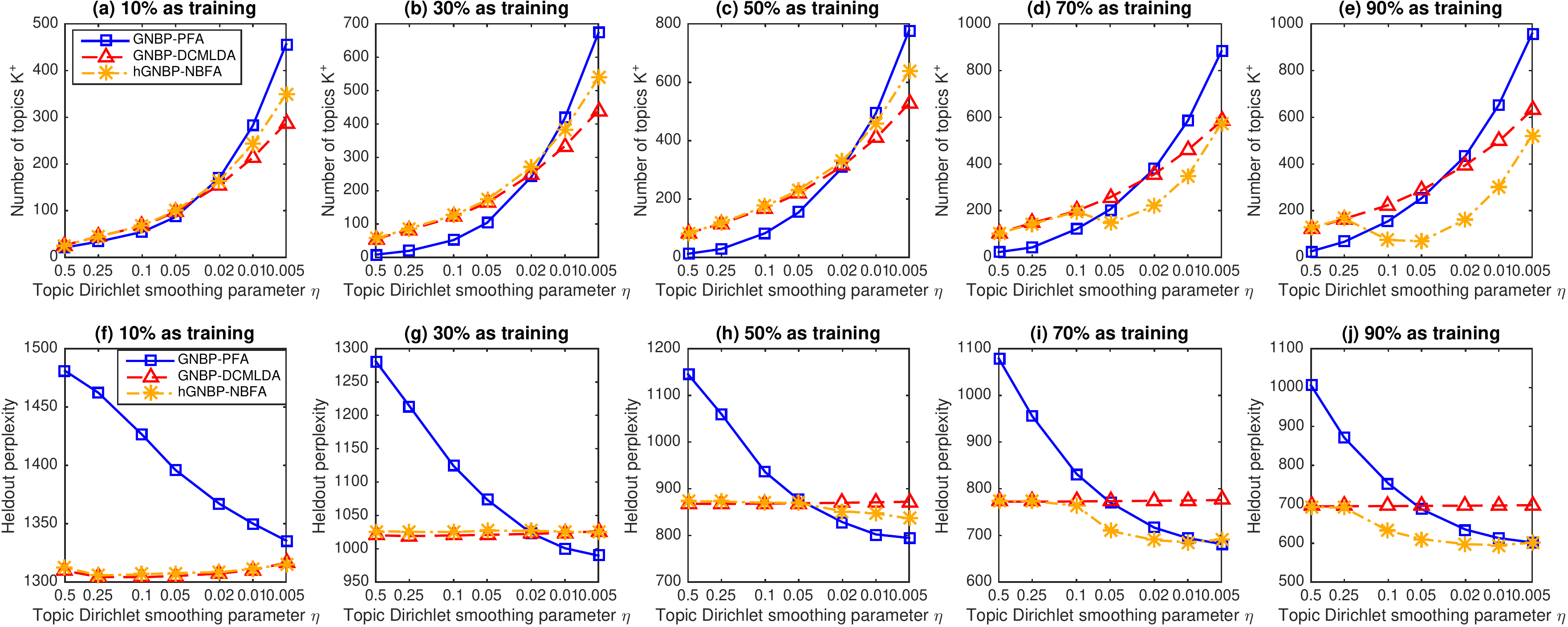}
\end{center}
\vspace{-6.0mm}
\caption{\small \label{fig:perplexity3_1}
Analogous plots to those in Figure \ref{fig:perplexity1_1} for the PsyReview dataset. 
%In both rows, the plots from left to right are obtained using $10\%$, $30\%$, $50\%$, $70\%$, and $90\%$ of the covariate {indices} for training, respectively. 
}
\vspace{5mm}
\begin{center}
\includegraphics[width=1\columnwidth]{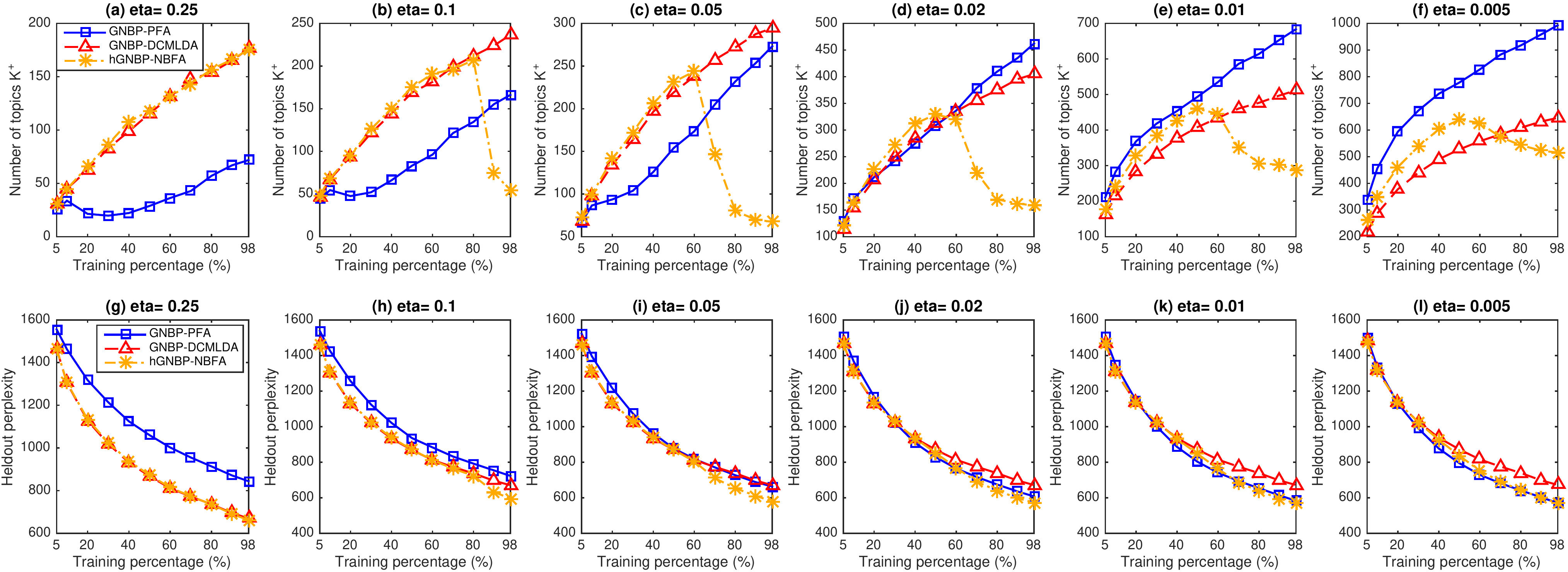}
\end{center}
\vspace{-6.0mm}
\caption{\small \label{fig:perplexity3_2}
Analogous plots to those in Figure \ref{fig:perplexity1_2} for the PsyReview dataset. 
}
\end{figure}

\begin{figure}[!h]
\begin{center}
\includegraphics[width=1\columnwidth]{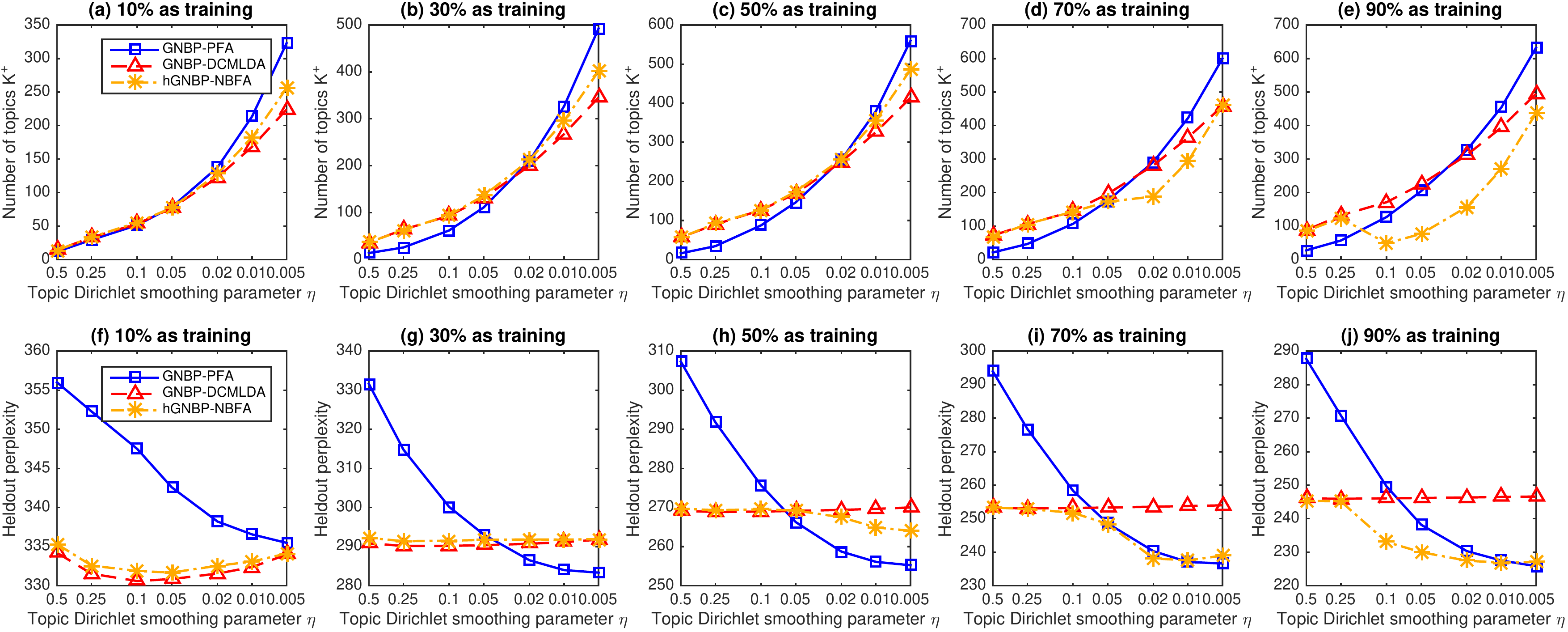}
\end{center}
\vspace{-6.0mm}
\caption{\small \label{fig:perplexity4_1}
Analogous plots to those in Figure \ref{fig:perplexity1_1} for the JACM dataset. 
%In both rows, the plots from left to right are obtained using $10\%$, $30\%$, $50\%$, $70\%$, and $90\%$ of the covariate {indices} for training, respectively. 
}
\vspace{5mm}
\begin{center}
\includegraphics[width=1\columnwidth]{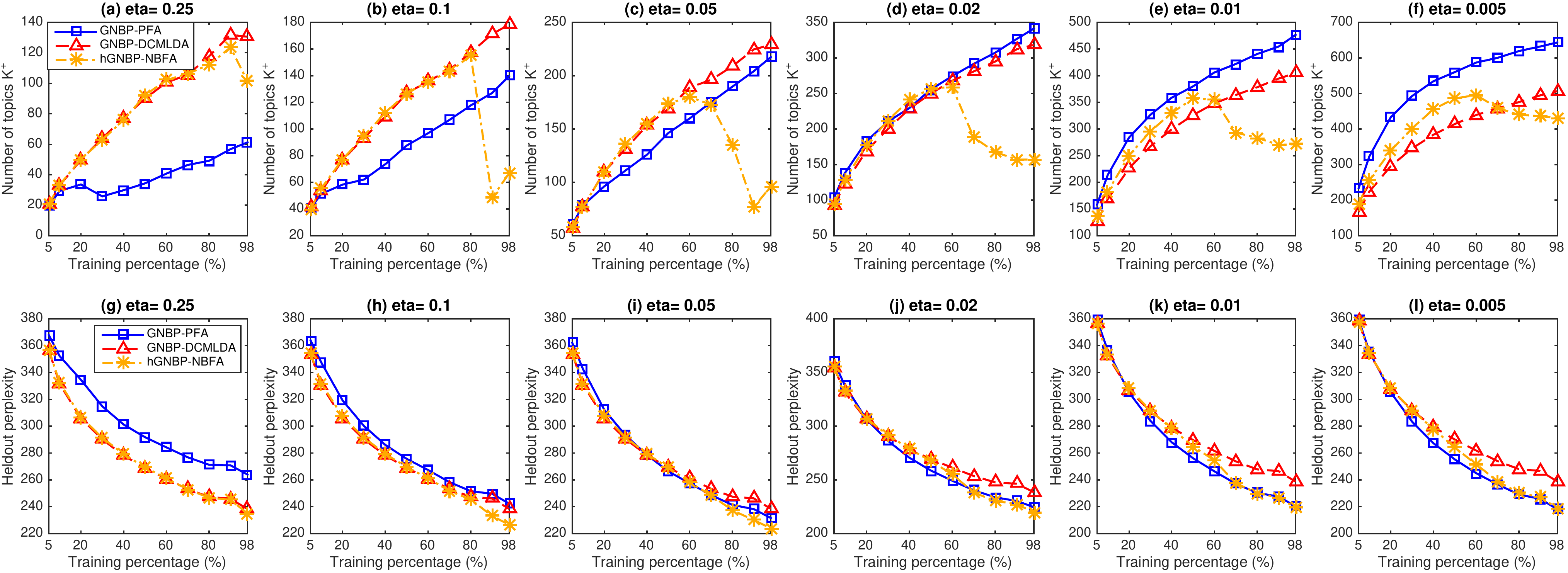}
\end{center}
\vspace{-6.0mm}
\caption{\small \label{fig:perplexity4_2}
Analogous plots to those in Figure \ref{fig:perplexity1_2} for the JACM dataset. 
}
\end{figure}

\end{document}